%
\documentclass[conference]{IEEEtran}
%
\usepackage{hyperref}
\usepackage[utf8]{inputenc}
\usepackage[utf8]{inputenc}
\usepackage{graphicx}
\usepackage{amsmath}
\usepackage{amsthm}
\usepackage{amssymb,stmaryrd,amsmath,amsfonts,rotating}
\usepackage{amsmath, amsthm, amssymb}
\usepackage{epstopdf}
\usepackage[mathscr]{eucal}  
\usepackage{float} 
\usepackage{multicol}
\usepackage{caption}
\usepackage{subcaption}
\usepackage{enumerate}
\usepackage{authblk}
\usepackage{multirow}
\usepackage{tikz}
\usetikzlibrary{decorations.pathreplacing}
\usepackage{multirow}

\usetikzlibrary{positioning}

\usepackage{tikz}
    \usetikzlibrary{shapes.geometric}   
\usepackage{rotating}

\usepackage{stmaryrd}
\usepackage{mathtools}
\usepackage{stfloats}
\usepackage{url}
\usepackage{amssymb}
\usepackage{mathabx}
\usepackage{algorithmicx}

\newcommand{\defn}{\stackrel{\Delta}{=}}

\newtheorem{proposition}{Proposition}

\newtheorem{theorem}{Theorem}
\newtheorem{lemma}{Lemma}

\newtheorem{Definition}{Definition}

\def\minus{\mathord{-}}
\def\plus{\mathord{+}}

\begin{document}
%


\title{Polar Codes with Higher-Order Memory}
\author{H\" useyin~Af\c ser}
\author{Hakan~Deli\c c}
\affil{Wireless Communications Laboratory, Department of Electrical and Electronics Engineering \\ Bo\u gazi\c ci University, Bebek 34342, Istanbul, Turkey \\
\{huseyin.afser,delic\}@boun.edu.tr}

\maketitle

\begin{abstract}
We introduce the design of a set of code sequences  $ \{ {\mathscr C}_{n}^{(m)} : n\geq 1, m \geq 1 \}$, 
with memory order $m$ and code-length $N=O(\phi^n)$, where $ \phi \in (1,2]$ is the largest real root of the polynomial equation $F(m,\rho)=\rho^m-\rho^{m-1}-1$ and $\phi$ is decreasing in $m$. $\{ {\mathscr C}_{n}^{(m)}\}$  is based on the channel polarization idea, where $ \{ {\mathscr C}_{n}^{(1)} \}$ coincides with the polar codes presented by Ar\i kan in \cite{Arikan} and can be encoded and decoded with complexity $O(N \log N)$. $ \{ {\mathscr C}_{n}^{(m)} \}$ achieves the symmetric capacity, $I(W)$, of an arbitrary binary-input, discrete-output memoryless channel, $W$, for any fixed $m$ and its encoding and decoding complexities decrease with growing $m$. We obtain an achievable bound on the probability of block-decoding error, $P_e$, of
$\{ {\mathscr C}_{n}^{(m)} \}$ and showed that $P_e = O (2^{-N^\beta} )$ is achievable for $\beta < \frac{\phi-1}{1+m(\phi-1)}$.

\end{abstract} 

\begin{keywords} 
Channel polarization, polar codes, capacity-achieving codes, method of types, successive cancellation decoding
\end{keywords}

\section{Introduction and Overview}

\let\thefootnote\relax\footnotetext{This work was supported by Bo\u gazi\c ci University Research Fund under Project
 11A02D10. H. Af\c{s}er was also supported by Aselsan Elektronik A.\c{S} }
Channel polarization \cite{Arikan} is a method to achieve the symmetric capacity, $I(W)$, of an arbitrary binary-input, discrete-output memoryless channel (B-DMC), $W$. By applying channel combining and splitting operations \cite{Arikan_2006}, one transforms
$N$ uses of $W$ into another set of synthesized binary-input channels. As $N$ increases, the symmetric capacities of the synthesized 
binary-input channels polarize as $I(W)$ fraction of them gets close to 1 and $1-I(W)$ fraction of them gets close to 0.  The resulting code sequences, called polar codes, have encoding and decoding complexities $O(N \log N)$, and their block error probabilities scale as $2^{-N^{\beta}}$ where $\beta < 1/2$ is the exponent of the code \cite{Arikan_rate}.

Let $W : {\cal X} \rightarrow {\cal Y}$ denote a B-DMC with binary-input $x \in {\cal X}=\{0,1\}$ and arbitrary discrete-output $y \in {\cal Y}$. 
Considering Ar\i kan's polar codes, let us write $W_n$ to denote the vector channel, $W_n : {\cal X}^N \rightarrow {\cal Y}^N$, $N=2^n$, $n \geq 1$, obtained at channel combining level $n$. The vector channel, $W_n$,  is obtained from $W_{n-1}$ in a recursive manner where one first injects an independent realization of $W_{n-1}$, denoted as $\hat{W}_{n-1}$, and then combines the input of $W_{n-1}$ and $\hat{W}_{n-1}$ to obtain $W_n$, where the recursion starts with $W_0=W$. The injection of $\hat{W}_{n-1}$, in a way, creates $N/2$ diversity paths for the $N/2$ inputs of $W_{n-1}$, and this allows polarization which one sees in the synthesized binary-input channels obtained by splitting $W_n$. Consequently, at each combining level 
the code-length doubles with respect to the previous step scaling as $N=2^n$.

With higher-order memory in channel polarization, let us write $N=N(n,m)$ to denote the code-length at channel combining level $n$ and memory parameter $m$, $ m\geq 1$, which we assume to be fixed. The vector channel, $W_n$, is obtained by combining the inputs of $W_{n-1}$ with $\hat{W}_{n-m}$,  where one chooses $W_0=W_{-1}=\ldots=W_{1-m}=W$ to initiate the recursion. The number of binary-inputs in $W_{n-1}$ and $\hat{W}_{n-m}$ are $N(n-1)$ and $N(n-m)$, respectively. In turn, with the controlled memory parameter, $m$, and at channel combining level $n$, one only injects $N(n-m)$  new diversity paths with $\hat{W}_{n-m}$, for the $N(n-1)$ inputs of $W_{n-1}$, to obtain $W_n$. Because $N(n-m)$ gets smaller compared to $N(n-1)$ as $m$ increases, it is possible to slow the speed at which one inject new channels to provide polarization. At first glance, it seems that increasing $m$ will decrease the polarization effect obtained after each combining and splitting stage, however it will also allow the code-length to increase less rapidly in $n$.  In order to see this consider the code-length obeying the recursion   
\begin{align}
N = N(n-1) + N(n-m), \quad n \geq 1, m \geq 1, \label{eqn:code_length_recursion}
\end{align}
with initial conditions
\begin{align}
N(0)  = N(-1)=\ldots=N(1-m)=1,   \quad m  \geq 1. \label{eqn:code_length_init}
\end{align}
As will be explained in the sequel, the code-length takes the form
\begin{align}
N  = O (\phi^n),  \quad n \geq 1 
\end{align}
where $\phi \in (1,2]$ is the largest real root of the $m$-th order polynomial equation
\begin{align}
F(m,\rho)=\rho^m-\rho^{m-1}-1,
\end{align}
and $\phi$ decreases with increasing $m$. Therefore, if we increase $m$, it will take more channel combining and splitting stages to reach a pre-defined code-length, where the ratio of injected diversity paths to existing paths in each combining stage will also decrease. The aim of this paper is to understand the effects of this trade-off on the polarization performance one can obtain at a fixed code-length $N$.

The original construction of polar codes by Ar\i kan is closely related to the recursive construction of Reed-Muller codes based on the $2 \times 2$ kernel $\textbf{F}_2= \left[\begin{smallmatrix} 1 & 0 \\ 1 & 1\end{smallmatrix} \right]$. For these codes the encoding matrix, $\textbf{G}_N$, is of the form  $\textbf{G}_N =\textbf{F}_2^{\otimes n}$, where $\otimes$ denotes the Kronecker power, suitably defined in \cite{Arikan}. In \cite{Korada_exponent} Korada et al. generalize the channel polarization idea where $\ell \geq 2$ independent uses of $W_{n-1}$ are arbitrarily combined to obtain $W_{n}$ and code-length scales as $N=\ell^n$. Although the channel combining mechanism is generalized to combining arbitrary numbers of $W_{n-1}$ to obtain $W_n$, this setup has also first order memory in the channel combining. The authors express the combining mechanism by an $\ell \times \ell$ polarization kernel $\textbf{K}_\ell$. With an arbitrary $\textbf{K}_\ell$, the encoding matrix takes the form $\textbf{G}_N=\textbf{K}_\ell^{\otimes n}$. The asymptotic polarization performance is characterized by the distance properties of the rows of $\textbf{K}_\ell$. The encoding and decoding complexities of these polar codes increases with $l$ scaling as $O(l N \log N)$ and $O( \frac{2^l}{l} N \log N)$, respectively. Our work differs from \cite{Korada_exponent} in the sense that by introducing higher-order memory we modify the channel combining process. Moreover the encoding matrix of polar codes with memory $m>1$ can not be obtained by applying Kronecker power to an arbitrary polarization kernel. As a result, one needs new mathematical tools to investigate $\beta$. 

The contributions of this paper are as follows: \textit{i)} We present a novel polar code family, $\{ {\mathscr C}_{n}^{(m)}:  n\geq , m\geq 1\}$, with code-length $N=O(\phi^n)$, $\phi \in (1,2]$, and arbitrary but fixed memory parameter $m$. We show that $\{ {\mathscr C}_{n}^{(m)} \}$ achieves the symmetric capacity of arbitrary BDMCs for any choice of $m$ which complements Ar\i kan's conjecture that channel polarization is in fact a general phenomenon. \textit{ii)} By developing a new mathematical framework, we obtain an asymptotic bound on the achievable exponent, $\beta$, of $\{ {\mathscr C}_{n}^{(m)} \}$. \textit{iii)} We show that the encoding and decoding complexities of  $\{ {\mathscr C}_{n}^{(m)} \}$ decrease with increasing $m$. $\{ {\mathscr C}_{n}^{(m)} \}$ is the first example of a polar code family that has  lower complexity compared to the original codes presented by Ar\i kan. 


The outline of the paper is a as follows. Section~\ref{sect:2} provides the necessary material for the analysis in the sequel. In Section~\ref{sect:3} we explain the design, encoding and the decoding of $\{ {\mathscr C}_{n}^{(m)}  \}$. In Section~\ref{sect:4} we develop a probabilistic framework to investigate $\{ {\mathscr C}_{n}^{(m)} \}$. After showing that $\{ {\mathscr C}_{n}^{(m)}  \}$ achieves the symmetric capacity of arbitrary B-DMCs we obtain an achievable bound on its block-decoding error probability. In Section~\ref{sect:5} we analyze impact of higher-order memory on the encoding and decoding complexities of $\{ {\mathscr C}_{n}^{(m)}  \}$. Section~\ref{sect:6} concludes the paper and provides some future research directions.  

\textbf{Notation:}  We use uppercase letter $A,B$ for random variables and lower cases $a,b$ for their realizations taking values
from sets $\cal A$, $\cal B$, where the sets have sizes $|\cal A|$ and $|\cal B|$ respectively. $\Pr(a)$ denotes the probability of the event $A=a$.  We write $\textbf{a}_n=(a_1,a_2,\ldots,a_n)$ to denote a vector and $(\textbf{a}_n,\textbf{b}_n)$ to denote the concatenation of $\textbf{a}_n$
and $\textbf{b}_n$. We use standard Landau notation $o(n), O(N)$ to denote the limiting values of functions.  \textbf{Note:} Proofs, unless stated otherwise, are provided in the Appendix.

\section{Preliminaries}\label{sect:2}
Let $W(y|x)$, $x \in {\cal X}$, $y \in {\cal Y}$ denote the transition probabilities of $W$. Throughout the paper we assume that $x$ is uniformly distributed in ${\cal X}$, and use base-2 logarithm. The symmetric capacity, $I(W)$, of $W$ is 
\begin{align}
 I(W) \defn \sum_{y \in \cal{Y}} \sum_{x \in \cal{X}} \frac{1}{2} W(y|x) \log \frac{W(y|x)}{\frac{1}{2}W(y|0) + \frac{1}{2}W(y|1) }. \label{eqn:I_def}
\end{align}
The Bhattacharyya parameter, $Z(W)$, of $W$ provides an upper bound on the probability of error for maximum likelihood (ML) decoding over $W$ and is defined as
\begin{align}
 Z(W) \defn \sum_{y \in \cal{Y}} \sqrt{W(y|0)W(y|1)}.\label{eqn:Z_def}
\end{align}
The symmetric cut-off rate, $J(W)$, of $W$ is \cite{Arikan} 
\begin{align}
 J(W) \defn \log \frac{2}{1+Z(W)}. \label{eqn:R_0_def}
\end{align}
As Ar\i kan shows in \cite[Prop. 1]{Arikan} $Z(W)=1$ implies $I(W)=0$ and $Z(W)=0$ implies $I(W)=1$. By using this fact and from \eqref{eqn:R_0_def} we see that if $J(W)=0$ then $I(W)=0$ holds and $J(W)=1$ indicates $I(W)=1$.

Let $W'$ and $W''$ be two B-DMCs with inputs $x_1 ,x_2\in {\cal X}$ and outputs $y_1 \in {\cal Y}_1$ and $y_2 \in {\cal Y}_2$, respectively. Channel polarization is based on a single-step channel transformation where one first combines the inputs of $W'$ and $W''$ to obtain a vector channel
\begin{align}
W(y_1,y_2|x_1,x_2)=W'(y_1|x_1 \oplus x_2) W''(y_2|x_2).
\end{align}
Next, by choosing a channel ordering, one splits the vector channel to obtain two new binary-input channels, $W^-:  {\cal X} \rightarrow {\cal Y}_1 \times {\cal Y}_2$
and $W^+ :{\cal X} \rightarrow  {\cal X} \times {\cal Y}_1 \times {\cal Y}_2$, with transition probabilities
\begin{align}
W^-(y_1,y_2|x_1) &=\sum_{x_2} \frac{1}{2}W'(y_1|x_1 \oplus x_2) W''(y_2|x_2), \label{eqn:W_minus}  \\
W^+(y_1,y_2,x_1|x_2) &=\frac{1}{2}W'(y_1|x_1 \oplus x_2) W''(y_2|x_2), \label{eqn:W_plus}
\end{align}
We use the following short-hand notations for the transforms in \eqref{eqn:W_minus} and \eqref{eqn:W_plus}, respectively. 
\begin{align}
W^-=W' \boxminus W'', \\ 
W^+=W' \boxplus W''.
\end{align}
The polarization transforms preserve the symmetric capacity as
\begin{align}
I(W^-) + I(W^+) = I(W') + I(W''), \label{eqn:I_sum}
\end{align}
and they help polarization by creating disparities in $I(W^+)$ and $I(W^-)$ such that
\begin{gather} 
I(W^+) \geq \max \{ I(W') ,I(W'') \}, \label{eqn:I_plus} \\
I(W^-)  \leq   \min \{ I(W') ,I(W'') \}, \label{eqn:I_minus} 
\end{gather}
where the above inequalities are strict as long as $I(W') \in (0,1)$ and $I(W'') \in (0,1)$. 
This polarization effect quantitatively observed in the  Bhattacharyya parameters as they take the form
\begin{gather} 
Z(W^+) = Z(W' )Z(W''),\label{eqn:Z_plus} \\
Z(W^-) \leq Z(W')+Z(W'')-Z(W')Z(W''), \label{eqn:Z_minus} 
\end{gather}
where the equality in \eqref{eqn:Z_minus} is achieved if $Z(W')  \in \{0,1\}$ or $Z(W'')  \in \{0,1\}$, or if $W'$ and $W''$ are binary erasure channels (BECs).

Equations \eqref{eqn:I_sum}-\eqref{eqn:Z_minus} are proved in \cite{Arikan} when $W'$ is identical to $W''$. Their generalizations for the case  $W'$ and $W''$ are different channels are straightforward and omitted. The proposition below  will be crucial in the sequel.
\begin{proposition}\label{prop:R_0}
\begin{gather}
\nonumber
J(W^-)+J(W^+) \geq J(W') + J(W''),
\end{gather}
where equality is achieved only if $J(W') \in \{0,1\}$ or $J(W'') \in \{0,1\}$.
\end{proposition}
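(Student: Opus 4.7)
The plan is to reduce the inequality to a simple algebraic fact about products by expressing everything in terms of the Bhattacharyya parameters. Set $a := Z(W')$ and $b := Z(W'')$, both lying in $[0,1]$. From \eqref{eqn:R_0_def}, $J(W) = \log \frac{2}{1+Z(W)}$, so $J$ is a strictly decreasing function of $Z$. Hence it suffices to upper-bound $Z(W^-)$ and use the exact formula for $Z(W^+)$ from \eqref{eqn:Z_plus}--\eqref{eqn:Z_minus}, namely $Z(W^+) = ab$ and $Z(W^-) \le a+b-ab$.

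Applying the definition of $J$ to $W^+$ and $W^-$ and using these bounds, the desired inequality
\[
J(W^-) + J(W^+) \ge J(W') + J(W'')
\]
is equivalent to
\[
\log \frac{4}{(1+ab)(1+a+b-ab)} \ge \log \frac{4}{(1+a)(1+b)},
\]
i.e., after dropping the logarithm and clearing denominators, to
\[
(1+a)(1+b) \ge (1+ab)(1+a+b-ab).
\]
Expanding both sides gives $1+a+b+ab$ on the left and $1+a+b+ab(a+b-ab)$ on the right, so the inequality collapses to
\[
ab - ab(a+b-ab) = ab(1-a)(1-b) \ge 0,
\]
which is immediate for $a,b \in [0,1]$.

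For the equality condition, I would argue in two stages. First, the algebraic inequality $ab(1-a)(1-b) \ge 0$ is strict unless $a \in \{0,1\}$ or $b \in \{0,1\}$. Second, even when the bound \eqref{eqn:Z_minus} is tight (e.g., BECs), strictness in the final step already forces strict inequality in the statement whenever $a,b \in (0,1)$. Translating back via the monotone relation between $J$ and $Z$ (with $Z(W)=0 \Leftrightarrow J(W)=1$ and $Z(W)=1 \Leftrightarrow J(W)=0$) yields the stated equality condition $J(W') \in \{0,1\}$ or $J(W'') \in \{0,1\}$. There is no real obstacle here; the only care needed is in tracking the direction of the inequality when $Z(W^-)$ is replaced by its upper bound, since $J$ is decreasing in $Z$ and thus the upper bound on $Z(W^-)$ produces a lower bound on $J(W^-)$, which is exactly what is required.
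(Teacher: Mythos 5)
Your proof is correct and follows essentially the same route as the paper: express $J$ via $Z$, use $Z(W^+)=Z(W')Z(W'')$ and the upper bound on $Z(W^-)$, and reduce the claim to the algebraic inequality $ab(1-a)(1-b)\ge 0$ with $a=Z(W')$, $b=Z(W'')$ (the paper phrases the same computation via $w(W',W'')=a+b-ab\le 1$ rather than expanding directly). Your handling of the equality case is also the paper's: even when the $Z^-$ bound is tight (e.g., BECs), the second inequality is strict for $a,b\in(0,1)$, which forces the stated necessary condition.
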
  
The above proposition indicates that one can obtain coding gain by applying channel combining and splitting operations as long as the symmetric cut-off rate of $W'$ and $W''$ is in $(0,1)$, where the coding gain manifests itself as an increase in the sum cut-off rate of channels $W^-$ and $W^-$ compared to $W'$ and $W^+$. In this paper we use the parameters $J(W)$  and $I(W)$ together to show that $\{ {\mathscr C}_{n}^{(m)}\}$ achieves $I(W)$ of an arbitrary $W$, whereas the parameter $Z(W)$ will be used to characterize polarization performance of $\{ {\mathscr C}_{n}^{(m)}\}$.

\section{Polarization with Higher-Order Memory}\label{sect:3}


We develop a method to design a family of code sequences $\{ {\mathscr C}_{n}^{(m)} ; n\geq 1, m \geq 1\}$ with code-length $N=N(n,m)=O(\phi^n)$, $ \phi \in (1,2]$, and fixed memory order $m$. $\{ {\mathscr C}_{n}^{(m)} \}$ is based on the channel polarization idea of Ar\i kan in \cite{Arikan}. This section is devoted to explaining the design, encoding and decoding of $\{ {\mathscr C}_{n}^{(m)} \}$, while preparing some grounds for investigating its characteristics in the following sections.

\subsection{Channel Combining}
Consider an arbitrary B-DMC, $W$, where its $N$ independent uses take the form $W(\textbf{y}_N|\textbf{x}_N)=\prod_{i=1}^{N} W(y_i|x_i)$, $\textbf{x}_N \in {\cal X}^{N}$, $\textbf{y}_N \in {\cal Y}^{N}$.
Let $\textbf{u}_{N} \in {\cal X}^{N}$ be the binary information vector that needs to be transmitted over $N$ uses of $W$. Channel combining
phase creates a vector channel $W_n : {\cal X}^{N} \rightarrow {\cal Y}^{N}$ of the form
\begin{align*}
W_n(\textbf{y}_N| \textbf{u}_N) = \prod_{i=1}^{N}W(y_i|x_i),
\end{align*}
where $\textbf{x}_N= \textbf{u}_N\textbf{G}_N$. $\textbf{G}_N$ is an $N \times N$ encoding matrix where encoding takes place in GF($2$).

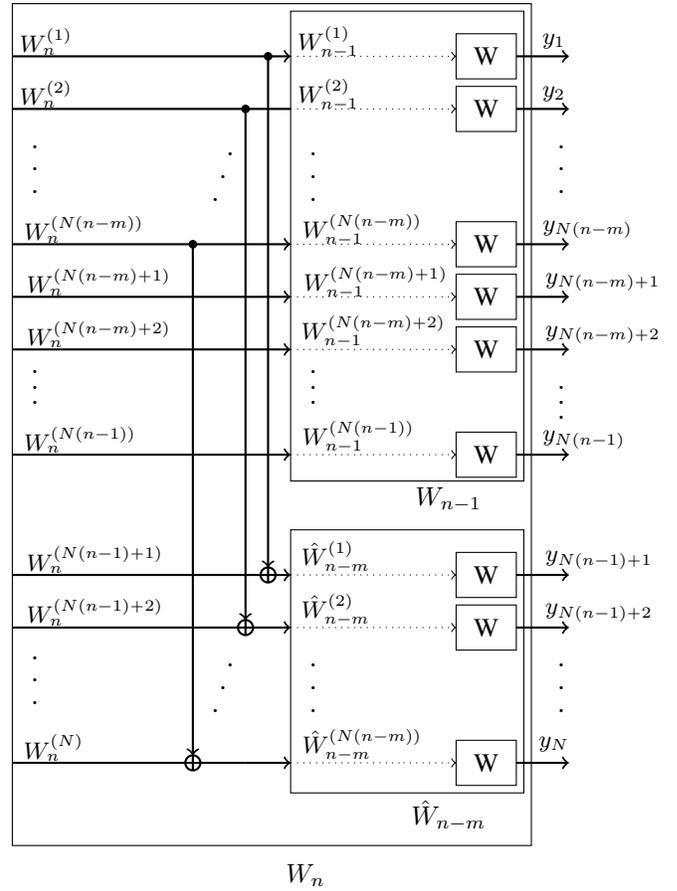
\begin{figure}[th]
\begin{center}

\begin{tikzpicture}[scale=1]


\draw [] (-1.1,0.1) rectangle (2,-6.15);
\node [] at (1,-6.4) {$W_{n-1}$};

\draw [thick] [->] (-4.8,-0.5) -- (-1.1,-0.5);

\draw [dotted] [-] (-1.1,-0.5) -- (0.4,-0.5);
\draw [] (1.1,-0.2) rectangle (1.9,-0.8);
\node [] at (1.5,-0.5) {W};
\draw [dotted] [->] (0.4,-0.5) -- (1.1,-0.5);

\draw [thick] [-] (-4.8,-1.2) -- (-1.1,-1.2);

\draw [dotted] [-] (-1.1,-1.2) -- (0.4,-1.2);
\draw [] (1.1,-0.9) rectangle (1.9,-1.5);
\node [] at (1.5,-1.2) {W};
\draw [dotted] [->] (0.4,-1.2) -- (1.1,-1.2);

\draw [thick] [->] (-4.8,-3) -- (-1.1,-3);
\draw [dotted] [-] (-1.1,-3) -- (0.4,-3);
\draw [] (1.1,-2.7) rectangle (1.9,-3.3);
\node [] at (1.5,-3) {W};
\draw [dotted] [->] (0.4,-3) -- (1.1,-3);

\draw [thick] [->] (-4.8,-3.7) -- (-1.1,-3.7);
\draw [dotted] [-] (-1.1,-3.7) -- (0.4,-3.7);
\draw [] (1.1,-3.4) rectangle (1.9,-4);
\node [] at (1.5,-3.7) {W};
\draw [dotted] [->] (0.4,-3.7) -- (1.1,-3.7);

\draw [thick] [->] (-4.8,-4.4) -- (-1.1,-4.4);
\draw [dotted] [-] (-1.1,-4.4) -- (0.4,-4.4);
\draw [] (1.1,-4.1) rectangle (1.9,-4.7);
\node [] at (1.5,-4.4) {W};
\draw [dotted] [->] (0.4,-4.4) -- (1.1,-4.4);

\draw [thick] [->] (-4.8,-5.8) -- (-1.1,-5.8);
\draw [dotted] [-] (-1.1,-5.8) -- (0.4,-5.8);
\draw [] (1.1,-5.5) rectangle (1.9,-6.1);
\node [] at (1.5,-5.8) {W};
\draw [dotted] [->] (0.4,-5.8) -- (1.1,-5.8);

\draw [] (-1.1,-6.8) rectangle (2,-10.3);
\node [] at (1,-10.6) {$\hat{W}_{n-m}$};

\draw[thick](-1.4,-7.4) circle [radius=0.1];
\draw [thick] [-] (-1.4,-7.3) -- (-1.4,-7.5);
\draw [thick] [->] (-4.8,-7.4) -- (-1.1,-7.4);

\draw[thick](-1.7,-8.1) circle [radius=0.1];
\draw [thick] [-] (-1.7,-8.2) -- (-1.7,-8.0);
\draw [thick] [->] (-1.2,-8.1) -- (-1.1,-8.1);
\draw [thick] [-] (-4.8,-8.1) -- (-1.2,-8.1);

\draw[thick](-2.4,-9.9) circle [radius=0.1];
\draw [thick] [-] (-2.4,-9.8) -- (-2.4,-10);
\draw [thick] [->] (-1.7,-9.9) -- (-1.1,-9.9);

\draw [thick] [-] (-4.8,-9.9) -- (-1.7,-9.9);

\node [] at (-1.9,-1.8) {\large{$.$}};
\node [] at (-2,-2.1) {\large{$.$}};
\node [] at (-2.1,-2.4) {\large{$.$}};

\node [] at (-4.5,-1.7) {\large{$.$}};
\node [] at (-4.5,-2) {\large{$.$}};
\node [] at (-4.5,-2.3) {\large{$.$}};

\node [] at ( -0.8,-1.8) {\large{$.$}};
\node [] at ( -0.8,-2.1) {\large{$.$}};
\node [] at ( -0.8,-2.4) {\large{$.$}};

\node [] at ( 2.5,-1.7) {\large{$.$}};
\node [] at ( 2.5,-2) {\large{$.$}};
\node [] at ( 2.5,-2.3) {\large{$.$}};


\node [] at (-4.5,-4.7) {\large{$.$}};
\node [] at (-4.5,-4.9) {\large{$.$}};
\node [] at (-4.5,-5.1) {\large{$.$}};

\node [] at (-0.8,-4.7) {\large{$.$}};
\node [] at (-0.8,-4.9) {\large{$.$}};
\node [] at (-0.8,-5.1) {\large{$.$}};

\node [] at (2.5,-4.9) {\large{$.$}};
\node [] at (2.5,-5.1) {\large{$.$}};
\node [] at (2.5,-5.3) {\large{$.$}};


\node [] at (-4.5,-8.5) {\large{$.$}};
\node [] at (-4.5,-8.8) {\large{$.$}};
\node [] at (-4.5,-9.1) {\large{$.$}};

\node [] at (-1.9,-8.6) {\large{$.$}};
\node [] at (-2,-8.9) {\large{$.$}};
\node [] at (-2.1,-9.2) {\large{$.$}};

\node [] at (-0.8,-8.6) {\large{$.$}};
\node [] at (-0.8,-8.9) {\large{$.$}};
\node [] at (-0.8,-9.2) {\large{$.$}};

\node [] at (2.5,-8.6) {\large{$.$}};
\node [] at (2.5,-8.9) {\large{$.$}};
\node [] at (2.5,-9.2) {\large{$.$}};

\draw [thick] [->] (-1.4,-0.5) -- (-1.4,-7.3);
\draw [thick] [->] (-1.7,-1.2) -- (-1.7,-8);
\draw [thick] [->] (-2.4,-3) -- (-2.4,-9.8);

\draw[fill](-1.4,-0.5) circle [radius=0.05];
\draw[fill](-1.7,-1.2) circle [radius=0.05];
\draw[fill](-2.4,-3) circle [radius=0.05];

\draw [thick] [->] (1.9,-0.5) -- (2.6,-0.5);
\draw [thick] [->] (1.9,-1.2) -- (2.6,-1.2);
\draw [thick] [->] (1.9,-3) -- (2.6,-3);
\draw [thick] [->] (1.9,-3.7) -- (2.6,-3.7);
\draw [thick] [->] (1.9,-4.4) -- (2.6,-4.4);
\draw [thick] [->] (1.9,-5.8) -- (2.6,-5.8);

\draw [thick] [->] (1.9,-7.4) -- (2.6,-7.4);
\draw [dotted] [-] (-1.1,-7.4) -- (0.4,-7.4);
\draw [] (1.1,-7.1) rectangle (1.9,-7.7);
\node [] at (1.5,-7.4) {W};
\draw [dotted] [->] (0.4,-7.4) -- (1.1,-7.4);

\draw [thick] [->] (1.9,-8.1) -- (2.6,-8.1);
\draw [dotted] [-] (-1.1,-8.1) -- (0.4,-8.1);
\draw [] (1.1,-7.8) rectangle (1.9,-8.4);
\node [] at (1.5,-8.1) {W};
\draw [dotted] [->] (0.4,-8.1) -- (1.1,-8.1);

\draw [thick] [->] (1.9,-9.9) -- (2.6,-9.9);
\draw [dotted] [-] (-1.1,-9.9) -- (0.4,-9.9);
\draw [] (1.1,-9.6) rectangle (1.9,-10.2);
\node [] at (1.5,-9.9) {W};
\draw [dotted] [->] (0.4,-9.9) -- (1.1,-9.9);

%

\node [above] at (-4.35,-0.6) {\small{$W_{n}^{(1)}$}};
\node [above] at (-4.35,-1.3) {\small{$W_{n}^{(2)}$}};

\node [above] at (-3.85,-3.1) {\small{$W_{n}^{(N(n-m))}$}};

\node [above] at (-3.65,-3.8) {\small{$W_{n}^{(N(n-m)+1)}$}};
\node [above] at (-3.65,-4.5) {\small{$W_{n}^{(N(n-m)+2)}$}};
\node [above] at (-3.9,-5.9) {\small{$W_{n}^{(N(n-1))}$}};

\node [above] at (-3.7,-7.5) {\small{$W_{n}^{(N(n-1)+1)}$}};
\node [above] at (-3.7,-8.2) {\small{$W_{n}^{(N(n-1)+2)}$}};
\node [above] at (-4.25,-10) {\small{$W_{n}^{(N)}$}};

\node [above] at (-0.6,-0.65) {\small{$W_{{n-1}}^{(1)}$}};
\node [above] at (-0.6,-1.35) {\small{$W_{{n-1}}^{(2)}$}};

\node [above] at (-0.15,-3.15) {\small{$W_{{n-1}}^{(N(n-m))}$}};
\node [above] at (0.03,-3.85) {\small{$W_{{n-1}}^{(N(n-m)+1)}$}};
\node [above] at (0,-4.5){\small{$W_{{n-1}}^{(N(n-m)+2)}$}};

\node [above] at (-0.2,-5.9) {\small{$W_{{n-1}}^{(N(n-1))}$}};

\node [above] at (-0.5,-7.5) {\small{$\hat{W}_{{n-m}}^{(1)}$}};
\node [above] at (-0.5,-8.2) {\small{$\hat{W}_{{n-m}}^{(2)}$}};

\node [above] at (-0.15,-10) {\small{$\hat{W}_{n-m}^{(N(n-m))}$}};

\node [] at (2.4,-0.3) { {\small{$y_1$}}};
\node [] at (2.4,-1) { {\small{$y_2$}}};
\node [] at (2.85,-2.8) { {\small{$y_{N(n-m)}$}}};

\node [] at (3.05,-3.5) { {\small{$y_{N(n-m)+1}$}}};
\node [] at (3.05,-4.2) { {\small{$y_{N(n-m)+2}$}}};
\node [] at (2.8,-5.6) { {\small{$y_{N(n-1)}$}}};

\node [] at (3,-7.2) { {\small{$y_{N(n-1)+1}$}}};
\node [] at (3,-7.9) { {\small{$y_{N(n-1)+2}$}}};
\node [] at (2.4,-9.65) { {\small{$y_{N}$}}};


\draw [] (-4.8,0.2) rectangle (2.1,-11);
\node [] at (-0.9,-11.4) {$W_n$};

\end{tikzpicture}
\end{center}
\caption{
Recursive construction of  the vector channel $W_n$ from $W_{n-1}$ and  $\hat{W}_{n-m}$, where $W_n^{(i)}$, $i \in {\mathbb N}_n$, denotes the binary-input channels in $W_n$. The arrows on the left show the directions of flow for the binary-inputs of $W_n^{(i)}$ and $\oplus$ is the XOR operation. The arrows on the right show the outputs of successive uses of $W$. The XOR operations that take place on the dotted arrows within  $W_{n-1}$ and $\hat{W}_{n-1}$ are not shown as they obey the same recursion.
}
\label{fig:S_channel_combining}
\end{figure}

Let ${\mathbb N}_{n}= \{1,2,\ldots,N\}$, $N=O(\phi^n)$, denote the set of the indices at the channel combining level $n$. There are $N$ binary-input channels in $W_n$ to transmit information. We index those  channels as $W_n^{(i)}$, $ i \in \mathbb{N}_{n}$, and demonstrate the channel combining operations in Fig~\ref{fig:S_channel_combining}.  Inspecting this figure observe that we index the topmost binary-input channel of $W_n$ as $W_n^{(1)}$ and index $i$ of $W_n^{(i)}$ increases as one move downwards. The vector channel $W_n$ is obtained by combining $W_{n-1}$ with $\hat{W}_{n-m}$. To accomplish this combining we apply XOR operations on the binary-inputs of $W_n$ and transmit the resultant bits through the inputs of $W_{n-1}$ and  $\hat{W}_{n-m}$. By continuing the same recursion within $W_{n-1}$ and $\hat{W}_{n-m}$, the encoded bits are transmitted through independent uses of $W$ channels because we start the combining recursion by choosing $W_0=W_{-1}=\ldots=W_{1-m}=W$. If we use  the binary-input channels $W_n^{(1)},W_n^{(2)},\ldots,W_n^{(N)}$ to transmit the symbols $u_1,u_2,\ldots,u_N$, respectively, the encoding matrix $\textbf{G}_N$ can be expressed as
\begin{align}\label{eqn:encoding_matrix}
\textbf{G}_N=\left[
\begin{array}{cc|c}
	\multicolumn{2}{c|}{\multirow{2}{*}{$ \textbf{G}_{N(n-1)} $}} &   \textbf{G}_{N(n-m)}\\ 
	\multicolumn{2}{c|}{} & \textbf{0}_2\\ \hline
	\multicolumn{2}{c|}{\textbf{0}_1} & \textbf{G}_{N(n-m)}\\
\end{array} \right], \quad n \geq 1
\end{align}
where $\textbf{G}_{N(0)}=\textbf{G}_{N(-1)}=\ldots =\textbf{G}_{N(1-m)}=[1]$, and $\textbf{0}_1$ and $\textbf{0}_2$ are $N(n-m) \times N(n-1)$ and $(N(n-1)-N(n-m)) \times N(n-m)$ all zero matrices, respectively. Observe that when $m=1$, $\textbf{0}_2$ matrix vanishes and $\textbf{G}_N$ can be represented as $\textbf{G}_n = ({\textbf{F}_2^\intercal})^{\otimes n}$, where  $\textbf{F}_2=\left[\begin{smallmatrix} 1 & 0 \\ 1 & 1\end{smallmatrix} \right]$ is the Kernel used by Ar\i kan in \cite{Arikan}. However, when $m>1$, $\textbf{G}_N$ can not be represented via Kronecker power.

\subsection{Channel Ordering}
After performing channel combining operation we have to define an order to split the vector $W_n: {\cal X}^N \rightarrow {\cal Y}^N$ and obtain $N$
binary-input channels. This ordering is carried out with the help of a permutation  $\pi_n : {\mathbb N}_n \rightarrow {\mathbb N}_n$. The $W_n^{(i)}$ channels in $W_n$ are split in increasing $\pi_n(i)$ values (from $1$ to $N$) so that each $W_n^{(i)}$ channel is of the form $W_n^{(i)} : {\cal X} \rightarrow {\cal Y}^{N} \times {\cal X}^{\pi(i)-1}$. In order to explain this operation we  associate a unique state vector  $\textbf{s}_n^{(i)}$ with each $W_n^{(i)}$ channel, which has the form
\begin{align*}
 \textbf{s}_n^{(i)}=(s_1^{(i)},s_2^{(i)},\ldots, s_n^{(i)}), 
\end{align*}
where 
\begin{align*}
\textbf{s}_k^{(i)} \in \{+,-,\bigstar\}, \quad k=1,2,\ldots,n 
\end{align*}
$s_k^{(i)}$ terms will be referred as a ``state" and we use $+,-,\bigstar$ symbols to track down the channel transformations that $W_n^{(i)}$ channels undergo as $n=1,2,\ldots$. States $+$, $-$ will correspond to the polarization transforms $\boxplus$ and $\boxminus$, as defined in \eqref{eqn:W_minus} and  \eqref{eqn:W_plus}, respectively; whereas state $\bigstar$ will correspond to a non-polarizing transform. We let 
\begin{align}
{\cal S}_n= \{ \textbf{s}_n^{(i)}: i  \in \mathbb{N}_n \}
\end{align}
to be the set of all possible state vectors at level $n$. Since each $\textbf{s}_n^{(i)} \in {\cal S}_n$  is unique (as we will show shortly) we have $|{\cal S}_n |=N$ and ${\cal S}_n \subset \{+,-,\bigstar\}^n$.
The vectors, $\textbf{s}_n^{(i)}\in {\cal S}_n$, are assigned recursively from $\textbf{s}_{n-1}^{(j)} \in {\cal S}_{n-1}$, with a state assigning procedure $\varphi_n:{\cal S}_{n-1} \rightarrow {\cal S}_{n}$.  The operation of $\varphi_n$ is explained in the following definition.
\begin{Definition}{\textbf{State Vector Assigning Procedure:}}
\label{defn:state_labeling}
Let  $\textbf{s}_{n-1}^{(j)} \in {\cal S}_{n-1}$  be the state vector of  $W_{n-1}^{(j)}$. The state vectors
$\textbf{s}_{n}^{(i)} \in  {\cal S}_{n}$, associated with $W_n^{(i)}$ take the form
\begin{align}
\begin{split}
\textbf{s}_n^{(j)} &= (\textbf{s}_{n-1}^{(j)},+),  \\
\textbf{s}_n^{(j+N(n-1))} &= (\textbf{s}_{n-1}^{(j)},-), 
\end{split}
 &j \in \mathbb{N}_{n-m}, \label{eqn:polarized_channels} \quad \quad  \hspace{6pt} \\
\textbf{s}_n^{(j)} &= (\textbf{s}_{n-1}^{(j)},\bigstar),     &j \in  {\mathbb N}_{n-1} \setminus {\mathbb N}_{n-m}.
\end{align}
\end{Definition}

\begin{figure}[t]
\begin{center}

\begin{tikzpicture}[scale=0.8]


\draw [] (-0.3,0.1) rectangle (2.1,-6.3);
\node [] at (0.9,-6.6) {${\varphi}_{n-1}$};

\draw [thick] [<-] (-2.4,-0.5) -- (-0.3,-0.5);

\draw [thick] [<-] (-2.4,-1.2) -- (-0.3,-1.2);
\draw [thick] [<-] (-2.4,-3) -- (-0.3,-3);
\draw [thick] [<-] (-2.4,-3.7) -- (-0.3,-3.7);
\draw [thick] [<-] (-2.4,-4.4) -- (-0.3,-4.4);
\draw [thick] [<-] (-2.4,-6) -- (-0.3,-6);


\draw [thick] [<-] (-2.4,-7.4) -- (-0.7,-7.4);


\draw [thick] [<-] (-2.4,-8.1) -- (-1,-8.1);


\draw [thick] [<-] (-2.4,-9.9) -- (-1.7,-9.9);

\node [] at (-1.2,-1.8) {\large{$.$}};
\node [] at (-1.3,-2.1) {\large{$.$}};
\node [] at (-1.4,-2.4) {\large{$.$}};

\node [] at (-3.4,-1.8) {\large{$.$}};
\node [] at (-3.4,-2.1) {\large{$.$}};
\node [] at (-3.4,-2.4) {\large{$.$}};

\node [] at ( 0.1,-1.8) {\large{$.$}};
\node [] at ( 0.1,-2.1) {\large{$.$}};
\node [] at ( 0.1,-2.4) {\large{$.$}};


\node [] at (-3.4,-4.8) {\large{$.$}};
\node [] at (-3.4,-5.1) {\large{$.$}};
\node [] at (-3.4,-5.4) {\large{$.$}};

\node [] at (0.1,-4.8) {\large{$.$}};
\node [] at (0.1,-5.1) {\large{$.$}};
\node [] at (0.1,-5.4) {\large{$.$}};


\node [] at (-3.4,-8.5) {\large{$.$}};
\node [] at (-3.4,-8.8) {\large{$.$}};
\node [] at (-3.4,-9.1) {\large{$.$}};

\draw [thick] [-] (-0.7,-0.5) -- (-0.7,-7.4);
\draw [thick] [-] (-1,-1.2) -- (-1,-8.1);
\draw [thick] [-] (-1.7,-3) -- (-1.7,-9.9);

\draw[fill](-0.7,-0.5) circle [radius=0.05];
\draw[fill](-1,-1.2) circle [radius=0.05];
\draw[fill](-1.7,-3) circle [radius=0.05];


%

\node [above] at (-3.3,-0.9) {\small{$  ( \textbf{s}_{n-1}^{(1)},+) $}};
\node [above] at (-3.3,-1.6) {\small{$  ( \textbf{s}_{n-1}^{(2)},+)$}};
\node [above] at (-3.8,-3.4) {\small{$ ( \textbf{s}_{n-1}^{(N(n-m))},+)$}};
\node [above] at (-4,-4.1)     {\small{$ ( \textbf{s}_{n-1}^{(N(n-m)+1)},\bigstar)$}};
\node [above] at (-4,-4.8){\small{$ ( \textbf{s}_{n-1}^{(N(n-m)+2)},\bigstar)$}};
\node [above] at (-3.8,-6.4){\small{$( \textbf{s}_{n-1}^{(N(n-1))},\bigstar)$}};

\node [above] at (-3.4,-7.8){\small{$( \textbf{s}_{n-1}^{(1)},-)$}};
\node [above] at (-3.4,-8.5){\small{$ ( \textbf{s}_{n-1}^{(2)},-)$}};
\node [above] at (-3.8,-10.3){\small{$ ( \textbf{s}_{n-1}^{(N(n-m))},-)$}};

\node [above] at (0.2,-0.9) {\small{$\textbf{s}_{n-1}^{(1)}  $}};
\node [above] at (0.2,-1.6) {\small{$\textbf{s}_{n-1}^{(2)}  $}};
\node [above] at (0.7,-3.4) {\small{$\textbf{s}_{n-1}^{(N(n-m)) }$}};
\node [above] at (0.9,-4.1)  {\small{$\textbf{s}_{n-1}^{(N(n-m)+1) }$}};
\node [above] at (0.9,-4.8) {\small{$\textbf{s}_{n-1}^{(N(n-m)+2) }$}};
\node [above] at (0.7,-6.4) {\small{$\textbf{s}_{n-1}^{(N(n-1)) }$}};


\draw [] (-2.4,0.2) rectangle (2.2,-10.3);
\node [] at (0.1,-10.6) {${\varphi}_n$};

\end{tikzpicture}
\end{center}
\caption
{ State labeling procedure  $ \varphi_n : {\cal S}_{n-1} \rightarrow {\cal S}_{n}$. State vectors $\textbf{s}_n^{(i)} \in {\cal S}_{n}$, are obtained by appending a new state $\{+,-,\bigstar\}$, to the vectors $\textbf{s}_{n-1}^{(j)}  \in {\cal S}_{n-1}$.
}
\label{fig:S_state_labeling}
\end{figure}
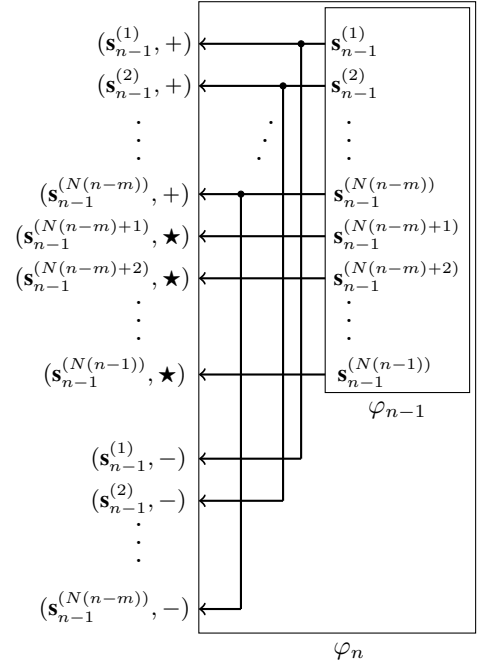

Investigating the above definition, as also demonstrated in Fig.~\ref{fig:S_state_labeling}, we observe that $\varphi_n$ appends a new state, $\{+,-,\bigstar\}$, to $\textbf{s}_{n-1}^{(j)} \in {\cal S}_{n-1}$ in order to construct $\textbf{s}_{n}^{(i)} \in {\cal S}_n$. For $j \in \mathbb{N}_{n-m}$, $\varphi_n$ appends $+$ and $-$ to $\textbf{s}_{n-1}^{(j)}$ to obtain  $\textbf{s}_n^{(j)}$ and $\textbf{s}_n^{(j+N_{n-1})}$, respectively. For $j \in  {\mathbb N}_{n-1} \setminus {\mathbb N}_{n-m}$,  $\varphi_n$ appends $\bigstar$ to $\textbf{s}_{n-1}^{(j)}$ in order to construct $\textbf{s}_n^{(j)}$. Because of the inherent memory in the combining procedure, it is difficult to obtain closed form expressions for $\textbf{s}_n^{(i)}$, for any $i$ and $m$. Nevertheless, with the above definition one can recursively obtain  $\textbf{s}_n^{(i)}$, by applying $\varphi_1,\varphi_2,\ldots,\varphi_n$. With the following proposition, we give the formal structure of the possible state vector, $\textbf{s}_n^{(i)}$, and thus the set $ {\cal S}_n$.

\begin{proposition}\label{prop:valid_states}
Let $\textbf{s}_n$, $\textbf{s}_n \in{\cal S}_n$, be a valid state vector one can obtain after applying $\varphi_1,\varphi_2,\ldots,\varphi_n$. Only the transitions between $s_{k}$ and $s_{k+1}$, $k=1,2,\ldots,n$, that are shown in the state transition diagram of Fig. \ref{fig:S_state_transition} are possible, where the imposed initial condition is $s_1 \in \{+,-\}$. 
\end{proposition}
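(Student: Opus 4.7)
The plan is to turn Definition~\ref{defn:state_labeling} into an explicit dictionary between channel indices and state vectors, and then do a short case analysis on $s_k$. Unrolling the recursion, a state vector $\textbf{s}_k^{(i_k)} = (s_1,\ldots,s_k)$ corresponds to an index path $i_0 = 1, i_1, \ldots, i_k$ where $i_j = i_{j-1}$ when $s_j \in \{+,\bigstar\}$ and $i_j = i_{j-1} + N(j-1)$ when $s_j = -$. Moreover, the last coordinate $s_k$ can be read off $i_k$ via the partition of $\mathbb{N}_k$ into the three blocks $\mathbb{N}_{k-m}$, $\mathbb{N}_{k-1}\setminus\mathbb{N}_{k-m}$, and $\mathbb{N}_k\setminus\mathbb{N}_{k-1}$, which correspond to $s_k = +$, $s_k = \bigstar$, and $s_k = -$ respectively. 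Applied one level higher, the next state obeys $s_{k+1} \in \{+,-\}$ iff the parent index satisfies $i_k \leq N(k+1-m)$, and $s_{k+1} = \bigstar$ iff $N(k+1-m) < i_k \leq N(k)$.

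The main step is a case analysis on $s_k$ using this dictionary, driven by the monotonicity chain $N(k-m) \leq N(k+1-m) \leq N(k-1)$, which is immediate from the recurrence $N(k)=N(k-1)+N(k-m)$. If $s_k = +$ then $i_k \leq N(k-m) \leq N(k+1-m)$, so $s_{k+1} \in \{+,-\}$ and no $\bigstar$ transition can follow. If $s_k = -$ then $i_k > N(k-1) \geq N(k+1-m)$ (using $m \geq 2$), which forces $s_{k+1}=\bigstar$. Finally, if $s_k = \bigstar$ then $i_k \in (N(k-m), N(k-1)]$, an interval that contains $N(k+1-m)$; whether $i_k$ lies below or above $N(k+1-m)$ then determines whether the next step polarizes, accounting for each of the arrows $\bigstar \to +$, $\bigstar \to -$, and $\bigstar \to \bigstar$ in Fig.~\ref{fig:S_state_transition} (with the last arrow degenerating to the empty possibility in the edge case $m = 2$, where $N(k+1-m)=N(k-1)$). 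The base case $s_1 \in \{+,-\}$ is then automatic: since $N(0) = N(1-m) = 1$, the $\bigstar$-region $\mathbb{N}_0 \setminus \mathbb{N}_{1-m}$ is empty, so Definition~\ref{defn:state_labeling} produces only $\textbf{s}_1^{(1)} = (+)$ and $\textbf{s}_1^{(2)} = (-)$.

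I do not anticipate a real obstacle here; the proof is essentially bookkeeping on top of the monotonicity of $N(\cdot)$ and the three-way block partition implicit in Definition~\ref{defn:state_labeling}. The only step that requires some care is keeping the index shift $i_{k-1} = i_k - N(k-1)$ straight in the $s_k = -$ case (contrasted with the trivial shift $i_{k-1} = i_k$ for $+$ and $\bigstar$), but this is handled uniformly by the dictionary and never interacts with more than one level at a time, so each transition can be verified in isolation.
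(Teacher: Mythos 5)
Your proposal is correct in substance and, in fact, more explicit than the paper's treatment: the paper declines to prove Proposition~\ref{prop:valid_states}, stating only that it is ``a direct consequence of the channel combining and state vector assigning procedure, $\varphi_n$, and it can be verified by induction.'' Your index-path dictionary ($i_j = i_{j-1}$ for $s_j \in \{+,\bigstar\}$, $i_j = i_{j-1} + N(j-1)$ for $s_j = -$, with $s_{k+1}$ read off from whether $i_k \leq N(k+1-m)$) is exactly the right formalization of what $\varphi_n$ does, and your three-way case analysis with the monotonicity chain $N(k-m) \leq N(k+1-m) \leq N(k-1)$ correctly rules out the forbidden one-step transitions $+ \to \bigstar$ and, for $m \geq 2$, $- \to +$, $- \to -$. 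The base case handling and the $m=2$ degeneracy are also correct.

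One point worth flagging: Fig.~\ref{fig:S_state_transition} encodes more than a bare pairwise-transition relation on $\{+,-,\bigstar\}$. The brace labeled ``$m-1$ times'' asserts that a run of $\bigstar$s following a $-$ has length exactly $m-1$ (a fact the paper leans on later, e.g.\ to argue the uniqueness of the binary vectors $\textbf{b}_n^{(i)}$). This is a non-Markovian constraint of memory $m-1$, so your closing claim that ``each transition can be verified in isolation'' slightly overreaches: a one-step analysis can only show $\bigstar \to \{+,-,\bigstar\}$ is possible, not that the chain has a fixed length. The good news is that your dictionary already carries the needed memory in $i_k$: if $s_{k_0} = -$ then $i_{k_0} \in (N(k_0-1), N(k_0)]$, and since the index is held fixed through every subsequent $\bigstar$, one has $i_{k_0+j-1} = i_{k_0} > N(k_0-1) \geq N(k_0+j-m)$ for $j = 1,\ldots,m-1$, forcing $s_{k_0+j} = \bigstar$, while at $j = m$ one gets $i_{k_0} \leq N(k_0) = N(k_0+m-m)$ and the run terminates in $\{+,-\}$. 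Adding those two lines would make your argument establish the full content of the figure, not just the one-step adjacency relation.
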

The above proposition is a direct consequence of the channel combining and state vector assigning procedure, $\varphi_n$, and it can be verified
by induction through stages $\varphi_1,\varphi_2,\ldots,\varphi_n$. 

\begin{proposition}\label{prop:unique_states}
The state vector $\textbf{s}_n^{(i)} \in {\cal S}_n$, $ i \in \mathbb{N}_n$, assigned to each $W_n^{(i)} \in {\cal W}_n$  is unique.
\end{proposition}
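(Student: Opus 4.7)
The natural approach is induction on the channel combining level $n$, using the recursive structure of the state vector assigning procedure $\varphi_n$ given in Definition~\ref{defn:state_labeling}.

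For the base case, at $n=1$ only states of the form $s_1 \in \{+,-\}$ are produced (the $\bigstar$ symbol cannot appear until $n \geq 2$ by the initialization), so there is nothing to verify beyond observing that $+$ and $-$ are distinct symbols. For the inductive step, assume every $\textbf{s}_{n-1}^{(j)} \in \mathcal{S}_{n-1}$, $j \in \mathbb{N}_{n-1}$, is unique. The plan is to partition the index set $\mathbb{N}_n$ according to which branch of the procedure $\varphi_n$ produced $\textbf{s}_n^{(i)}$, and then use the last coordinate of the state vector as a discriminator between branches, together with the induction hypothesis within each branch.

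Concretely, I would identify three disjoint groups of indices whose union is $\mathbb{N}_n$. Group A consists of indices $i = j$ with $j \in \mathbb{N}_{n-m}$, for which $\textbf{s}_n^{(i)} = (\textbf{s}_{n-1}^{(j)},+)$; group B consists of indices $i = j$ with $j \in \mathbb{N}_{n-1} \setminus \mathbb{N}_{n-m}$, for which $\textbf{s}_n^{(i)} = (\textbf{s}_{n-1}^{(j)},\bigstar)$; and group C consists of indices $i = j + N(n-1)$ with $j \in \mathbb{N}_{n-m}$, for which $\textbf{s}_n^{(i)} = (\textbf{s}_{n-1}^{(j)},-)$. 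A quick check using the recursion $N(n) = N(n-1) + N(n-m)$ confirms that these groups are disjoint and exhaust $\mathbb{N}_n$. Any two state vectors drawn from different groups must disagree in the final coordinate, since each group produces vectors whose last entry is exactly one of $+$, $\bigstar$, or $-$. Two state vectors drawn from the same group agree in the final coordinate but, by the inductive hypothesis applied to the distinct parent indices $j_1 \neq j_2$, their length-$(n-1)$ prefixes $\textbf{s}_{n-1}^{(j_1)}$ and $\textbf{s}_{n-1}^{(j_2)}$ are distinct, hence the full vectors differ.

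The only real obstacle, and it is a mild one, is making sure the index bookkeeping is correct so that the three groups truly partition $\mathbb{N}_n$; once that is in place the uniqueness claim follows immediately from the induction hypothesis and the fact that the new state appended in $\varphi_n$ is determined uniquely by the group. No additional machinery from the polarization transforms is needed, since Proposition~\ref{prop:unique_states} is purely a combinatorial statement about the recursive labeling $\varphi_1,\varphi_2,\ldots,\varphi_n$.
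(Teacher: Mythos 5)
Your proof is correct and follows essentially the same approach as the paper: induction on the combining level $n$, with the base case $\mathcal{S}_1 = \{(+),(-)\}$, and an inductive step that distinguishes state vectors by the last appended symbol ($+$, $-$, or $\bigstar$) across branches of $\varphi_n$ and invokes the induction hypothesis within each branch. Your three-group partition and the cross-group vs.\ within-group case split is a slightly more explicit bookkeeping of the same argument the paper sketches more tersely.
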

The above proposition will be crucial for the ongoing analysis as it states that each $W_n^{(i)}$ is uniquely addressable by $\textbf{s}_n^{(i)}$. 
We will use this fact to obtain the ordering $\pi_n$. Before accomplishing this, we obtain binary vectors $\textbf{b}_n^{(i)}=(b_1^{(i)},b_2^{(i)}, \ldots,b_n^{(i)})$, $b_{k}^{(i)}  \in {\cal X}$, $k=1,2,\ldots,n$, from  $\textbf{s}_n^{(i)}$, which will allows us to sort and provide an order. The mapping between $\textbf{s}_n^{(i)}$ and  $\textbf{b}_n^{(i)}$ is obtained as
\begin{align}\label{eqn:binary_map}
b_{k}^{(i)} = \begin{cases}  0  \quad \text{if}  \quad s_k^{(i)} \in \{-,\bigstar \},      \\ 
			       1  \quad \text{if}  \quad s_k^{(i)} =  +, 	
	\end{cases}\quad k=1,2,\ldots,n.
\end{align}
We notice that although both $s_k^{(i)}=-$ and $s_k^{(i)}=\bigstar$ are mapped as $b_k^{(i)}=0$, the $\textbf{b}_n^{(i)}$ vectors will also be unique for each $i$ because every state $-$ in $\textbf{s}_n^{(i)}$ is followed by $m-1$ occurrences of state $\bigstar$, and the distinction between
different $\textbf{s}_n^{(i)}$ is hidden in the location of $+$ states in $\textbf{s}_n^{(i)}$.  The following definition uses this uniqueness property to obtain the ordering, $\pi_n$. It is an adaptation of the bit-reversed order of Ar\i kan in \cite{Arikan} to the proposed coding scheme.
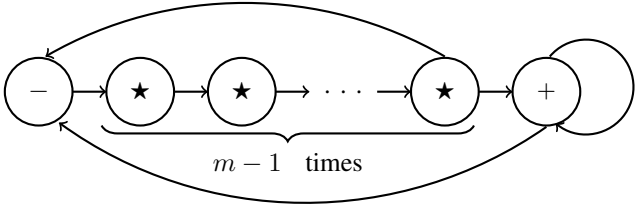
\begin{figure}[t]
\begin{center}

\begin{tikzpicture}[scale=0.9]

\draw [thick] (0,0) circle [radius=0.5];
\node [] at (0,0) {$-$};
\draw [thick] [->] (0.5,0) -- (1,0);

\draw [thick] (1.5,0) circle [radius=0.5];
\node [] at (1.5,0) {$\bigstar$};
\draw [thick] [->] (2,0) -- (2.5,0);

\draw [thick] (3,0) circle [radius=0.5];
\node [] at (3,0) {$\bigstar$};
\draw [thick] [->] (3.5,0) -- (4,0);


\node [] at (4.5,0) {.      .      .};

\draw [thick] [->] (5,0) -- (5.5,0);
\draw [thick] (6,0) circle [radius=0.5];
\node [] at (6,0) {$\bigstar$};
\draw [thick,->] (6,0.52) arc [radius=5.9, start angle=60, end angle=120];
\draw [thick] [->] (6.5,0) -- (7,0);
\draw [thick,->] (7.55,0.52) arc [radius=0.7, start angle=140, end angle=-130];

\draw [thick,->] (7.5,-0.52) arc [radius=6.2, start angle=-55, end angle=-126];

\draw [thick] (7.5,0) circle [radius=0.5];
\node [] at (7.5,0) {$+$};

\draw [thick] [decorate,decoration={brace,amplitude=6pt,mirror},xshift=-2pt,yshift=0pt]
(1, -0.5) -- (6.5,-0.5) node [black,midway,yshift=-0.5cm] {$m-1$ \hspace{2pt} \text{times}};

\end{tikzpicture}
\end{center}
\caption{
Possible state transitions observed between $s_{k}$ and $s_{k+1}$, $k=1,2,\ldots,n$.
}
\label{fig:S_state_transition}
\end{figure}
\begin{Definition}{\textbf{Bit-Reversed Order:}}\label{defn:channel_splitting_order}
Let $(\textbf{b}_n^{(i)})_2$ denote value of $\textbf{b}_n^{(i)}$ in Mod-2 as $(b_1^{(i)},b_{2}^{(i)},\ldots,b_n^{(i)})_2$ where $b_1^{(i)}$ is the most significant bit. The uniqueness of $\textbf{b}_n^{(i)}$ for each $i$ ensures the existence of  a permutation $\pi_n:  {\mathbb N}_n \rightarrow  {\mathbb N}_n$, so that for some $i,j \in {\mathbb N}_n$,  
we have $\pi_n(i) < \pi_n(j) $ if $(\textbf{b}_n^{(i)})_2 <(\textbf{b}_n^{(j)})_2$. 
\end{Definition}
Therefore the bit-reversed order $\pi_n$ is obtained in terms of increasing $(\textbf{b}_n^{(i)})_2$ values.

Notice that the binary input channels $\hat{W}_{n-m}^{(j)}$, $j \in \mathbb{N}_{n-m}$, of Fig.~\ref{fig:S_channel_combining} have no effect
in the recursive state assigning procedure, $\varphi_n$, and thus in the bit-reversed order. Their sole purpose is to provide auxiliary channels for the combining process. In fact, the $N(n-m)$ inputs of $\hat{W}_{n-m}$ can be combined with the $N(n-1)$ inputs of $\hat{W}_{n-1}$ in $\frac{N(n-1)!}{N(n-m)!}$ different ways. However, we deliberately align the inputs of $W_{n-1}$ and $\hat{W}_{n-m}$ so that the first $N(n-m)$ inputs of $W_{n-1}$ are combined, respectively, with the the first $N(n-m)$ inputs of $\hat{W}_{n-m}$  as shown in Fig.~\ref{fig:S_channel_combining}. This alignment in the combining process will be crucial  in the next section when we investigate the evolution of binary-input channels in a probabilistic setting, because the channel pairs, $W_{n-1}^{(j)}$ and $\hat{W}_{n-m}^{(j)}$, share the same state history as explained in the following proposition.

\begin{proposition}\label{prop:same_state_history}
Let $\textbf{s}_{n-1}^{(j)}=(s_1,s_2,\ldots,s_{n-1})\in {\cal S}_{n-1}$ be the state vector of $W_{n-1}^{(j)}$. Channel $\hat{W}_{(n-m)}^{(j)}$ shares the same state history with $W_{(n-1)}^{(j)}$, through combining stages $1,2,\ldots,n-m$, in the sense that its state vector is $\textbf{s}_{n-m}^{(j)}=(s_1,s_2,\ldots,s_{n-m}) \in {\cal S}_{n-m}$.
\end{proposition}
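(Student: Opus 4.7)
The plan is to prove a stronger invariant by induction and then specialize it. The invariant I would establish is the following:

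\textbf{Claim (invariant).} For every $n \geq 1$ and every $k$ with $1 \leq k \leq n$, and every $j \in \mathbb{N}_{k}$, the first $k$ coordinates of $\mathbf{s}_{n}^{(j)}$ coincide with $\mathbf{s}_{k}^{(j)}$; that is, if $\mathbf{s}_{n}^{(j)}=(s_{1},s_{2},\ldots,s_{n})$, then $\mathbf{s}_{k}^{(j)}=(s_{1},s_{2},\ldots,s_{k})$.

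I would prove this by induction on $n-k$. The base case $n=k$ is immediate. For the inductive step, suppose the claim holds for some $n \geq k$ and fix $j \in \mathbb{N}_{k}$. Since $k \leq n$, we have $j \in \mathbb{N}_{n}$, so at combining level $n+1$ the index $j$ is assigned via one of the two rules in Definition~\ref{defn:state_labeling} that keep the index unchanged: either $\mathbf{s}_{n+1}^{(j)}=(\mathbf{s}_{n}^{(j)},+)$ when $j \in \mathbb{N}_{n+1-m}$, or $\mathbf{s}_{n+1}^{(j)}=(\mathbf{s}_{n}^{(j)},\bigstar)$ when $j \in \mathbb{N}_{n}\setminus\mathbb{N}_{n+1-m}$. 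The third rule, which maps $j'$ to $j'+N(n)$ and appends a ``$-$'', never produces an index in $\mathbb{N}_{k}\subseteq \mathbb{N}_{n}$ at level $n+1$. In both surviving cases the first $n$ entries of $\mathbf{s}_{n+1}^{(j)}$ equal $\mathbf{s}_{n}^{(j)}$, so the inductive hypothesis forces the first $k$ entries to equal $\mathbf{s}_{k}^{(j)}$, completing the induction.

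With the invariant in hand, I would apply it with $n$ replaced by $n-1$ and $k$ replaced by $n-m$, for $j \in \mathbb{N}_{n-m}$: the first $n-m$ coordinates of $\mathbf{s}_{n-1}^{(j)}$ are exactly $\mathbf{s}_{n-m}^{(j)}$. Since $\hat{W}_{n-m}$ is, by construction, an independent realization of $W_{n-m}$ built by the same recursive application of $\varphi_{1},\ldots,\varphi_{n-m}$, the state vector assigned to its $j$-th binary-input channel $\hat{W}_{n-m}^{(j)}$ is precisely $\mathbf{s}_{n-m}^{(j)}$. Combining these two facts gives exactly the conclusion of the proposition.

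The only step that requires care is the inductive clause, and the main obstacle to watch out for is the bookkeeping of index ranges: one must verify that an index in $\mathbb{N}_{k}$ never originates from the ``$-$'' branch of $\varphi_{n+1}$ (which would append a non-matching symbol and shift the index), but this is automatic since that branch produces indices strictly larger than $N(n) \geq N(k)$. Everything else is a direct reading of the definitions of $\varphi_{n}$ and of $\hat{W}_{n-m}$ as an independent copy of $W_{n-m}$.
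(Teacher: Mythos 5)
Your proof is correct and uses essentially the same idea as the paper's: both exploit the fact that for $j \in \mathbb{N}_{n-1}$, the map $\varphi_n$ leaves the index $j$ unchanged and only appends a symbol, so state vectors are prefix-consistent across levels. You merely formalize this as an explicit invariant proved by induction on $n-k$, whereas the paper walks the same backward chain $\varphi_{n-1},\varphi_{n-2},\ldots,\varphi_{n-m+1}$ informally; the content is identical.
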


\subsection{Channel Splitting}
We assume a genie-aided decoding mechanism where the $W_n^{(i)}$ channels are decoded successively in increasing $\pi_n(i)$ values, from $1$ to $N$, and the genie provides the true values of already decoded bits. The decoder has no knowledge of the future bits that it will decode. With these assumptions $W_n^{(i)}$ is the effective bit-channel that this genie-aided decoder faces while trying to decode its next bit. 
Let us define $u_n^{(i)} \in {\cal X}$ as
\begin{align*}
u_n^{(i)}= \text{binary input of the channel } W_n^{(i)},
\end{align*}
and for $i,j \in {\mathbb N}_n$ let
\begin{align}
\begin{split}
 \textbf{u}_{n,b}^{(i)} \! \!& \defn  ( u_n^{(j)} : \pi_n{(j)} < \pi_n{(i)}   ),\\
  \textbf{u}_{n,a}^{(i)} \! \! & \defn   (  u_n^{(j)} : \pi_n{(j)} > \pi_n{(i)}  ). \label{eqn:U_b_and_U_a}
\end{split}
\end{align}
$ \textbf{u}_{n,b}^{(i)}$ and $ \textbf{u}_{n,a}^{(i)} $ are the information vectors that are decoded, by the genie-aided decoder, before and after $u_n^{(i)}$, respectively. The length of $\textbf{u}_{n,b}^{(i)} $ is $\pi_n{(i)}-1$ and the length of $\textbf{u}_{n,a}^{(i)} $ is $N-\pi_n{(i)}$ so that  $\textbf{u}_{n,b}^{(i)}  \in {\cal X}^{\pi_n{(i)}-1}$ and $\textbf{u}_{n,a}^{(i)} \in  {\cal X}^{N_n-\pi_n{(i)}}$. The following definition formalizes the transition probabilities of the $W_n^{(i)}$ channels.
\begin{align}
  W_n^{(i)} \! \defn  \sum_{   \textbf{u}_{n,a}^{(i)} }  \Pr \left( \textbf{y}_{N}, \textbf{u}_{n,a}^{(i)}  ,\textbf{u}_{n,b}^{(i)}|u_n^{(i)}  \right). \label{eqn:split_channel_defn}
\end{align}
The above definition indicates that $W_n^{(i)}$ is the posterior probability of an arbitrary B-DMC obtained at channel combining and splitting level $n$.
The genie-aided decoder has no knowledge of $\textbf{u}_{n,a}^{(i)}$, therefore
it averages the joint probability of all outputs and all inputs over $\textbf{u}_{n,a}^{(i)}$ and takes $\textbf{y}_{N}$ and $\textbf{u}_{n,b}^{(i)}$ as the effective output (observation) of the combined channels. Hence each $W_n^{(i)}$ has input $u_n^{(i)} \in {\cal X}$ and output $(\textbf{y}_{N},\textbf{u}_{n,b}^{(i)}) \in {\cal Y}^{N} \times {\cal X}^{\pi_n(i)-1}$.

\begin{proposition}\label{prop:S_split_channels}
The transition probabilities of $W_n^{(i)}$ channels take the following forms
\begin{align}
\begin{split}
W_n^{(j)} &= \hat{W}_{n-m}^{(j)} \boxplus  W_{n-1}^{(j)},  \\
W_n^{(j+{N_{n-1}})} &= \hat{W}_{n-m}^{(j)} \boxminus  W_{n-1}^{(j)}, 
\end{split}
\quad \quad\quad\quad j \in \mathbb{N}_{n-m}, \label{eqn:polarized_channels} \\
 W_n^{(j)} &=  \gamma(n)W_{n-1}^{(j)},   \quad \quad  \quad j \in  {\mathbb N}_{n-1} \setminus {\mathbb N}_{n-m},  \label{eqn:non_polarized_channels}
\end{align}
where $\gamma(n)= \Pr (y_{N(n-1)+1},y_{N(n-1)+2},\ldots,y_{N})$ and $W_{0}=W_{-1}=\ldots=W_{1-m}=W$.
\end{proposition}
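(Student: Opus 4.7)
The plan is to unpack the definition \eqref{eqn:split_channel_defn} of $W_n^{(i)}$ using the recursive combining of $W_n$ from $W_{n-1}$ and $\hat{W}_{n-m}$, and then recognize each case as either a single-step polarization transform or a passthrough with an independent block of observations. The whole argument reduces to careful bookkeeping of how the bit-reversed order $\pi_n$ partitions the inputs into ``already decoded'' (in $\textbf{u}_{n,b}^{(i)}$) and ``not yet decoded'' (in $\textbf{u}_{n,a}^{(i)}$), and then marginalising the latter in the joint probability.

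First, I would write the vector-channel factorisation
\[
W_n(\textbf{y}_N|\textbf{u}_N) \;=\; W_{n-1}(\textbf{y}'\,|\textbf{v}')\,\hat{W}_{n-m}(\textbf{y}''\,|\textbf{v}''),
\]
where $\textbf{y}'$ and $\textbf{y}''$ collect the first $N(n-1)$ and last $N(n-m)$ components of $\textbf{y}_N$; $\textbf{v}'$ is the input vector to $W_{n-1}$, whose $j$-th entry is $u_n^{(j)}\oplus u_n^{(j+N(n-1))}$ for $j\in\mathbb{N}_{n-m}$ and $u_n^{(j)}$ for $j\in\mathbb{N}_{n-1}\setminus\mathbb{N}_{n-m}$; and $\textbf{v}''$ is the input vector to $\hat{W}_{n-m}$, whose $j$-th entry is $u_n^{(j+N(n-1))}$. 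This is read off directly from Fig.~\ref{fig:S_channel_combining}.

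For the polarised case $j\in\mathbb{N}_{n-m}$, Proposition~\ref{prop:same_state_history} guarantees that $\hat{W}_{n-m}^{(j)}$ and $W_{n-1}^{(j)}$ share the same state history. The channel $W_n^{(j+N(n-1))}$ (trailing state $-$) is decoded before $W_n^{(j)}$ (trailing state $+$) because \eqref{eqn:binary_map} and Definition~\ref{defn:channel_splitting_order} map the trailing bit $0$ below $1$. I would therefore substitute the factorisation into \eqref{eqn:split_channel_defn}, sum over $\textbf{u}_{n,a}^{(j+N(n-1))}$, split that sum into a sum over the future bits of $W_{n-1}^{(j)}$, over those of $\hat{W}_{n-m}^{(j)}$, and over $u_n^{(j)}$ itself, and recognise the resulting expression as \eqref{eqn:W_minus} applied to $W' = \hat{W}_{n-m}^{(j)}$ and $W'' = W_{n-1}^{(j)}$; this is the $\boxminus$ identity. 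For $W_n^{(j)}$, the companion bit $u_n^{(j+N(n-1))}$ has already been decoded and is now part of $\textbf{u}_{n,b}^{(j)}$, so the same expansion collapses to \eqref{eqn:W_plus}, giving the $\boxplus$ identity.

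For the non-polarised case $j\in\mathbb{N}_{n-1}\setminus\mathbb{N}_{n-m}$, the trailing state of $\textbf{s}_n^{(j)}$ is $\bigstar$, so $u_n^{(j)}$ feeds the $j$-th input of $W_{n-1}$ directly and does not appear anywhere in $\textbf{v}''$. The inputs of $\hat{W}_{n-m}$ are all contained in $\textbf{u}_{n,a}^{(j)}$ (none of them carries a trailing $+$ above $u_n^{(j)}$ because the first $n-1$ bits of the $\hat{W}_{n-m}$-side indices come from $\mathbb{N}_{n-m}\subsetneq\mathbb{N}_{n-1}$ and their last two bits end with $-\bigstar$ or similar, making their binary labels strictly larger than $(\textbf{b}_{n-1}^{(j)},0)$ for any $j\in\mathbb{N}_{n-1}\setminus\mathbb{N}_{n-m}$). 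Summing those inputs out factors the $\hat{W}_{n-m}$ block to the marginal $\gamma(n)=\Pr(y_{N(n-1)+1},\ldots,y_N)$, which is independent of $u_n^{(j)}$ and $\textbf{u}_{n,b}^{(j)}$, and the remaining sum over the future $W_{n-1}$-side bits reproduces $W_{n-1}^{(j)}$ by its definition.

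The main obstacle will be to verify cleanly that $\pi_n$ induces on the inputs of $W_{n-1}$ and $\hat{W}_{n-m}$ exactly the same relative decoding order used by the inductive hypothesis, so that $\textbf{u}_{n,b}^{(i)}$ splits into the correct ``already decoded'' sets for the two sub-channels. This relies on the observation that appending a common trailing state to two vectors in $\mathcal{S}_{n-1}$ preserves their relative order under Definition~\ref{defn:channel_splitting_order}, combined with Proposition~\ref{prop:same_state_history} which aligns the $j$-th inputs of $W_{n-1}$ and $\hat{W}_{n-m}$. Once this alignment is settled, the remainder of the computation is the routine single-step polarisation algebra already carried out in \cite{Arikan}.
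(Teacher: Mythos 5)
Your overall plan --- factorize the joint probability via the recursive construction of $W_n$, substitute into \eqref{eqn:split_channel_defn}, marginalize over the future bits, and recognize Ar\i kan's single-step transforms --- is the right approach and is in spirit what the paper does, only made computationally explicit rather than stated as a genie-aided-decoder narrative. One cosmetic slip first: Fig.~\ref{fig:S_channel_combining} and the encoding matrix \eqref{eqn:encoding_matrix} place the XOR on the $\hat{W}_{n-m}$ side, so the $j$-th input of $W_{n-1}$ is $u_n^{(j)}$ while the $j$-th input of $\hat{W}_{n-m}$ is $u_n^{(j)}\oplus u_n^{(j+N(n-1))}$, the reverse of what you wrote; because $\boxplus$ and $\boxminus$ are symmetric up to a relabelling of outputs and conditioning symbols this does not change the resulting channels, but it will tangle the bookkeeping you correctly single out as the crux.

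The substantive gap is in the non-polarized case. Your claim that the inputs of $\hat{W}_{n-m}$ are all contained in $\textbf{u}_{n,a}^{(j)}$ is false: a trailing $-$ and a trailing $\bigstar$ both map to $0$ under \eqref{eqn:binary_map}, so whether $\pi_n(k+N(n-1))$ exceeds $\pi_n(j)$ reduces to comparing $(\textbf{b}_{n-1}^{(k)})_2$ with $(\textbf{b}_{n-1}^{(j)})_2$, and there is no dominance between indices in $\mathbb{N}_{n-m}$ and indices in $\mathbb{N}_{n-1}\setminus\mathbb{N}_{n-m}$. Concretely, take $m=2$, $n=4$: the non-polarized index $j=4$ has $\textbf{s}_4^{(4)}=(+,+,-,\bigstar)$ with binary label $1100$, while the bottom-half states $\textbf{s}_4^{(7)}=(-,\bigstar,+,-)$ and $\textbf{s}_4^{(8)}=(+,-,\bigstar,-)$ have labels $0010$ and $1000$, both smaller; thus $u_4^{(7)},u_4^{(8)}\in\textbf{u}_{4,b}^{(4)}$, not $\textbf{u}_{4,a}^{(4)}$. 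Consequently the $\hat{W}_{n-m}$ block does not integrate out to the unconditional marginal $\Pr(\textbf{y}'')$ --- some coordinates of $\textbf{v}''$ are pinned by already-decoded bits --- and the step ``summing those inputs out factors the $\hat{W}_{n-m}$ block to $\gamma(n)$'' would fail as written. What you actually need, and what the paper's argument relies on, is only the weaker and robust fact that $u_n^{(j)}$ never enters $\textbf{v}''$, so whatever factor survives the marginalization is independent of $u_n^{(j)}$; this is all that \eqref{eqn:non_polarized_channels}, or more precisely the likelihood-ratio identity \eqref{eqn:LR_sparse}, requires. Tellingly, the obstacle you flag at the end --- tracking how $\pi_n$ partitions the sub-channels' inputs --- is exactly the point at which your sketch breaks.
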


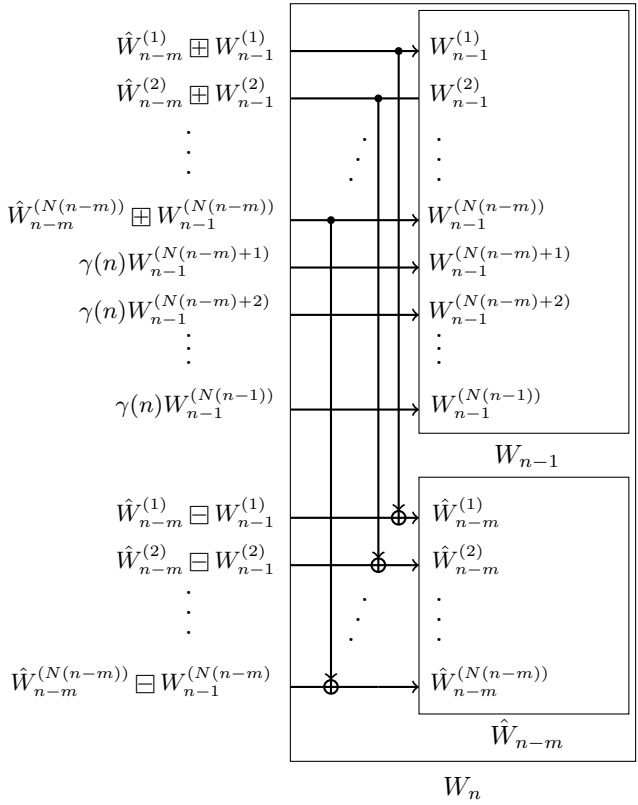
\begin{figure}[t]
\begin{center}

\begin{tikzpicture}[scale=0.9]


\draw [] (-1.1,0.1) rectangle (2,-6.15);
\node [] at (0.5,-6.5) {$W_{n-1}$};

\draw [thick] [->] (-3,-0.5) -- (-1.1,-0.5);
\draw [thick] [-] (-3,-1.2) -- (-1.1,-1.2);
\draw [thick] [->] (-3,-3) -- (-1.1,-3);
\draw [thick] [->] (-3,-3.7) -- (-1.1,-3.7);
\draw [thick] [->] (-3,-4.4) -- (-1.1,-4.4);
\draw [thick] [->] (-3,-5.8) -- (-1.1,-5.8);

\draw [] (-1.1,-6.8) rectangle (2,-10.3);
\node [] at (0.5,-10.6) {$\hat{W}_{n-m}$};

\draw[thick](-1.4,-7.4) circle [radius=0.1];
\draw [thick] [-] (-1.4,-7.3) -- (-1.4,-7.5);
\draw [thick] [->] (-3,-7.4) -- (-1.1,-7.4);

\draw[thick](-1.7,-8.1) circle [radius=0.1];
\draw [thick] [-] (-1.7,-8.2) -- (-1.7,-8.0);
\draw [thick] [->] (-1.2,-8.1) -- (-1.1,-8.1);
\draw [thick] [-] (-3,-8.1) -- (-1.2,-8.1);

\draw[thick](-2.4,-9.9) circle [radius=0.1];
\draw [thick] [-] (-2.4,-9.8) -- (-2.4,-10);
\draw [thick] [->] (-1.7,-9.9) -- (-1.1,-9.9);

\draw [thick] [-] (-3,-9.9) -- (-1.7,-9.9);

\node [] at (-1.9,-1.8) {\large{$.$}};
\node [] at (-2,-2.1) {\large{$.$}};
\node [] at (-2.1,-2.4) {\large{$.$}};

\node [] at (-4.5,-1.7) {\large{$.$}};
\node [] at (-4.5,-2) {\large{$.$}};
\node [] at (-4.5,-2.3) {\large{$.$}};

\node [] at ( -0.8,-1.8) {\large{$.$}};
\node [] at ( -0.8,-2.1) {\large{$.$}};
\node [] at ( -0.8,-2.4) {\large{$.$}};


\node [] at (-4.5,-4.7) {\large{$.$}};
\node [] at (-4.5,-4.9) {\large{$.$}};
\node [] at (-4.5,-5.1) {\large{$.$}};

\node [] at (-0.8,-4.7) {\large{$.$}};
\node [] at (-0.8,-4.9) {\large{$.$}};
\node [] at (-0.8,-5.1) {\large{$.$}};


\node [] at (-4.5,-8.5) {\large{$.$}};
\node [] at (-4.5,-8.8) {\large{$.$}};
\node [] at (-4.5,-9.1) {\large{$.$}};

\node [] at (-1.9,-8.6) {\large{$.$}};
\node [] at (-2,-8.9) {\large{$.$}};
\node [] at (-2.1,-9.2) {\large{$.$}};

\node [] at (-0.8,-8.6) {\large{$.$}};
\node [] at (-0.8,-8.9) {\large{$.$}};
\node [] at (-0.8,-9.2) {\large{$.$}};

\draw [thick] [->] (-1.4,-0.5) -- (-1.4,-7.3);
\draw [thick] [->] (-1.7,-1.2) -- (-1.7,-8);
\draw [thick] [->] (-2.4,-3) -- (-2.4,-9.8);

\draw[fill](-1.4,-0.5) circle [radius=0.05];
\draw[fill](-1.7,-1.2) circle [radius=0.05];
\draw[fill](-2.4,-3) circle [radius=0.05];


%

\node [above] at (-4.4,-0.8) {\small{$\hat{W}_{n-m}^{(1)} \boxplus W_{n-1}^{(1)}  $}};
\node [above] at (-4.4,-1.5) {\small{$\hat{W}_{n-m}^{(2)} \boxplus W_{n-1}^{(2)}  $}};

\node [above] at (-5.2,-3.3) {\small{$\hat{W}_{n-m}^{(N(n-m))} \boxplus W_{n-1}^{(N(n-m))} $}};

\node [above] at (-4.7,-4) {\small{$\gamma(n)  W_{n-1}^{(N(n-m)+1)} $}};
\node [above] at (-4.7,-4.7) {\small{$\gamma(n) W_{n-1}^{(N(n-m)+2)}   $}};
\node [above] at (-4.4,-6.1) {\small{$\gamma(n) W_{n-1}^{(N(n-1))}   $}};

\node [above] at (-4.4,-7.7) {\small{$\hat{W}_{n-m}^{(1)}  \boxminus W_{n-1}^{(1)} $}};
\node [above] at (-4.4,-8.4) {\small{$\hat{W}_{n-m}^{(2)} \boxminus W_{n-1}^{(2)} $}};
\node [above] at (-5.2,-10.2) {\small{$\hat{W}_{n-m}^{(N(n-m))} \boxminus W_{n-1}^{(N(n-m)} $}};

\node [above] at (-0.5,-0.8) {\small{$W_{{n-1}}^{(1)}$}};
\node [above] at (-0.5,-1.5) {\small{$W_{{n-1}}^{(2)}$}};

\node [above] at (-0.1,-3.3) {\small{$W_{{n-1}}^{(N(n-m))}$}};
\node [above] at (0.1,-4) {\small{$W_{{n-1}}^{(N(n-m)+1)}$}};
\node [above] at (0.1,-4.7){\small{$W_{{n-1}}^{(N(n-m)+2)}$}};

\node [above] at (-0.1,-6.1) {\small{$W_{{n-1}}^{(N(n-1))}$}};

\node [above] at (-0.4,-7.7) {\small{$\hat{W}_{{n-m}}^{(1)}$}};
\node [above] at (-0.4,-8.4) {\small{$\hat{W}_{{n-m}}^{(2)}$}};

\node [above] at (-0.05,-10.2) {\small{$\hat{W}_{n-m}^{(N(n-m))}$}};


\draw [] (-3,0.2) rectangle (2.1,-11);
\node [] at (-0.45,-11.4) {$W_n$};

\end{tikzpicture}
\end{center}
\caption
{Transition probabilities of $W_n^{(i)}$ channels after combining and splitting $W_{n-1}$ and $\hat{W}_{n-m}$. 
}
\label{fig:S_polarization}
\end{figure}
The above proposition is illustrated in Fig.~\ref{fig:S_polarization}. In order to provide a proof for the above proposition and explain the underlying idea behind the bit-reversed order we make the following analysis. Investigating Fig.~\ref{fig:S_polarization}, we see that the overall effect of XOR operations, after channel splitting, is to provide diversity paths for the $N(n-m)$ inputs of $W_{n-1}$ in the sense that 
for $ j \in \mathbb{N}_{n-m}$ we have $W_n^{(j)} = \hat{W}_{n-m}^{(j)} \boxplus  W_{n-1}^{(j)}$. Therefore the input of $W_n^{(j)}$ is transmitted through both $\hat{W}_{n-m}^{(j)}$ and $\hat{W}_{n-m}^{(j)}$. Notice that in order to provide this diversity, the inputs of $W_n^{(j+{N_{n-1}})}$ must be decoded, by the genie-aided decoder, before the inputs of $W_n^{(j)}$ indicating $\pi_n(j)> \pi_n(j+N(n-1))$ must hold. Thanks to the bit-reversed order, as explained in Definition.~\ref{defn:channel_splitting_order}, this requirement can be easily accomplished. To see this consider the state vectors $\textbf{s}_{n-1}^{(j)}$ of $W_{n-1}^{(j)}$ to which one appends $+$ and $-$ in order to construct $\textbf{s}_{n}^{(j)}$ and $\textbf{s}_{n}^{(j+N(n-1))}$, respectively. After this operation, the mapping between $\textbf{s}_n^{(i)}$ and $\textbf{b}_n^{(i)}$, as given by \eqref{eqn:binary_map}, indicates that $\textbf{b}_{n}^{(j)}=(\textbf{b}_{n-1}^{(j)},1)$ and $\textbf{b}_{n}^{(j+N(n-1))}=(\textbf{b}_{n-1}^{(j)},0)$ holds. Therefore 
\begin{align*}
(\textbf{b}_{n}^{(j)})_2 > (\textbf{b}_{n}^{(j+N(n-1))})_2, \quad n=1,2,\dots
\end{align*}
and by Definition~\ref{defn:channel_splitting_order}, $\pi_n(j) > \pi_n(j+N(n-1))$ holds for all $n\geq1$. On the other hand, in order to decode $W_n^{(j+{N_{n-1}})}$ correctly, the inputs of $W_{n-1}^{(j)}$ and $\hat{W}_{n-m}^{(j)}$ must be decoded correctly indicating we must have $W_n^{(j+N(n-1))} = \hat{W}_{n-m}^{(j)} \boxminus  W_{n-1}^{(j)}$. The above analysis, by induction through combining and splitting stages
$1,2,\ldots,n$ proves \eqref{eqn:polarized_channels}. In order to prove \eqref{eqn:non_polarized_channels}, we inspect that for $j \in  {\mathbb N}_{n-1} \setminus {\mathbb N}_{n-m}$ the channel  $W_n^{(j)}$ is as good as $ W_{n-1}^{(j)}$ in the sense that the genie-aided decoder can always decode $W_{n-1}^{(j)}$ instead of $ W_{n}^{(j)}$. Inspecting Fig.~\ref{fig:S_polarization} we notice that the binary-input of $ W_{n}^{(j)}$  is not transmitted through the inputs of $\hat{W}_{n-m}$. Therefore, the combining of $\hat{W}_{n-m}$ with $W_{n-1}$ does not provide any new information regarding the input of $W_{n}^{(j)}$. This, in turn, indicates that $W_{n}^{(j)}$ is the same as $W_{n-1}^{(j)}$ except for a scaling factor $\gamma(n)$, as in \eqref{eqn:non_polarized_channels}.

\subsection{Effects of Channel Combining and Splitting on the Symmetric Capacity}

Let us define $I_n^{(i)}=I(W_n^{(i)})$ and analyze the implications of Proposition~\ref{prop:S_split_channels}. Equation \eqref{eqn:polarized_channels} states that the channel pairs, $\hat{W}_{n-m}^{(j)}$ and $W_{n-1}^{(j)}$, $ j \in \mathbb{N}_{n-m}$, undergo a polarization transform, $\boxminus$ and $\boxplus$, from which two new channels, $W_n^{(j)}$ and $W_n^{(j+{N_{n-1}})}$, emerge. In the light of \eqref{eqn:I_plus} we have 
\begin{align}
I_n^{(j)} \geq \max \{ I_{n-1}^{(j)}, I_{n-m}^{(j)} \}, \quad j \in \mathbb{N}_{n-m}.
\end{align}
Therefore, the injection of $\hat{W}_{n-m}^{(j)}$ allows $W_{n}^{(j)}$ to be superior channel compared to $\hat{W}_{n-m}^{(j)}$ and $W_{n-1}^{(j)}$. This comes with the expense that now $W_n^{(j+N(n-1)}$ is an inferior channel compared to $\hat{W}_{n-m}^{(j)}$ and $W_{n-1}^{(j)}$  because, from  \eqref{eqn:I_minus}, one has 
\begin{align}
I_n^{(j+N(n-1))} \leq \min \{ I_{n-1}^{(j)}, I_{n-m}^{(j)} \}, \quad j \in \mathbb{N}_{n-m}.
\end{align}
Although $I_n^{(j)}$ and $I_n^{(j+N(n-1))}$ move away from $I_{n-1}^{(j)}$ and $I_{n-m}^{(j)}$,
the transformations preserve the symmetric capacity because, as  indicated by \eqref{eqn:I_sum}, we have 
\begin{align}
I_n^{(j)}+I_n^{(j+N{n-1})}=I_{n-1}^{(j)}+I_{n-m}^{(j)}, \quad j \in \mathbb{N}_{n-m}.
\end{align}
The remaining channels  $W_{n}^{(j)}$,  $ j \in  {\mathbb N}_{n-1} \setminus {\mathbb N}_{n-m}$, in Equation~\eqref{eqn:non_polarized_channels}, 
do not see any polarization transforms as their transition probabilities are scaled by $\Pr(y_{N(n-1)+1},\ldots,y_{N}) $ with respect to $W_{n-1}^{(j)}$. This scaling, in turn, results in 
\begin{align}
I_n^{(j)}=I_{n-1}^{(j)}, \quad j \in  {\mathbb N}_{n-1} \setminus {\mathbb N}_{n-m}.
\end{align}

All in all, the combining and splitting of $W_{n-1}$ and $W_{n-m}$ preserves the sum symmetric capacity as 
\begin{align}
\sum_{i \in \mathbb{N}_n}  I_n^{(i)} &= \sum_{j \in \mathbb{N}_{n-1} }  I_{n-1}^{(j)} +  \sum_{ k \in \mathbb{N}_{n-m} }  I_{n-m}^{(k)},  \label{eqn:Sum_I} 
\end{align}

\subsection{Decoding}
We will take successive cancellation decoding (SCD) of \cite{Arikan} as the default decoding method for $\{ {\mathscr C}_{n}^{(m)}\} $. The genie- aided decoder that we have explained in Section~\ref{sect:3}.B and the definition of $W_n^{(i)}$ as given by \eqref{eqn:split_channel_defn} already provide us a guideline for SCD. The only difference is, during the  calculation of  \eqref{eqn:split_channel_defn}, SCD uses its own estimates
for the vector $\textbf{u}_{n,b}^{(i)}$, which we denote as $\hat{ \textbf{u}}_{n,b}^{(i)}$.

Likelihood ratios (LRs) should be preferred in SCD so that one can eliminate the $P(y_{N_{n-1}+1}, y_{N_{n-1}+1},\ldots,y_{N_{n}})$ term in \eqref{eqn:non_polarized_channels}. The LR for the channel $W_n^{(i)}$ is defined as
\begin{align*}
L_n^{(i)} \defn \frac{ \sum_{\textbf{u}_{n,a}^{(i)} } \Pr\left( \textbf{y}_{N},\textbf{u}_{n,a}^{(i)} ,\hat{ \textbf{u}}_{n,b}^{(i)} |0 \right)}{ \sum_{\textbf{u}_{n,a}^{(i)} } \Pr\left( \textbf{y}_{N},\textbf{u}_{n,a}^{(i)} ,\hat{ \textbf{u}}_{n,b}^{(i)} |1 \right)}.
\end{align*}
By using the LR relations given in \cite{Arikan} for $\boxplus$ and $\boxminus$ transformations and from Proposition~\ref{prop:S_split_channels} we obtain
\begin{align}
\begin{split} \label{eqn:LR}
L_n^{(j)} &=  L_{n-1}^{(j)}  (L_{n-m}^{(j)})^{1-2\hat{u}_{n}^{(j+N_{n-1})}}, \\
L_n^{(j+N_{n-1})}  &=  \frac{ L_{n-1}^{(j)}  L_{n-m}^{(j)} +\! \!1 \!}
{L_{n-1}^{(j)}+L_{n-m}^{(j)}}, 
\end{split}
\!\!\!\!\!\!\!\!\!\!\!\!\!\!\!\!\!\!\!\!&j \in \mathbb{N}_{n-m}, \\ 
L_n^{(j)}&= L_{n-1}^{(j)}, &j \in \mathbb{N}_{n-1} \setminus \mathbb{N}_{n-1}.  \label{eqn:LR_sparse}
\end{align}
Therefore, while decoding $W_n^{(i)}$ one only needs to calculate $2N(n-m)$ LRs as given by \eqref{eqn:LR} while the remaining 
$N-N(n-m)$ LRs for \eqref{eqn:LR_sparse} are the same as the previous level. This fact can be exploited to avoid unnecessary decoding complexity in hardware implementation.

\subsection{Code-Length}
\! Recall that the code-length $N=N(n,m)$ obeys the recursion in \eqref{eqn:code_length_recursion} with initial conditions of \eqref{eqn:code_length_init}. It is easy to show that $N$ can be calculated as
\begin{align}
N =  \sum_{i=1}^m c_i (\rho_i)^n , \label{eqn:code_length_root_sum}
\end{align}
where each $\rho_i$, $i=1,2,\ldots,m$, is a root of the $m$th order polynomial equation 
\begin{align}
F(m, \rho)=\rho^m - \rho^{m-1}-1,\label{eqn:roots_of_N}
\end{align}
and constants, $c_i$, are calculated by using the initial conditions in \eqref{eqn:code_length_init} together with \eqref{eqn:code_length_root_sum}.
\begin{proposition}\label{prop:Dominating_root}
For $m \geq 1$, let $\phi \in (1,2]$ be a real root of $F(m, \rho)$. 
\begin{enumerate}  
\item  $\phi$ is unique, i.e., there is only one real root in $\in (1,2]$.
\item If $\rho_i \neq \phi$ we have $ \sqrt{\rho_i \rho_i^*}/\phi<1$  indicating $\phi$ is the the largest magnitude root of 
$F(m, \rho)$.
\item $\phi$ is decreasing in increasing $m$.
\end{enumerate}
\end{proposition}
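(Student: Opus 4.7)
The plan is to dispatch the three claims in order, exploiting strict monotonicity of $F(m,\cdot)$ on $(1,\infty)$ together with the triangle inequality applied to the defining equation $\rho^m = \rho^{m-1}+1$. For part (i), I would evaluate $F(m,1)=-1$ and $F(m,2)=2^{m-1}-1\geq 0$ and invoke the intermediate value theorem to produce a real root in $(1,2]$. Uniqueness follows by differentiating: $\partial_\rho F(m,\rho)=\rho^{m-2}(m\rho-(m-1))$ is strictly positive for $\rho>1$, so $F(m,\cdot)$ is strictly increasing on $(1,\infty)$ and the root there is unique.

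For part (ii), let $\rho_i$ be any complex root. Taking moduli in $\rho_i^m=\rho_i^{m-1}+1$ and applying the triangle inequality yields $|\rho_i|^m \leq |\rho_i|^{m-1}+1$, i.e.\ $F(m,|\rho_i|)\leq 0$. Since the only critical point of $F(m,\cdot)$ on $(0,\infty)$ sits at $(m-1)/m<1$ and $F(m,1)=-1<0$, the function is negative on $(0,1]$; combined with strict monotonicity on $(1,\infty)$ and $F(m,\phi)=0$, this forces $|\rho_i|\leq \phi$. For strict inequality when $\rho_i\neq \phi$, I would argue that if $|\rho_i|=\phi$ then the triangle inequality must be tight, which (since $\rho_i\neq 0$) forces $\rho_i^{m-1}$ and $1$ to share an argument, so $\rho_i^{m-1}$ is positive real. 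Then $\rho_i^m = \rho_i^{m-1}+1$ is also positive real, whence $\rho_i=\rho_i^m/\rho_i^{m-1}$ is positive real and equals $\phi$ by part (i), contradicting $\rho_i\neq \phi$. Therefore $\sqrt{\rho_i\rho_i^*}/\phi<1$.

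For part (iii), I would rewrite $F(m,\phi)=0$ as $\phi^{m-1}(\phi-1)=1$, write $\phi_m$ for the root at memory $m$, and argue by contradiction. If $\phi_{m+1}\geq \phi_m$, then since $\phi_m>1$ both maps $\phi\mapsto\phi^m$ and $\phi\mapsto\phi-1$ are positive and strictly increasing on $(1,\infty)$, giving $\phi_{m+1}^{m}(\phi_{m+1}-1)\geq \phi_m^{m}(\phi_m-1)>\phi_m^{m-1}(\phi_m-1)=1$, with the strict step using $\phi_m>1$. This contradicts $\phi_{m+1}^{m}(\phi_{m+1}-1)=1$, so $\phi_{m+1}<\phi_m$. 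The main delicate point is the tight-triangle analysis in part (ii); the other steps are routine calculus once one writes $F$ in the factored form $\phi^{m-1}(\phi-1)=1$.
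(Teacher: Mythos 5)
Your proposal is correct. For parts (i) and (iii) you take essentially the same monotonicity route as the paper (the paper's (iii) proves it the same way but states the conclusion with the inequality reversed, an apparent typo that your version avoids). Part (ii) is where you genuinely diverge: the paper multiplies the equation $\rho^{m-1}(\rho-1)=1$ by its complex-conjugate counterpart, reduces to a two-variable function $g(\sigma,\alpha)=\sigma^{2(m-1)}(\sigma^2-2\sigma\alpha+1)-1$, and argues via $\partial g/\partial\sigma>0$ and $\partial g/\partial\alpha<0$ that any root with $\alpha<1$ has $\sigma<\phi$. You instead take moduli directly and apply the triangle inequality to $\rho_i^m=\rho_i^{m-1}+1$, getting $F(m,|\rho_i|)\le 0$ and hence $|\rho_i|\le\phi$, with the equality case forcing $\rho_i^{m-1}$ to be a positive real and then $\rho_i=\phi$. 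Your route is shorter, avoids the paper's unexplained restriction $0\le\alpha<1$ (in fact $\alpha=\mathrm{Re}(\rho)/|\rho|$ can be negative, though the paper's conclusion survives because only $\alpha<1$ is used), and treats real and non-real roots in one stroke, whereas the paper's argument as written only addresses conjugate pairs of genuinely complex roots. The one thing to keep explicit in your write-up is that the tight-triangle case requires $\rho_i\ne 0$, which is immediate since $F(m,0)=-1\ne 0$.
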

Part \textit{ii} of the above proposition indicates that, as $n$ gets large, the summation in \eqref{eqn:code_length_root_sum} will be dominated by $\phi^n$ term therefore the code-length will scale as $N=\kappa\phi^n=O(\phi^n)$ where $\kappa>0$ is the constant scaler of $\phi^n$ in \eqref{eqn:code_length_root_sum}. Part \textit{iii} of Proposition~\ref{prop:Dominating_root} implies that  as $m$ increases the code-length increases less rapidly in $n$ which we have mentioned in the beginning of the paper.

\subsection{Code Construction}

The following proposition is a generalization of  \cite[Prop. 5]{Arikan} and it's proof is omitted.
\begin{proposition}\label{prop:S_split_channels_are_BEC}
If $W$ is a BEC, then $W_n^{(i)}$ channels obeying the transition probabilities as given by Proposition~\ref{prop:S_split_channels} are also BECs.
\end{proposition}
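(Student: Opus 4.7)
The plan is to argue by induction on the combining level $n$, invoking the recursive channel decomposition of Proposition~\ref{prop:S_split_channels}. For the base case, $W_0^{(1)}=W_{-1}^{(1)}=\ldots=W_{1-m}^{(1)}=W$ is a BEC by hypothesis. For the inductive step, I would assume that every bit-channel at all previous levels $0,-1,\ldots,n-1$ (and in particular $W_{n-1}^{(j)}$ and $\hat{W}_{n-m}^{(j)}$) is a BEC and show the same holds at level $n$. Proposition~\ref{prop:S_split_channels} partitions the $W_n^{(i)}$'s into three types, so it suffices to check that each of the three transformations preserves the BEC property.

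For the two polarized branches $W_n^{(j)}=\hat{W}_{n-m}^{(j)}\boxplus W_{n-1}^{(j)}$ and $W_n^{(j+N(n-1))}=\hat{W}_{n-m}^{(j)}\boxminus W_{n-1}^{(j)}$ with $j\in\mathbb{N}_{n-m}$, the conclusion follows directly from Arıkan's result \cite[Prop.~5]{Arikan} that the $\boxplus$ and $\boxminus$ transforms send BECs to BECs. As noted in the paper after \eqref{eqn:Z_minus}, these transformations and their Bhattacharyya relations extend to the case of two different channels, so the fact that $W_{n-1}^{(j)}$ and $\hat{W}_{n-m}^{(j)}$ need not have the same erasure probability is not an issue: if the two input erasure probabilities are $\epsilon_1$ and $\epsilon_2$, then the outputs are BECs with erasure probabilities $\epsilon_1\epsilon_2$ and $\epsilon_1+\epsilon_2-\epsilon_1\epsilon_2$, respectively.

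For the ``non-polarized'' branches $W_n^{(j)}=\gamma(n)\,W_{n-1}^{(j)}$ with $j\in\mathbb{N}_{n-1}\setminus\mathbb{N}_{n-m}$, I would argue directly from the definition of a BEC in terms of transition probabilities. The scaling factor $\gamma(n)=\Pr(y_{N(n-1)+1},\ldots,y_N)$ is the joint probability of the auxiliary outputs produced by $\hat{W}_{n-m}$, which by the indexing $j\in\mathbb{N}_{n-1}\setminus\mathbb{N}_{n-m}$ do not carry $u_n^{(j)}$ through any XOR. Hence for every output symbol $\tilde y=(\textbf{y}_N,\textbf{u}_{n,b}^{(j)})$, we have $W_n^{(j)}(\tilde y\,|\,u)=\gamma(n)\,W_{n-1}^{(j)}(y'|u)$ where $y'$ is the portion of $\tilde y$ relevant to $W_{n-1}^{(j)}$, and the multiplicative factor $\gamma(n)$ is independent of $u$. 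Consequently, each output $\tilde y$ falls into one of the three BEC output categories (clear $0$, clear $1$, or erasure) exactly according to the category of the corresponding $y'$ under $W_{n-1}^{(j)}$, with the erasure probability preserved. Thus $W_n^{(j)}$ is a BEC with the same erasure probability as $W_{n-1}^{(j)}$.

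Putting the three cases together closes the induction, so every $W_n^{(i)}$ is a BEC. The main (minor) obstacle is the scaling case: one must be careful to interpret $\gamma(n)$ as a probability over extra outputs independent of the input rather than a generic rescaling, so that the partitioning of outputs into erasures and non-erasures is inherited unchanged from $W_{n-1}^{(j)}$.
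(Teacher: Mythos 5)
The paper explicitly omits the proof of this proposition, describing it only as a generalization of \cite[Prop.~5]{Arikan}, and your inductive argument is exactly the natural way to fill that gap: the base case plus the observation that $\boxplus$ and $\boxminus$ send pairs of BECs to BECs handles the polarized branches just as in Arıkan's original proof, and your treatment of the $\gamma(n)$-scaling case --- recognizing $\gamma(n)$ as a probability over auxiliary outputs independent of $u_n^{(j)}$, so that the output partition into clear/erasure symbols is inherited from $W_{n-1}^{(j)}$ with the erasure probability unchanged --- is the one genuinely new case this proposition introduces, and you handle it correctly. Your proof is correct and is the approach the paper implicitly intends.
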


In order to use $\{ {\mathscr C}_{n}^{(m)}\}$ one has to fix a code parameter vector $(W,N,K,{\cal A})$, where $W$ is the underlying B-DMC, $N$ is the code-length, $K$ is the dimensionality of the code, and ${\cal A} \subseteq {\mathbb N}_n$ is the set of information carrying symbols.  We have ${|{\cal A} |}=K$ and $K/N=R$, where $R \in [0,1]$ is the rate of the code. 

Let $P_{e,n}^{(i)}$, $ i \in {\mathbb N}_n$, denote the bit-error probability of $W_n^{(i)}$ with SCD. Code construction problem is choosing the set ${{\cal A} }$ so that $\sum_{i \in {\cal A}} P_{e,n}^{(i)}$ is minimum. This problem can be analytically solved only when $W$ is a BEC \cite{Arikan} since for this case the $W_n^{(i)}$ channels are also BECs  (Proposition \ref{prop:S_split_channels_are_BEC}) and
the Bhattacaryya parameters of $W_n^{(i)}$, which we denote as $Z_n^{(i)}$, obey $P_{e,n}^{(i)}=Z_n^{(i)}$. In this case, in the light of \eqref{eqn:Z_plus}-\eqref{eqn:Z_minus} and Proposition~\ref{prop:S_split_channels}, $Z_n^{(i)}$ terms can be recursively calculated as
\begin{align*}
\begin{split}
Z_n^{(j)}  &= Z_{n-1}^{(j)} Z_{n-m}^{(j)}, \\
Z_n^{(j+N_{n-1})}  &= Z_{n-1}^{(j)}+ Z_{n-m}^{(j)} -Z_{n-1}^{(j)}+Z_{n-m}^{(j)},
\end{split} \quad  j \in \mathbb{N}_{n-m}, \\
Z_n^{(j)} &= Z_{n-1}^{(j)} \quad \quad \quad \quad \quad   j \in \mathbb{N}_{n-1} \setminus \mathbb{N}_{n-1}. 
\end{align*}
The case when $W$ is not a BEC is a well-studied problem, where one approximates a suitable reliability measure for $W_n^{(i)}$ channels and uses this measure to choose the set ${\cal A}$. We refer the reader to \cite{Tal_construction} for an overview.

\section{Channel Polarization}\label{sect:4}
Channel polarization should be investigated by observing the evolution of the set $ \{ W_n^{(i)}: i \in \mathbb{N}_n \}$ as $n$ increases. To track this evolution we use the state vectors $\textbf{s}_n^{(i)} \in {\cal S}_n$ assigned to $ W_n^{(i)}$ because each $W_n^{(i)}$ is uniquely addressable by its $\textbf{s}_n^{(i)}$.

\subsection{Probabilistic Model for Channel Evolution}
We define a random process $\{ S_n\}$ and a random vector $\textbf{S}_n =(S_1,S_2,\ldots,S_n)$ obtained from the process $\{ S_n\}$ where the state vectors, $\textbf{s}_n=(s_1,s_2,\ldots,s_n)$, $\textbf{s}_n \in {\cal S}_n$, of Section~\ref{sect:2}, are the realizations of $\textbf{S}_n$. The process $\{ S_n\}$ can be regarded as a tree process where $\textbf{s}_n$ form the branches of the tree where we illustrate it in Fig.~\ref{fig:tree_process} for the case $m=2$. Since $|{\cal S}_n|=N=N(n)$, there are $N(n)$ different branches at tree level $n$. The process $\{ S_{n} \}$ starts with the initial conditions $S_1 \in \{+,-\}$. At tree level $n$, $N(n)$ new branches emerge from $N(n-1)$ branches of level $n-1$. We assume that each branch is observed with identical probability
\begin{align}\label{eqn:state_prob}
\Pr (\textbf{S}_n=\textbf{s}_n)=\frac{1}{N(n)}.
\end{align}
This, in turn, implies that each valid state transition of Fig.~\ref{fig:S_state_transition}, between $s_{n-1}$ and $s_{n}$, has probability $N(n-1)/{N(n)}$. Investigating this figure, consider the case $m=1$, which coincides with Ar\i kan's setup in \cite{Arikan}, where there are two possible states as $S_n \in \{+,-\}$ and 
$|{\cal S}_n|=N(n)=2^n$. Since transitions between $S_{n-1}$
and $S_n$ are valid if $S_{n} \in \{+,-\}$ and $S_{n-1}\in \{+,-\}$,  each possible transition has probability 
$N(n-1)/N(n)=1/2$. Consequently, the process $\{ S_n\}$ is composed of independent realizations of Bernoulli$(1/2)$ random variables as $\Pr({S_n=+})=\Pr({S_n=-})=1/2$. On the other hand, when $m>1$, there exists a memory in the state transition model as depicted in Fig.~\ref{fig:S_state_transition}. Therefore, the process 
$\{ S_n \}$ can be modeled as a Markov process with order $m-1$ in the sense that
\begin{align*}
&\Pr (S_n|\textbf{S}_{n-1}) =  \Pr (S_n|S_{n-1},S_{n-2},\ldots,S_{n-(m-1)}).
\end{align*}
Throughout the paper we find it easier to  work with the random vector $\textbf{S}_n$ keeping in mind the Markovian property of the process $\{ S_n \}$.

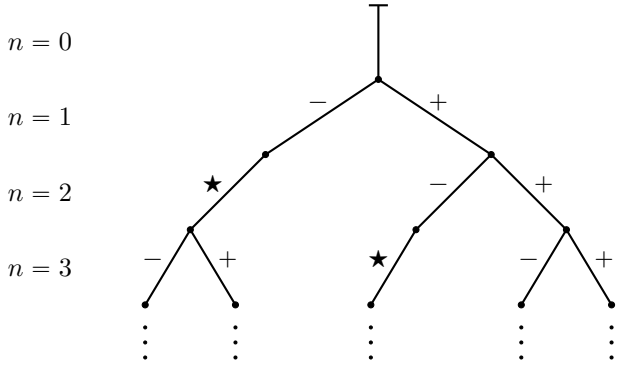
\begin{figure}[tb]
\begin{center}

\begin{tikzpicture}

\draw [thick] [|-] (0,1) -- (0,0);
\node [] at (-4.5,0.5) {$n=0$};
\node [] at (-4.5,-.5) {$n=1$};
\node [] at (-4.5,-1.5) {$n=2$};
\node [] at (-4.5,-2.5) {$n=3$};

\draw [fill] (0,0) circle [radius=0.04];

\draw [thick] [-] (0,0) -- (1.5,-1);
\node [] at (0.8,-0.3) {$+$};
\draw [fill] (1.5,-1) circle [radius=0.04];

\draw [thick] [-] (0,0) -- (-1.5,-1);
\node [] at (-0.8,-0.3) {$-$};
\draw [fill] (-1.5,-1) circle [radius=0.04];
\draw [thick] [-] (1.5,-1) -- (0.5,-2);
\node [] at (0.8,-1.4) {$-$};
\draw [fill] (0.5,-2) circle [radius=0.04];

\draw [thick] [-] (1.5,-1) -- (2.5,-2);
\node [] at (2.2,-1.4) {$+$};
\draw [fill] (2.5,-2) circle [radius=0.04];

\draw [thick] [-] (-1.5,-1) -- (-2.5,-2);
\node [] at (-2.2,-1.4) {$\bigstar$};
\draw [fill] (-2.5,-2) circle [radius=0.04];
\draw [thick] [-] (0.5,-2) -- (-0.1,-3);
\node [] at (0,-2.4) {$\bigstar$};
\draw [fill] (-0.1,-3) circle [radius=0.04];

\draw [thick] [-] (2.5,-2) -- (3.1,-3);
\node [] at (3,-2.4) {$+$};
\draw [fill] (3.1,-3) circle [radius=0.04];

\draw [thick] [-] (2.5,-2) -- (1.9,-3);
\node [] at (2,-2.4) {$-$};
\draw [fill] (1.9,-3) circle [radius=0.04];

\draw [thick] [-] (-2.5,-2) -- (-3.1,-3);
\node [] at (-3,-2.4) {$-$};
\draw [fill] (-3.1,-3) circle [radius=0.04];

\draw [thick] [-] (-2.5,-2) -- (-1.9,-3);
\node [] at (-2,-2.4) {$+$};
\draw [fill] (-1.9,-3) circle [radius=0.04];

\draw [fill] (-0.1,-3.3) circle [radius=0.02];
\draw [fill] (-0.1,-3.5) circle [radius=0.02];
\draw [fill] (-0.1,-3.7) circle [radius=0.02];

\draw [fill] (3.1,-3.3) circle [radius=0.02];
\draw [fill] (3.1,-3.5) circle [radius=0.02];
\draw [fill] (3.1,-3.7) circle [radius=0.02];

\draw [fill] (-1.9,-3.3) circle [radius=0.02];
\draw [fill] (-1.9,-3.5) circle [radius=0.02];
\draw [fill] (-1.9,-3.7) circle [radius=0.02];

\draw [fill] (1.9,-3.3) circle [radius=0.02];
\draw [fill] (1.9,-3.5) circle [radius=0.02];
\draw [fill] (1.9,-3.7) circle [radius=0.02];

\draw [fill] (-3.1,-3.3) circle [radius=0.02];
\draw [fill] (-3.1,-3.5) circle [radius=0.02];
\draw [fill] (-3.1,-3.7) circle [radius=0.02];

\end{tikzpicture}
\end{center}
\caption{Illustration of the evolution of $\{S_n\}$ as a tree for the case $m=2$, where each branch is a state vector $\textbf{s}_n \in {\cal S}_n$.}
\label{fig:tree_process}
\end{figure}

We define a random channel process $\{K_n\}$, driven by $\{ S_{n} \}$, as $K_n=W_{S_1,S_2,\ldots,S_n}$. The realizations of $K_n$ are $k_n= W_{s_1,s_2,\ldots,s_n}$ and they correspond to the binary-input channels, $W_n^{(i)}$, with state vectors $\textbf{s}_n=(s_1,s_2,\ldots,s_n) \in {\cal S}_n$.

In order to obtain a characterization for the process $\{K_n\}$ we fix $(s_1,s_2,\dots,s_{n-1})$ to be the state vector associated with $W_{n-1}^{(j)}$, $j \in \mathbb{N}_{n-m}$ and let  $k_{n-1}=W_{n-1}^{(j)}$. In the light of Proposition~\ref{prop:same_state_history}, we know that the state vector of $\hat{W}_{n-m}^{(j)}$ is $(s_1,s_2,\dots,s_{n-m})$ indicating $k_{n-m}=\hat{W}_{n-m}^{(j)}$. Investigating the operation of $\varphi_n: {\cal S}_{n-1} \rightarrow {\cal S}_n$ in Fig.~\ref{fig:S_state_labeling}, we observe that the state vectors of $W_{n}^{(j)}$ and $W_{n}^{(j+N_{n-1})}$ are $(s_1,s_2,\dots,s_{n-1},+)$ and $(s_1,s_2,\dots,s_{n-1},-)$, respectively. From Proposition~\ref{prop:S_split_channels} we notice that $W_{n}^{(j)}=\hat{W}_{n-m}^{(j)} \boxplus W_{n-1}^{(j)}$ and $W_{n}^{(j+N(n-1))}=\hat{W}_{n-m}^{(j)} \boxminus W_{n-1}^{(j)}$ holds. These observations, in turn, indicate $k_n=k_{n-1} \boxplus k_{n-m}$ holds when $s_n=+$, and $k_n=k_{n-1} \boxminus k_{n-m}$ holds when $s_n=-$. Next, we fix $(s_1,s_2,\dots,s_{n-1})$ to be the state vector associated with $W_{n-1}^{(j)}$, $j \in \mathbb{N}_{n-1} \setminus \mathbb{N}_{n-m}$ and hence $k_{n-1}=W_{n-1}^{(j)}$. From the operation of  $\varphi_n: {\cal S}_{n-1} \rightarrow {\cal S}_n$ we know that the state vector of $W_{n}^{(j)}$
is $(s_1,s_,\ldots,s_{n-1},\bigstar)$ and Proposition~\ref{prop:S_split_channels} tells us $W_{n}^{(j)}=\gamma(n)W_{n-1}^{(j)}$. Combining these
facts tells us $k_n=\gamma(n)k_{n-1}$ holds if $s_n=\bigstar$. The above analysis relates $k_n$ to $k_{n-1}$ and $k_{n-m}$ for all
$s_n \in \{+,-,\bigstar\}$, which we formally present with the below recursion.
\begin{align}\label{eqn:channel_evolution_model}
K_n = 
\begin{cases}
K_{n-m} \boxplus  K_{n-1} \quad \text{if}  \quad  S_n=+, \\
K_{n-m} \boxminus K_{n-1} \quad \text{if} \quad  S_n=- ,  \\
\gamma(n)K_{n-1}  \quad \text{otherwise},
\end{cases}
\end{align}
where $K_n=W$ for $n<1$. 

\subsection{Polarization:}
We define the processes $\{I_n: n \geq 1\}$ and $\{J_n : n \geq 1\}$ where  $I_n = I(K_n) \in [0,1]$ and $J_n=J(K_n) \in [0,1]$. In \cite{Arikan} Ar\i kan shows that $I_n$ converges to a random variable $I_{\infty}$ as $\Pr(I_{\infty}=1)=I(W) $  and $\Pr(I_{\infty}=0)=1-I(W)$. This result indicates that
the synthesized binary-input channels, $W_n^{(i)}$, either become error-free or useless. We will show that the same holds for polar codes with higher-order memory as well. This result is presented with the following theorem.
\begin{theorem}\label{thm:achieving_capacity}
For any fixed $m\geq 1$ and for some $\delta \in (0,1)$ as $n$ tends to infinity, the probability of $I_n\in (1-\delta,1]$ goes to $I(W)$ and the probability of having $I_n \in [0,\delta)$  goes to $1-I(W).$
\end{theorem}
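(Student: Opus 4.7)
My plan is to adapt the two-step structure of Ar\i kan's $m=1$ proof: (a) \emph{conservation}, $E[I_n]=I(W)$, and (b) \emph{polarization}, the fraction of $i$ with $I_n^{(i)}\in(\delta,1-\delta)$ tends to zero for every $\delta\in(0,1)$. Combined, (a) and (b) force the level-$n$ empirical distribution of $\{I_n^{(i)}\}$ to concentrate on $\{0,1\}$ with masses $1-I(W)$ and $I(W)$. Step~(a) is immediate from \eqref{eqn:Sum_I} and $N(n)=N(n-1)+N(n-m)$: dividing and inducting on $n$ with the base case $I_n=I(W)$ for $n\le 0$ yields $E[I_n]=I(W)$ for every $n\ge 1$.

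For step~(b) I would switch from $I$ to the cut-off rate $J$, which shares critical endpoints with $I$ ($J=0\Leftrightarrow I=0$ and $J=1\Leftrightarrow I=1$) and admits the super-additive inequality of Proposition~\ref{prop:R_0}, strict off $\{0,1\}$. Setting $T_n:=\sum_i J(W_n^{(i)})$, combining Proposition~\ref{prop:R_0} with Proposition~\ref{prop:S_split_channels}, and observing that $\bigstar$-transitions preserve $J$ (scaling transition probabilities by the constant $\gamma(n)$ leaves the Bhattacharyya parameter unchanged), one obtains
\begin{equation*}
T_n\;\ge\;T_{n-1}+T_{n-m},\qquad T_n\;\le\;N(n).
\end{equation*}
A uniform-continuity refinement of Proposition~\ref{prop:R_0} on the compact set $[\epsilon,1-\epsilon]^2$ provides $\delta(\epsilon)>0$ such that $T_n-T_{n-1}-T_{n-m}\ge\delta(\epsilon)\,G_n(\epsilon)$, where $G_n(\epsilon)$ is the number of indices $j\in\mathbb{N}_{n-m}$ for which both $J(W_{n-1}^{(j)})$ and $J(\hat W_{n-m}^{(j)})$ lie in $(\epsilon,1-\epsilon)$.

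Normalising by $\phi^n$ and using $\phi^{-1}+\phi^{-m}=1$ from Proposition~\ref{prop:Dominating_root}, the sequence $\tilde R_n:=(N(n)-T_n)/\phi^n\ge 0$ satisfies $\tilde R_n\le\phi^{-1}\tilde R_{n-1}+\phi^{-m}\tilde R_{n-m}$. I would then control it through the Lyapunov function $L_n:=\tilde R_n+\phi^{-m}\sum_{k=1}^{m-1}\tilde R_{n-k}$; a direct calculation using $1-\phi^{-m}=\phi^{-1}$ gives $L_n-L_{n-1}=-D_n/\phi^n\le 0$ where $D_n:=T_n-T_{n-1}-T_{n-m}\ge 0$. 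Since $L_n\ge 0$, telescoping shows $\sum_n D_n/\phi^n<\infty$, which together with $D_n\ge\delta(\epsilon)G_n(\epsilon)$ and $N(n)=\Theta(\phi^n)$ (Proposition~\ref{prop:Dominating_root}) forces $G_n(\epsilon)/N(n)\to 0$. A short bootstrap using Proposition~\ref{prop:same_state_history}---which identifies $\hat W_{n-m}^{(j)}$ with the level-$(n-m)$ ancestor of $W_{n-1}^{(j)}$---then promotes the vanishing of the ``polarizable-pair'' density $G_n/N(n)$ into vanishing of the fraction of level-$n$ indices with $J_n^{(i)}\in(\epsilon,1-\epsilon)$, which translates through~\eqref{eqn:R_0_def} to the analogous statement for $I_n^{(i)}$.

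The principal obstacle, relative to $m=1$, is the absence of a genuine bounded martingale: the level-$n$ laws are uniform on $\mathcal{S}_n$ but are \emph{not} consistent across $n$, so Doob does not apply. The polarization rate must therefore be extracted from the non-standard super-additive recurrence $T_n\ge T_{n-1}+T_{n-m}$ by the Lyapunov telescoping above, and Proposition~\ref{prop:Dominating_root} is the indispensable spectral input---the uniqueness and dominance of the root $\phi$ is exactly what makes $L_n$ monotone and the tail-series $\sum D_n/\phi^n$ finite. The secondary technical point is the bootstrap across the $m$-step memory window, which requires propagating the polarization inequality along chains of length $m$ rather than a single step, again leaning on Proposition~\ref{prop:same_state_history}.
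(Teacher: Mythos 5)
Your Lyapunov-function route is a genuinely different mechanism from the paper's. The paper works with the \emph{averaged} cut-off rate $E[J_n]$, proves the weighted recursion $E[J_n]\ge\mu E[J_{n-1}]+(1-\mu)E[J_{n-m}]$ (Lemma~\ref{lemma:sum_cut_off_eq}), and then extracts a \emph{decimated subsequence} $E[\hat J_k]=\min_{0\le i\le m-1}E[J_{km-i}]$ which Lemma~\ref{prop:min_symmetric_cut_off} shows is monotone; monotone convergence plus a Borel--Cantelli argument finish. You instead work with the unnormalised sum $T_n=\sum_i J(W_n^{(i)})$, derive the \emph{super-additive} recursion $T_n\ge T_{n-1}+T_{n-m}$, normalise by $\phi^n$ using $\phi^{-1}+\phi^{-m}=1$, and telescope the Lyapunov function $L_n=\tilde R_n+\phi^{-m}\sum_{k=1}^{m-1}\tilde R_{n-k}$. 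The identity $L_n-L_{n-1}=-D_n/\phi^n$ that you assert is correct (the coefficient of $\tilde R_{n-1}$ is $\phi^{-1}+\phi^{-m}-1=0$), $L_n\ge 0$ and $L_0<\infty$, so $\sum_n D_n/\phi^n<\infty$ follows. Your observation that $\bigstar$-transitions preserve $J$ is also correct: the scaling factor $\gamma(n)$ sums to one over the extra outputs, so $Z$ (hence $J$) is unchanged. This gives a cleaner, more quantitative route to ``the aggregate improvement is summable'' than the paper's decimation argument, and it exposes the role of $\phi$ transparently; it is a nice alternative that avoids the somewhat opaque $\min$-subsequence device.

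However, the step you dismiss as a ``short bootstrap'' is the crux of the proof for $m>1$, and as written there is a real gap. The summability of $\sum D_n/\phi^n$ forces $G_n(\epsilon)/N(n)\to 0$, where $G_n(\epsilon)$ counts indices $j\in\mathbb{N}_{n-m}$ for which \emph{both} $J(W_{n-1}^{(j)})$ and $J(\hat W_{n-m}^{(j)})$ lie in $(\epsilon,1-\epsilon)$. But a middle-valued $W_{n-1}^{(j)}$ paired with an already-polarised ancestor $\hat W_{n-m}^{(j)}$ escapes $G_n(\epsilon)$ entirely, yet still produces a middle-valued descendant: for example if $Z(\hat W_{n-m}^{(j)})=0$ then $Z(W_n^{(j)})=Z(W_{n-1}^{(j)})Z(\hat W_{n-m}^{(j)})=0$ is fine, but $Z(W_n^{(j+N(n-1))})\le Z(W_{n-1}^{(j)})$ and by conservation the cut-off rate of the minus child equals $J(W_{n-1}^{(j)})$, i.e.\ stays in the middle. (For $m=1$ this cannot happen because $\hat W_{n-1}^{(j)}$ \emph{is} $W_{n-1}^{(j)}$, so every middle channel is automatically a middle-middle pair; that is precisely what makes Ar\i kan's argument close.) Proposition~\ref{prop:same_state_history} tells you that $\hat W_{n-m}^{(j)}$ is the level-$(n-m)$ ancestor on the same path, but it does not by itself rule out persistent (middle, extreme) pairs: you would need an additional argument that, once $J_{n-m}$ has polarised, $J_{n-1}$ on the same path cannot remain in the middle indefinitely, and such an argument must propagate information across the $m$-step memory window and control the $-$/$\bigstar$ runs. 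The paper sidesteps this by arguing directly about limits of $J_n$ on paths where $S_n\in\{+,-\}$ occurs infinitely often and then invoking Borel--Cantelli; your framework would need an analogous path-wise argument, and the claim that it is ``short'' or ``secondary'' underestimates what is in fact the hardest part of extending polarization beyond $m=1$.
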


\begin{proof}
We investigate the polarization of $\{J_n \}$ towards $0$ and $1$ as it will imply the polarization of $\{I_n \}$ as well.
We write $E[J_n]=\sum_{ \textbf{s}_n} \Pr(\textbf{S}_n =\textbf{s}_n)J_n=\frac{1}{N(n)}\sum_{ \textbf{s}_n } J_n$ to denote the expected value of $J_n$ and $\{ E[J_n] :n \geq 1\}$ to denote the deterministic sequences obtained from $E[J_n]$. The following lemma will be crucial for the proof
\begin{lemma}\label{lemma:sum_cut_off_eq}
\begin{align}
\!\! E[J_n]  \geq \mu E[J_{n-1}] + \left(1-\mu \right) E[J_{n-m}] ,  \label{eqn:j_sum_1}
\end{align}
where $\mu=N(n-1)/N(n)$ and the above equality is achieved only if $J_{n-1} \in \{0,1\}$ or $J_{n-m}  \in \{0,1\}$ holds for all $S_n \in \{+,-\}$
\end{lemma}

We apply a decimation operation on the sequence $\{ E[J_n]  \}$ and obtain a subsequence $\{ E[ \hat{J}_k] : 
k=1,2,\dots, \lfloor n/m \rfloor \}$, where the decimation operation is performed as 
\begin{align}\label{eqn:decimated_process}
\! \! \! \!   E[ \hat{J}_k]  = \min_{i \in \{0,1,\ldots,m-1\}} \left \{ E[J_{km-i} ] \right \}. 
\end{align}
The elements of $\{ E[ \hat{J}_k] \}$ are obtained by choosing the minimum of $m$ consecutive and non-overlapping elements of
$\{ E[J_n] \}$.
\begin{lemma}\label{prop:min_symmetric_cut_off}
The sequence $\{ E[ \hat{J}_k] \}$ is monotonically increasing in the sense that
\begin{align*}
E[ \hat{J}_k]  \geq E[ \hat{J}_{k-1}] . 
\end{align*} 
\end{lemma}
We know that $E[ \hat{J}_k]$ is bounded in $[0,1]$ and since $\{ E[ \hat{J}_k] \}$ is monotonically increasing, from the monotone convergence theorem \cite[p. 21.]{Bartle} we conclude that there exists a unique limit for  $\{ E[ \hat{J}_k] \}$ in the sense that 
\begin{align}
\lim_{k \rightarrow \infty} E[ \hat{J}_k] = \sup \{ E[ \hat{J}_k] \}. \label{eqn:cut_off_limit}
\end{align} 
Next, we let $n=km-i$ in Lemma \ref{lemma:sum_cut_off_eq} to obtain
\begin{align}
 E[J_{km-i}]  &\geq \mu E[J_{km-(i+1)}] + (1-\mu) E[J_{(k-1)m-i}].  \label{eqn:cut_off_weighted}
\end{align}
We fix $i$ such that $E[J_{km-i}]= E[\hat{J}_{k}]$ is satisfied. For any choice of $i$ observe that $E[J_{(k-1)m-i)}] \geq E[\hat{J}_{k-1}]$ and 
$ E[J_{km-(i+1)}] \geq \min\{   E[\hat{J}_{k}], E[\hat{J}_{k-1}] \} \geq E[\hat{J}_{k-1}] $ hold. Using these results in  \eqref{eqn:cut_off_weighted} 
gives
\begin{align}
E[\hat{J}_{k}] \geq \mu E[\hat{J}_{k-1}] + (1-\mu)E[\hat{J}_{k-1}] \geq E[\hat{J}_{k-1}] 
\end{align}
Therefore, the monotonic increase in $E[\hat{J}_{k}]$ will continue until the inequality in Lemma \ref{lemma:sum_cut_off_eq} is achieved with equality. This fact, together with the convergence of $E[ \hat{J}_k]$, indicates that conditioned on the event $\{  S_n : S_n \in \{+,-\} \}$ either $\lim_{n \rightarrow \infty} J_{n-1} \in \{ 0,1 \}$ or $\lim_{n \rightarrow \infty} J_{n-m} \in \{ 0,1 \}$  holds, indicating 
\begin{align}
\lim_{n \rightarrow \infty} J_{n} \in \{ 0,1 \}, \quad        S_n \in \{+,-\}.
\end{align}
Investigating the operation of $\varphi_n: {\cal S}_{n-1} \rightarrow {\cal S}_n$ in  Fig.\ref{fig:S_state_labeling} we see that
\begin{align}
\Pr \left( S_{n} \in \{+,-\} \right) =\frac{2N(n-m)}{N(n)}\geq 0,
\end{align}
which implies that the event $ \{ S_n: S_{n-1} \in \{+,-\} \}$ occurs infinitely many times as $n \rightarrow \infty$ and $ \sum_{n \rightarrow \infty} \Pr \left( S_{n-1} \in \{+,-\} \right) $ diverges. Consequently, and by using the first Borel Contelli lemma \cite[p. 36]{Billingsley} we conclude that
\begin{align*}
\lim_{n \rightarrow \infty} \Pr (  J_n \in \{ 0,1 \} ) =1.
\end{align*}
One to one correspondence between $J_n$ and $I_n$ implies 
\begin{align*}
\lim_{n \rightarrow \infty} \Pr (  I_n \in \{ 0,1 \} )=1,
\end{align*}
and having $E [ I_n] = I(W)$ results in
\begin{align*}
\lim_{n \rightarrow \infty} \Pr (  I_n =1 )=I(W),
\end{align*}
and
\begin{align*}
\lim_{n \rightarrow \infty} \Pr (  I_n =0 )=1-I(W).
\end{align*}
which completes the proof.
\end{proof}

\subsection{A Typicality Result}
In this section we use the Method of Types to investigate the state vectors, $\textbf{s}_n$, obtained from the realizations of the process $\{S_n\}$. 
We let $s \in \{+,-,\bigstar \}$ and write $P_{\textbf{s}_n}^{(s)}$, $P_{\textbf{s}_n}^{(s)} \in [0,1]$, to denote the type (frequency) of $s$ in $\textbf{s}_n$ as
\begin{align*}
P_{\textbf{s}_n}^{(s)} =  \#(\textbf{s}_n|s)/n, 
\end{align*}
where $\#(\textbf{s}_n|s)$ denotes the number times the symbol $s$ occurs in $\textbf{s}_n$. Investigating the state transition diagram of
Fig.~\ref{fig:S_state_transition} we inspect that, as $n$ gets large, $P_{\textbf{s}_n}^{(\bigstar)}=(m-1)P_{\textbf{s}_n}^{(-)}$ holds because
each $-$ state in $\textbf{s}_n$ is followed by $m-1$ occurrences of state $\bigstar$. As the remaining states in $\textbf{s}_n$ will be $+$, we must have $P_{\textbf{s}_n}^{(+)}=1-mP_{\textbf{s}_n}^{(-)}$ indicating $P_{\textbf{s}_n}^{(+)} \in [0,1]$,  $P_{\textbf{s}_n}^{(-)} \in [0,\frac{1}{m}]$, and $P_{\textbf{s}_n}^{(\bigstar)} \in [0,\frac{m-1}{m}]$. As it tuns out, depending on $P_{\textbf{s}^n}^{(s)}$, not all realizations of $\{S_n\}$ are observed with the same probability. This is explained with the following theorem.
\begin{theorem}\label{thm:typical_freq}
As $n$ gets large, except for a vanishing fraction of $\textbf{s}_n \in {\cal S}_n$, and for some $\epsilon \in (0,1)$ we have
\begin{align*}
 |P_{\textbf{s}^n}^{(\minus)}-p^-|    & \leq \epsilon,  \\
 |P_{\textbf{s}^n}^{(\plus)}-p^+|  &  \leq \epsilon, \\
 |P_{\textbf{s}^n}^{(\bigstar)}-p^\bigstar|   & \leq \epsilon,
\end{align*}
where $p^- = \frac{\phi-1}{1+m(\phi-1)}$, $p^\bigstar=(m-1)p^-$  and $p^+=1-mp^-$.
\end{theorem}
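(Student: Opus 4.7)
The plan is to combine the block structure imposed by Proposition~\ref{prop:valid_states} with the Method of Types. From the state transition diagram of Fig.~\ref{fig:S_state_transition}, every $-$ inside a valid $\textbf{s}_n$ is followed by exactly $m-1$ copies of $\bigstar$ (with only a possibly truncated trailing block at position $n$). So, up to a bounded edge effect, each $\textbf{s}_n\in\mathcal{S}_n$ admits a unique decomposition into blocks drawn from $\{+,\,-\bigstar^{m-1}\}$. Writing $a$ for the number of $+$ blocks and $b$ for the number of $-\bigstar^{m-1}$ blocks, one has $a+mb=n+O(1)$, and the number of state vectors with profile $(a,b)$ equals $\binom{a+b}{b}$, obtained by choosing which of the $a+b$ block slots carry $-\bigstar^{m-1}$. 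The relative frequencies that appear in the theorem correspond to $P_{\textbf{s}_n}^{(+)}=a/n$, $P_{\textbf{s}_n}^{(-)}=b/n$, $P_{\textbf{s}_n}^{(\bigstar)}=(m-1)b/n$, so the three claimed bounds reduce to a single concentration statement for $b/n$.

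The next step is to identify the entropy-maximizing type. Setting $p^-=b/n$, $p^+=1-mp^-$, $t=p^++p^-$, Stirling gives
\begin{align*}
\tfrac{1}{n}\log_2 \binom{a+b}{b} \;\longrightarrow\; f(p^-) \defn t\,H\!\left(\tfrac{p^-}{t}\right),
\end{align*}
where $H$ is the binary entropy. Since $|\mathcal{S}_n|=N(n)=\Theta(\phi^n)$ by Proposition~\ref{prop:Dominating_root}, the maximum of $f$ over $p^-\in[0,1/m]$ must equal $\log_2\phi$. Differentiating, the first-order condition reduces to $(p^+)^m=t^{m-1}p^-$. Substituting the candidate value $p^-=(\phi-1)/(1+m(\phi-1))$ gives $p^+=1/(1+m(\phi-1))$ and $t=\phi/(1+m(\phi-1))$, and the condition collapses to $\phi^{m-1}(\phi-1)=1$, i.e.\ $F(m,\phi)=0$, which holds by definition of $\phi$. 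Strict concavity of $f$ (inherited from $H$) guarantees this critical point is the unique maximizer, confirming the claimed $(p^-,p^+,p^\bigstar)$.

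The concentration step is then the standard Method of Types argument. There are at most $O(n)$ admissible types $(a,b)$ satisfying $a+mb=n+O(1)$, and each contributes at most $2^{nf(p^-)}$ state vectors. For types with $|p^-\minus(\phi-1)/(1+m(\phi-1))|>\epsilon$, strict concavity yields $f(p^-)\leq \log_2\phi-\delta(\epsilon)$ for some $\delta(\epsilon)>0$. Dividing the total count of such vectors by $N(n)=\Theta(\phi^n)$ gives a fraction bounded by $\mathrm{poly}(n)\cdot 2^{-n\delta(\epsilon)}\to 0$. Since $p^+=1-mp^-$ and $p^\bigstar=(m-1)p^-$ are affine in $p^-$, the corresponding deviation bounds follow immediately.

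The main obstacle I foresee is handling the truncated trailing block: the last $-$ in $\textbf{s}_n$ may be followed by fewer than $m-1$ copies of $\bigstar$, so the block decomposition is almost—but not exactly—unique. However, this truncation affects at most $m-1$ symbols, shifts the counts $a,b$ by $O(1)$, and shifts the empirical type by $O(1/n)$, which is negligible relative to the $\epsilon$ tolerance in the theorem. Summing over the $m$ possible terminal configurations multiplies $\binom{a+b}{b}$ by at most a factor $m$, which is absorbed into the $\mathrm{poly}(n)$ prefactor without changing the rate function $f$.
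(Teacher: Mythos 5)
Your proof is correct and follows essentially the same route as the paper's: Method of Types applied to the block alphabet $\{+,\,{-}\bigstar^{m-1}\}$, the rate function $t\,H(p^-/t)=(1-(m-1)q)H\!\left(q/(1-(m-1)q)\right)$ (which is exactly the paper's $G(m,q)$ in Lemma~\ref{lemma:class_size_upper_bound}), the same first-order condition collapsing to $F(m,\phi)=0$ via Lemma~\ref{lemma:root_and_bound_duality}, and the same handling of the truncated trailing block (the paper's decomposition of ${\cal T}_n^{(q)}$ into ${\cal T}_n^{(a,q)}$ and ${\cal T}_n^{(b,q)}$ is precisely your edge-effect correction). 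Your inference of $\max f=\log_2\phi$ directly from $|\mathcal{S}_n|=\Theta(\phi^n)$, and your direct $\mathrm{poly}(n)\cdot 2^{-n\delta(\epsilon)}$ counting in place of the paper's Borel--Cantelli step (Proposition~\ref{prop:typical_set_S}), are slightly leaner, but these are local refinements within the same argument.
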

Therefore we can consider $p^+$, $p^-$ and $p^\bigstar$ as the frequencies of states $+$, $-$, and $\bigstar$, in $\textbf{s}_n$, respectively, that one typically observes as $n$ gets large.  

\vspace{5pt}
\textit{\textbf{Proof of Theorem \ref{thm:typical_freq} :}} 
The proof is based on the Method of Types \cite{Cover}. We let $q \in [0,1/m]$ and define
\begin{align}
{\cal T}_n^{(q)} =\{ \textbf{s}^n : P_{\textbf{s}^n}^{(\minus)}=q \} \label{eqn:typical_set_defn}.
\end{align}
${\cal T}_n^{(q)}$ is a type class and it consists of $\textbf{s}_n$ having $nq \in [0, n/m]$ occurrences of state $-$. For all $m\geq 1$, there are at most $n+1$ different such type classes. However, the number of all possible $\textbf{s}_n$, $|{\cal S}_n|$, increases exponentially in $n$ as $|{\cal S}_n|=N=O(\phi^n)$. The Method of Types ensures
the existence of a type class with exponentially many elements. Our aim is to find this type class. 
Recalling that each $\textbf{s}_n$ is observed with probability $1/N$, the probability of observing a given $\textbf{s}_n$ in ${\cal T}_n^{(q)}$ is 
\begin{align*}
\Pr \left( \textbf{s}_n \in {\cal T}^n_q  \right)  = \frac{| {\cal T}^n_q |}{N}.
\end{align*}
\begin{lemma}\label{lemma:class_size_upper_bound}
\begin{align}\label{eqn:size_type}
  |{\cal T}^n_q| < 2^{  n\left( G(m,q)+o(1) \right)   }.
\end{align}
where 
\begin{align*}
G(m,q) =  (1-(m-1)q)) H\left(\frac{q}{1-(m-1)q}\right),
\end{align*}
and $H$ is the binary entropy function.
\end{lemma}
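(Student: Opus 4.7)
\textit{Proof plan.} My approach is to exploit the rigid structure imposed by Proposition on valid states and the state transition diagram in Fig.~\ref{fig:S_state_transition}: every occurrence of the state $-$ must be followed by exactly $m-1$ consecutive occurrences of $\bigstar$, and every remaining position must be $+$. This lets me reduce the counting problem to counting arrangements of two kinds of ``blocks'' and then apply Stirling's approximation.

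First, I would formalize the block decomposition. I parse a typical $\textbf{s}_n \in {\cal T}_n^{(q)}$ from left to right, grouping each occurrence of $-$ together with the mandatory $m-1$ trailing $\bigstar$'s into a single \emph{minus-block} of length $m$, and treating each isolated $+$ as a \emph{plus-block} of length $1$. Since $\textbf{s}_n$ contains $nq$ copies of $-$, it contains exactly $nq$ minus-blocks, accounting for $mnq$ symbols. The remaining $n - mnq = n(1-mq)$ symbols must all be $+$'s, giving $n(1-mq)$ plus-blocks. The total number of blocks is therefore
\begin{align*}
B(n,q) = nq + n(1-mq) = n\bigl(1 - (m-1)q\bigr).
\end{align*}

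Next, I would count. A valid $\textbf{s}_n$ in ${\cal T}_n^{(q)}$ is determined by choosing which of the $B(n,q)$ block positions are occupied by a minus-block; thus
\begin{align*}
|{\cal T}_n^{(q)}| \;\leq\; \binom{B(n,q)}{nq} \;=\; \binom{n(1-(m-1)q)}{nq}.
\end{align*}
The inequality (rather than equality) accounts for boundary effects: a sequence that ends in the middle of a would-be minus-block contributes a vanishing correction, and one may also need to floor/ceiling $nq$ to an integer. These contribute only multiplicative polynomial factors in $n$, which are absorbed into the $o(1)$ in the exponent.

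Finally, I apply the standard Stirling bound $\binom{N}{pN} \leq 2^{N H(p) + o(N)}$ with $N = n(1-(m-1)q)$ and $p = nq/N = q/(1-(m-1)q)$ (note $p \in [0,1]$ for $q \in [0, 1/m]$, with $p=1$ precisely at $q=1/m$, matching the degenerate case where everything is a minus-block). This gives
\begin{align*}
|{\cal T}_n^{(q)}| \;\leq\; 2^{\,n(1-(m-1)q)\,H\!\left(\frac{q}{1-(m-1)q}\right) + o(n)} \;=\; 2^{\,n(G(m,q) + o(1))},
\end{align*}
as required. The only mildly delicate step is verifying that the boundary corrections (incomplete trailing block, integrality of $nq$) genuinely yield at most $\mathrm{poly}(n)$ multiplicative error so that they disappear into $o(1)$ in the exponent; this is routine but is the one place where care is needed.
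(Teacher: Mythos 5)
Your proof follows essentially the same route as the paper: parse each $-$ together with its $m-1$ trailing $\bigstar$'s into a length-$m$ minus-block, count arrangements of blocks with a binomial coefficient, and apply the standard bound $\binom{N}{pN}\le 2^{NH(p)+o(N)}$. This gives the same $G(m,q)$ in the exponent by exactly the same substitution $N=n(1-(m-1)q)$, $p=q/(1-(m-1)q)$.

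One point needs correcting, however. You assert $|{\cal T}_n^{(q)}|\le\binom{n(1-(m-1)q)}{nq}$ and explain the inequality as ``accounting for boundary effects.'' The direction is actually reversed: $\binom{n(1-(m-1)q)}{nq}$ counts exactly those $\textbf{s}_n$ in which every minus-block is complete, and the sequences whose final minus-block is truncated (because the last $-$ occurs within $m-1$ positions of the end) are \emph{additional}. So the binomial on its own is a lower bound, not an upper bound. The paper handles this by splitting ${\cal T}_n^{(q)}$ into the subset ending in $+$ (all blocks complete, exactly $\binom{n-(m-1)k}{k}$ sequences) and the complementary subset with a truncated final block, which it bounds above by $\binom{n+m-(m-1)k}{k}$ and then shows, via the elementary identity $\binom{d}{c}\le d\binom{d-1}{c}$, is at most a $(n+m)^m$ multiple of the clean term — a polynomial factor that disappears into the $o(1)$. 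Your instinct that the correction is polynomial is right and your conclusion stands, but as written the inequality you assert is false, so the argument needs the extra decomposition-and-bound step to be rigorous.
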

Investigating $G(m,q)$ we observe that it is a concave function of $q \in [0,1/m]$. We establish a similarity between $\frac{ \partial  G(m,q)}{\partial q}$ and $F(m,\rho)$ in \eqref{eqn:roots_of_N}. The following proposition is a direct consequence of this result.
\begin{lemma}\label{lemma:root_and_bound_duality}
The function $G(m,q)$ attains its maximum when  $q=p^-$ and its maximum value is 
\begin{align*}
G(m,p^-) =\! \log \phi.
\end{align*}
\end{lemma}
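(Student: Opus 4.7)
My plan is to locate the critical point of $G(m,q)$ on $q \in [0,1/m]$ by calculus, exhibit it as a reparametrization of the root $\phi$ of $F(m,\rho)$, and then evaluate $G$ at that point by exploiting the defining identity $\phi^m = \phi^{m-1}+1$. The concavity of $G$ in $q$ (stated in the paragraph preceding the lemma) guarantees that this critical point is a global maximum on $[0,1/m]$, so I only need to handle the two computations: locating the maximizer and evaluating $G$ there.

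For the first step, I would expand
\begin{align*}
G(m,q)=-q\log q-(1-mq)\log(1-mq)+(1-(m-1)q)\log(1-(m-1)q),
\end{align*}
differentiate termwise, and observe that the constant pieces cancel, leaving
\begin{align*}
\frac{\partial G}{\partial q}=-\log q+m\log(1-mq)-(m-1)\log(1-(m-1)q).
\end{align*}
Setting $\partial G/\partial q=0$ and exponentiating gives the algebraic critical-point equation $q(1-(m-1)q)^{m-1}=(1-mq)^m$. The key reparametrization is $q=(\rho-1)/(1+m(\rho-1))$, under which a short direct check yields $1-(m-1)q=\rho/(1+m(\rho-1))$ and $1-mq=1/(1+m(\rho-1))$. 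Substituting these into the critical-point equation, the common factor $(1+m(\rho-1))^{-m}$ cancels and one is left with $(\rho-1)\rho^{m-1}=1$, i.e.\ exactly $F(m,\rho)=0$. This is the promised similarity between $\partial G/\partial q$ and $F(m,\rho)$. By Proposition~\ref{prop:Dominating_root} the unique root of $F$ in $(1,2]$ is $\phi$, and the corresponding $q$ is $p^-=(\phi-1)/(1+m(\phi-1))$; boundary values $q=0$ and $q=1/m$ give $G=0$, so by concavity $p^-$ is the global maximizer on $[0,1/m]$.

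For the second step, I would set $\alpha=1+m(\phi-1)$ and use the clean forms $p^-=(\phi-1)/\alpha$, $1-mp^-=1/\alpha$, $1-(m-1)p^-=\phi/\alpha$. Substituting into $G$, the $\log\alpha$ contributions combine with coefficient $-(\phi-1)+1-\phi+\phi\cdot 0=0$ (more precisely $-(\phi-1)/\alpha+1/\alpha\cdot(-(-1))-\phi/\alpha$, which collapses to zero), leaving
\begin{align*}
G(m,p^-)=\frac{-(\phi-1)\log(\phi-1)+\phi\log\phi}{\alpha}.
\end{align*}
The final ingredient is the identity $\phi-1=\phi^{-(m-1)}$, obtained by rewriting $\phi^m=\phi^{m-1}+1$ as $\phi^{m-1}(\phi-1)=1$. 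Therefore $\log(\phi-1)=-(m-1)\log\phi$, and the numerator becomes $[(m-1)(\phi-1)+\phi]\log\phi=[1+m(\phi-1)]\log\phi=\alpha\log\phi$, giving $G(m,p^-)=\log\phi$ as claimed.

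The main obstacle, and the only step that is not routine, is spotting the change of variables $q\leftrightarrow\rho$ that turns the transcendental critical-point equation into the polynomial $F(m,\rho)$; once that linkage is made, both the identification of $p^-$ and the collapse of $G(m,p^-)$ to $\log\phi$ follow from the single algebraic identity $\phi^{m-1}(\phi-1)=1$. Everything else is bookkeeping with logarithms and an appeal to concavity plus the uniqueness statement in Proposition~\ref{prop:Dominating_root}.
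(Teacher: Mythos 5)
Your proof is correct and follows essentially the same route as the paper: differentiate $G$, exponentiate the critical-point equation, change variables so it becomes $F(m,\rho)=0$ (your $\rho$ is precisely the paper's $\eta = \tfrac{1-(m-1)q^*}{1-mq^*}$), invoke Proposition~\ref{prop:Dominating_root} and concavity, and then evaluate $G(m,p^-)$. The only cosmetic difference is in the final evaluation, where you substitute the explicit values $p^-=(\phi-1)/\alpha$, $1-mp^-=1/\alpha$, $1-(m-1)p^-=\phi/\alpha$ and use $\phi^{m-1}(\phi-1)=1$ directly, whereas the paper reuses the critical-point relation $\log\tfrac{q^*}{1-(m-1)q^*}=m\log\tfrac{1-mq^*}{1-(m-1)q^*}$ to collapse $G(m,q^*)$ to $\log\tfrac{1-(m-1)q^*}{1-mq^*}$; both are equivalent bookkeeping.
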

Consequently, for every ${\cal T}_n^{(q)}$ with $|q-p^-| > 0$ there exists a $D(q, p^-) > 0$ such 
that 
\begin{align*}
D(q,p^-) &  \defn G(m,p^-) - G(m,q), \\
                & = \log \phi - G(m,q). 		
\end{align*}
Using the above fact in $\eqref{eqn:size_type}$ results in
\begin{align*}
   |{\cal T}_n^{(q)}|  \leq \phi^n 2^{n \left(-D(q,p^-) +o(1) \right)}. 
\end{align*}
From the above result and the fact that $N=O(\phi^n)$ we obtain
\begin{align}
 \Pr (\textbf{s}_n \in {\cal T}_n^{(q)})   \leq 2^{-n\left(D(q,p^-)+o(1) \right)},  \label{eqn:pTk}
\end{align}
The above result shows that depending on $D(q,p^-)$, and in turn $q$, the probabilities of some type classes decay exponentially in $n$. The following proposition results from this fact.
\begin{proposition}\label{prop:typical_set_S}
As $n$ tends to infinity $D(q,p^-)$ converges to $0$ with probability $1$.
\end{proposition}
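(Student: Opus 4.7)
The plan is to combine the exponential type-class bound \eqref{eqn:pTk} with a polynomial union bound and invoke the first Borel--Cantelli lemma. First, I would fix $\epsilon > 0$ and define the set of atypical frequencies $Q_\epsilon = \{q \in [0,1/m] : D(q,p^-) > \epsilon\}$. Because $G(m,q)$ is concave with unique maximizer $p^-$ (Lemma~\ref{lemma:root_and_bound_duality}), every $q \in Q_\epsilon$ stays bounded away from $p^-$, and the inequality $D(q,p^-) \geq \epsilon$ holds uniformly on $Q_\epsilon$.

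Next, I would count the type classes. The class ${\cal T}_n^{(q)}$ is parametrised by the integer $nq \in \{0,1,\ldots,\lfloor n/m \rfloor\}$, so at most $n+1$ distinct type classes arise at level $n$. A union bound applied to \eqref{eqn:pTk} then yields
\[
\Pr\!\left(P_{\textbf{S}_n}^{(-)} \in Q_\epsilon\right) \;\leq\; (n+1)\, 2^{-n(\epsilon + o(1))},
\]
which is summable in $n$. The first Borel--Cantelli lemma therefore guarantees that with probability one the event $\{P_{\textbf{S}_n}^{(-)} \in Q_\epsilon\}$ occurs only finitely often, i.e.\ $\limsup_n D(P_{\textbf{S}_n}^{(-)}, p^-) \leq \epsilon$ almost surely. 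Selecting a countable sequence $\epsilon_k \downarrow 0$ and intersecting the resulting probability-one events yields $D(P_{\textbf{S}_n}^{(-)}, p^-) \to 0$ almost surely.

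The principal subtlety is ensuring that the $o(1)$ correction in the exponent of \eqref{eqn:pTk} is uniform in $q \in Q_\epsilon$; this is provided by Lemma~\ref{lemma:class_size_upper_bound}, whose Stirling-type error is $O((\log n)/n)$ independent of $q$, so one may in fact replace $(n+1)\,2^{-n(\epsilon+o(1))}$ by $(n+1)\,2^{-n\epsilon/2}$ for all sufficiently large $n$. A secondary technical point is that almost-sure convergence presupposes a single probability space housing the entire sequence $\{\textbf{S}_n\}_{n \geq 1}$; the order-$(m-1)$ Markov model for $\{S_n\}$ introduced earlier in this section supplies exactly this path-space measure, with marginals matching the uniform law \eqref{eqn:state_prob} on ${\cal S}_n$. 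Once these two points are settled, the Borel--Cantelli mechanism is routine.
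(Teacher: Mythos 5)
Your proposal follows exactly the same route as the paper: bound $\Pr\bigl(P_{\textbf{S}_n}^{(-)}\in{\cal T}_n^{(q)}\bigr)$ via \eqref{eqn:pTk}, absorb the at-most-$(n+1)$ type classes into a polynomial factor, observe summability, and apply the first Borel--Cantelli lemma. You are in fact slightly more careful than the paper at two points the paper glosses over---checking that the $o(1)$ correction from Lemma~\ref{lemma:class_size_upper_bound} is uniform in $q$, and passing from $\limsup_n D\leq\epsilon$ for each fixed $\epsilon$ to $D\to 0$ by intersecting over a countable sequence $\epsilon_k\downarrow 0$---but these are refinements of the same argument, not a different one.
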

The above proposition implies the convergence of $q$ to $p^-$ as well, because $D(q,p^-)$ is $0$ only if $q=p^-$. Therefore among all $T_n^{(q)}$, one observes the ones with $|q-p^-| \leq \epsilon$ with probability $1$.

\subsection{Rate of Polarization}

We define the Bhattacharyya process $\{Z_n\}$ where $Z_n=Z(K_n)$ is the  Bhattacharyya parameter of the random channel $K_n$.
By using the channel evolution model in \eqref{eqn:channel_evolution_model}, this process can be expressed as
\begin{align}\label{eqn:Bhat_Process}
 Z_n  \begin{cases} = Z_{n-1}Z_{n-m}  & \text{if $S_{n}=+$},\\
  			\leq Z_{n-1}+Z_{n-m}-Z_{n-1}Z_{n-m}  & \text{if $S_{n}=-$},\\
		           = Z_{n-1} & \text{otherwise},
  \end{cases}
\end{align}
where $Z_n=Z(W)$ for $n<1$.

\begin{theorem}\label{thm:error_prob_bound}
For any $\epsilon \in (0,1)$ there exists an $n$ such that for $\beta< p^+$ we have
\begin{align}
        \Pr \left( Z_n \leq 2^{-\phi^{n\beta}} \right) \geq I(W)-\epsilon,  \label{eqn:lower_bound}
\end{align}
\end{theorem}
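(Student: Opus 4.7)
The plan is to adapt Ar\i kan's two-phase rate-of-polarization argument to the memory-$m$ process $\{K_n\}$: first force $Z_{n_0}$ into the small-$Z$ regime on a set of probability approximately $I(W)$, then track the subsequent growth of $v_k\defn-\log Z_k$ by combining the typicality argument of Theorem~\ref{thm:typical_freq} with a Fibonacci-like compounding lemma.

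For the first phase, fix a threshold $\eta>0$. Using Theorem~\ref{thm:achieving_capacity} together with the almost-sure convergence of $\{Z_n\}$ implied by \eqref{eqn:Z_plus}--\eqref{eqn:Z_minus} applied to the random-channel model \eqref{eqn:channel_evolution_model}, one obtains $n_0=n_0(\eta,\epsilon)$ such that the event
\[
A_{n_0}\defn\{Z_{n_0-j}\leq\eta\ \text{for all}\ 0\leq j\leq m-1\}
\]
holds with probability at least $I(W)-\epsilon/2$. The full memory window is required because the recursion \eqref{eqn:Bhat_Process} reads back $m$ steps, so that a bound only on $Z_{n_0}$ would not control the interaction term $Z_{n_0+1-m}$ appearing in the next combining step.

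For the second phase, work on $A_{n_0}$ with $\eta$ chosen small. In this regime \eqref{eqn:Bhat_Process} simplifies to $v_k=v_{k-1}+v_{k-m}$ at a $+$ step, $v_k\geq\min(v_{k-1},v_{k-m})-1$ at a $-$ step, and $v_k=v_{k-1}$ at a $\bigstar$ step. A Method-of-Types argument along the lines of Theorem~\ref{thm:typical_freq}, now restricted to the tail $S_{n_0+1},\dots,S_n$, yields that with probability at least $1-\epsilon/2$ this tail carries at least $(n-n_0)(p^{+}-\delta)$ plus-states, for any prescribed $\delta>0$. Pairing this concentration with a growth lemma (namely, that once every coordinate of the memory window has crossed below $\eta$, each $+$ step compounds multiplicatively in the sense of the characteristic equation $F(m,\rho)=0$) lower-bounds $v_n$ by $\phi^{\,n(p^{+}-2\delta)}$, which gives $Z_n\leq 2^{-\phi^{n\beta}}$ for every $\beta<p^{+}$, as required.

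The main obstacle is precisely that growth lemma. Unlike Ar\i kan's $m=1$ setting, where a single $+$ instantly squares $Z$, here a $+$ at step $k$ multiplies $Z_{k-1}$ by $Z_{k-m}$, and $Z_{k-m}$ may still sit close to $1$ if the intervening window contains $-$'s (which collapse $v$ to the minimum of its two arguments) or $\bigstar$'s (which freeze $v$). The technical content is therefore to show that, once the entire memory window has crossed below $\eta$, the $-$ and $\bigstar$ occurrences cost only a bounded additive loss in $v$ per event, whereas the $p^{+}$-fraction of $+$'s drives a Fibonacci-like compounding whose dominant root is $\phi$. Combined with the concentration of the $+$-count and the first-phase guarantee, this yields the claimed exponent.
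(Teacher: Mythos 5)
Your two-phase plan (drive $Z$ below a threshold on a set of probability $\approx I(W)$, then exploit the Method-of-Types concentration of the $+$-count) matches the paper's proof structure, and your observation that the recursion reads back $m$ steps, so one should control an entire $m$-window $\{Z_{n_0-j}\}_{j=0}^{m-1}$, is a sensible refinement that the paper's Lemma~\ref{lemma:direct_bound} handles only implicitly. Your identification of the recursions for $v_k=-\log Z_k$ at $+$, $-$, and $\bigstar$ steps is also correct.

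The genuine gap is the asserted growth lemma. You claim that, once the memory window is uniformly small, the $-$ and $\bigstar$ occurrences cost ``only a bounded additive loss in $v$ per event.'' This is false: at a $-$ step you have $v_k\geq\min(v_{k-1},v_{k-m})-1$, and after a stretch of $+$'s the two arguments diverge (their ratio is on the order of $\phi^{m-1}$ under the Fibonacci-like recursion $v_k=v_{k-1}+v_{k-m}$), so a single $-$ collapses $v$ back to roughly $v_{k-m}-1$. The subsequent $m-1$ $\bigstar$'s then freeze $v$ at that collapsed value. The loss is multiplicative in $v$, not additive, and a naive per-event accounting in the interleaved order would yield an exponent strictly smaller than $p^{+}$ for $m>1$. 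You correctly flag this as ``the main obstacle,'' but the mechanism you sketch does not resolve it.

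The paper takes a different route precisely to avoid this. It first replaces $\{Z_n\}$ by a dominating process $\{\hat Z_n\}$ in which the $-$ inequality of \eqref{eqn:Bhat_Process} is frozen to an equality from some $n_0$ on, so that $\Pr(Z_n\leq f_n)\geq\Pr(\hat Z_n\leq f_n)$. It then argues (Lemma~\ref{lemma:direct_bound}) that, with the number of $+$'s in the tail fixed, $\hat Z_n$ is maximized by the single permutation that places every $\textbf{a}=(-,\bigstar,\dots,\bigstar)$ block \emph{before} every $+$. In that ordering the $\textbf{a}$-blocks act on a value of $\hat Z$ still close to $\zeta$, and their accumulated damage is absorbed into the multiplicative factor $\bigl((1-Z_{n_0})^{2}\phi^{\epsilon}\bigr)^{(n-n_0)}$, which is made $\geq 1$ by choosing $\zeta\leq 1-\phi^{-\epsilon/2}$; the entire $\gamma(n-n_0)$-long run of $+$'s then compounds $-\log\hat Z$ at rate $\phi$ unimpeded. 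Your proposal does not contain this rearrangement step or a substitute for it, and without it the interleaved accounting does not close. To repair the argument you would either need to adopt the worst-case permutation idea, or prove a sharper conservation law (e.g.\ a potential function on the $m$-window) showing that each $\textbf{a}$-block's multiplicative loss is exactly compensated by the accelerated doubling $v_k=2v_{k-1}$ that occurs at a $+$ following $m-1$ $\bigstar$'s.
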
 
\begin{proof}
We consider another process $\{ \hat{Z}_n \}$, driven by $\{ S_n \}$, so that for $i=1,2,\ldots,n_0$,
$n_0<n$, we have $\hat{Z}_i=Z_i$ and for $i>n_0$, $\hat{Z}_i$ obeys
 \begin{align}\label{eqn:Bhat_Process_1}
 \hat{Z}_i  =\begin{cases}  \hat{Z}_{i-1}\hat{Z}_{i-m}  & \text{if $S_{n}=+$},\\
  			  \hat{Z}_{i-1}+\hat{Z}_{i-m}-\hat{Z}_{i-1}\hat{Z}_{i-m}  & \text{if $S_{n}=-$},\\
		            \hat{Z}_{i-1} & \text{otherwise}.
  \end{cases}
\end{align}
Comparing \eqref{eqn:Bhat_Process} and \eqref{eqn:Bhat_Process_1} we observe that ${Z_n}$ is stochastically dominated by $\hat{Z}_n$ in the sense that for some $f_n \in (0,1)$, $\Pr(Z_n \leq f_n) \geq \Pr(\hat{Z}_n \leq f_n)$. For the proof it will suffice to show that $\Pr(\hat{Z}_n \leq f_n) \geq I(W)-\epsilon$ holds for $f_n = 2^{-\phi^{n\beta}}$ and $\beta<p_+$.

In \cite[Lemma  1]{Afser_Delic} authors derive an upper bound on $\hat{Z}_n$, for the case $m=1$, by using the frequency of state $+$ in 
the realizations of $\{ S_{n_0+1},S_{n_0+2},\ldots,S_{n} \}$ and the fact that $Z_{n_0}$ gets arbitrarily close to $0$, with probability $I(W)$,
when $n_0$ is large enough. Following lemma is a generalization of  this approach for arbitrary $m \geq 1$.

\begin{lemma}\label{lemma:direct_bound}
For some $\zeta \in (0,1)$ and $\gamma \in (0,1)$ define the events 
\begin{align*}
C_{n_0}(\zeta) &= \{ Z_{n_0} \leq \zeta \}, \\
D_{n_0}^n(\gamma) &= \{ \#\big( (S_{n_0+1},\ldots,S_{n})|+\big) \geq \gamma(n-n_0) \}.  
\end{align*}
We have
\begin{align*}
     \hat{Z}_n \leq 2^{-\phi^{(\gamma-\epsilon)(n-n_0)}}, \quad C_{n_0}(\zeta)  \cap D_{n_0}^n(\gamma). 
\end{align*}
\end{lemma}
From the convergence of $Z_n$ to $Z_{\infty}$ with probability $\Pr(Z_{\infty}=0)=I(W)$ we know that for any $\epsilon \in (0,1)$ there exist a fixed $n_0$ such that
\begin{align*}
\Pr ( C_{n_0}(\zeta) ) \geq I(W)-\epsilon.
\end{align*}
Next, from Theorem~\ref{thm:typical_freq}, we infer that when $m \ll n-n_0$
\begin{align}
\Pr (  D_{n_0}^n(\gamma) ) \geq 1-\epsilon, \quad \gamma \geq p^+-\epsilon
\end{align}  
holds. This results from the fact that the probability of observing $+$ in $\{ S_{n_0+1},\ldots,S_{n_0} \}$ approaches to $p^+$ when
$n-n_0$  is much larger than the memory, $m$, of the process $\{S_n\}$.

Choosing $n_0=n\epsilon$ and using the above results in lemma~\ref{lemma:direct_bound} gives
\begin{align*}
\Pr \left( \hat{Z}_n \leq 2^{-\phi^{n(p^+-2\epsilon)(1-\epsilon)}} \right) &\geq (1-\epsilon)(I(W)-\epsilon) \\
&\geq I(W)-\epsilon
\end{align*}
Since $\epsilon \in (0,1)$ can be chosen arbtirarily close to $0$, the above result indicates that
\begin{align*}
\Pr \left( \hat{Z}_n \leq 2^{-\phi^{n\beta}} \right)&\geq I(W)-\epsilon  
\end{align*}
holds for $\beta<p^+$.
\end{proof}
Let us analyze the implications of  Theorem \ref{thm:error_prob_bound} on the block-decoding error probability, $P_e$, of  $\{ \mathscr{C}_n^{(m)} \}$. It states that for $I(W)-\epsilon$ fraction of $W_n^{(i)}$ the corresponding Bhattacharyya parameters
will be bounded as $Z_n^{(i)} \leq 2^{-\phi^{n \beta }}$ for $\beta<p^+$. We have 
$P_e \leq \sum_{i=1}^NZ_n^{(i)} \leq N2^{-\phi^{n\beta}}=O(2^{-\phi^{n\beta}})$. Since the code-length of $\{ \mathscr{C}_n^{(m)} \}$ scales as $N=O(\phi^n)$ we also see that $P_e =O(2^{-N^\beta})$ holds for $\beta<p^+$.

The term $p^+$  is plotted in Fig.~\ref{fig:S_exponent_bounds} as a $m$ increases from $1$ to $50$. Investigating this figure  we see that $p^+$ equals to $0.5$ when $m=1$ which coincides with the bound for the exponent of polar codes presented by Ar\i kan and Telatar in  \cite{Arikan_rate}. As $m$ increases  from $1$ to $50$, $p^+$ and thus the achievable exponent decreases. The decrease is more steep for small values of $m$ and it becomes more monotone as $m$ increases.

 In order to fully characterize the asymptotic performance of $\{ \mathscr{C}_n^{(m)} \}$ one needs to provide a converse bound on $\beta$ which may be a difficult task. We believe that for the case $m>1$, the achievable $\beta$  for $\{ \mathscr{C}_n^{(m)} \}$ may show a dependency on the rate, $R \in [0,1]$, chosen for the code; a phenomenon that does not exist when $m=1$ 
(see \cite{urbanke_scaling_1}). In order explain our conjecture, consider the process $\{ \hat{Z}_n \} $ in \eqref{eqn:Bhat_Process_1} which we use
to obtain an achievable  bound on $\beta$ as $\beta<p^+$. Our proof is based on the observation that once the realizations of $\hat{Z}_{n_0}$ are sufficiently close to $0$, which happens with probability $I(W)$, the scaling of $Z_n$ is mostly determined by the number of occurrences of state $+$ in $\{ S_{n_0+1},S_{n_0+2},\ldots,S_{n} \}$.  From Theorem~\ref{thm:typical_freq} we know that one typically observes $(n-n_0)p^+$ occurrences of $+$ in $\{ S_{n_0+1},S_{n_0+2},\ldots,S_{n} \}$, therefore the value of  $\log Z_n$ decreases $(n-n_0)p^+$ times with the same speed as the code-length, $\log \hat{Z}_n = \log \hat{Z}_{n-1}+\log \hat{Z}_{n-m}$, scaling as $\log Z_n = - \phi^{(n-n_0)p^+} =- \phi^{n(1-\epsilon)p^+}$. This result in the achievable exponent 
$\beta<p^+$. However, when $m>1$ the value of $\log \hat{Z}_n$ may also decrease with a faster rate compared to that  of the code-length. To see this, consider the case  $(S_{n-1},S_{n-2},\ldots,S_{n-(m-1)})= ( \bigstar, \bigstar,\ldots,\bigstar )\}$ and $S_n=+$,  where we have  $\hat{Z}_{n-1}=\hat{Z}_{n-2}=\ldots=\hat{Z}_{n-(m-1)}$ and $ \log \hat{Z}_n=\log \hat{Z}_{n-1}+\log \hat{Z}_{n-m}=\log \hat{Z}_{n-1}^2$. Therefore,
there may be times where $\log Z_n$ decreases with a faster rate as $\log \hat{Z}_{n}=\log Z_{n-1}^2$ instead of $\log \hat{Z}_{n}=\log  \hat{Z}_{n-1}+\log \hat{Z}_{n-m}$ and this may result in a higher achievable $\beta$. In order to quantify this we need to know not only 
the number of times state $+$ occurs in $\{S_n\}$, but also the number of times a state $+$  in $\{S_n\}$ is preceded by $\bigstar$ states. Therefore, we need to refine Theorem~\ref{thm:typical_freq} in terms of the number of transitions between states $+$, $-$ and $\bigstar$, as well. This might be  a difficult but important problem whose solution will provide a full characterization of the asymptotic polarization performance of $\{ \mathscr{C}_n^{(m)} \}$ and we leave it as a future work.

\begin{figure}[tb]
\includegraphics[scale=0.55]{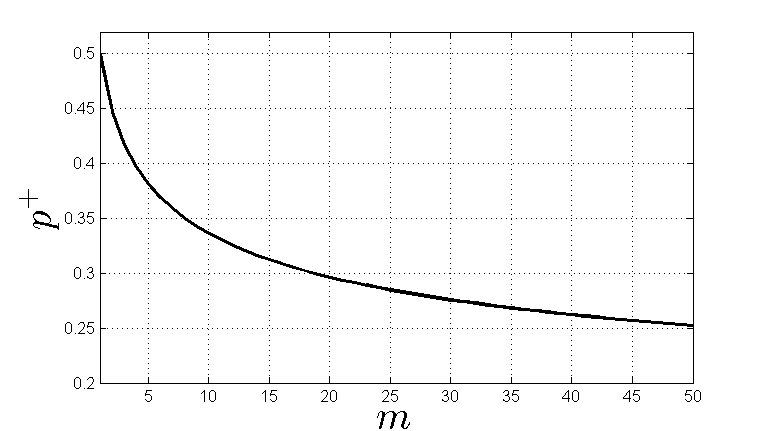}
 \caption{Achievable exponent, $\beta<p^+$, as scaled with $m$.}
\label{fig:S_exponent_bounds}
\end{figure}

\section{Complexity and Sparsity}\label{sect:5}

\subsection{Encoding and Decoding Complexity}
We consider a single core processor with random access memory and investigate the time complexity of encoding and decoding of $\{ \mathscr{C}_n^{(m)} \}$.  Let $\chi_{n}^{E}$ denote the complexity for encoding the information vector $\textbf{u}_{N}$ to encoded bits $\textbf{x}_{N}$. We take complexity of each XOR operation as $1$ unit. By inspection of Fig~\ref{fig:S_channel_combining}, we have
\begin{align}
\chi_{n}^{E} = \chi_{n-1}^{E} + \chi_{n-m}^{E}  + N_{n-m} \quad n,m \geq 1, \label{eqn:enc_complexity_recursion}
\end{align}
where $\chi_{1}^{E}=1$ and $\chi_{0}^{E}=\chi_{-1}^{E}=\ldots =\chi_{1-m}^{E}=0$.

Similarly, let $\chi_{n}^{D}$ denote the complexity for decoding the inputs of $W_n^{(i)}$ channels, where SCD is the decoding method. 
We take the complexity of computing the LR. relations in \eqref{eqn:LR} as $1$ unit. We observe that one does not make any operations to calculate the LR in \eqref{eqn:LR_sparse}. By inspection of Fig~\ref{fig:S_channel_combining}, we have
\begin{align}
\chi_{n}^{D} = \chi_{n-1}^{D} + \chi_{n-m}^{D}  + 2N_{n-m} \quad n,m \geq 1, \label{eqn:dec_complexity_recursion}
\end{align}
where $\chi_{0}^{D}=\chi_{-1}^{D}=\ldots =\chi_{1-m}^{D}=0$. 

The recursions in \eqref{eqn:enc_complexity_recursion} and \eqref{eqn:dec_complexity_recursion} are cumbersome to deal with. To observe the 
scaling behavior of $\chi_{n}^{E}$  and $\chi_{n}^{D}$ in $m$, we define  
\begin{align} \label{complex}
	 \eta^E \defn \frac{ \chi_{n}^{E} }{N \log N}, \quad \eta^D \defn \frac{ \chi_{n}^{D} }{N \log N},
\end{align}
and demonstrate the scaling of $\eta^E$ and $\eta^D$ in Fig~.\ref{fig:complexity}, where we have numerically calculated $\chi_{n}^{E}$ and $\chi_{n}^{D}$  as in \eqref{eqn:enc_complexity_recursion} and \eqref{eqn:dec_complexity_recursion} by choosing $N=O(\phi^n)$ to be the code-length closest to $10^4$ and $10^6$. From Fig.~\ref{fig:complexity} we observe that, there exist a decrease in $\eta_{n}^{E}$ and $\eta_{n}^{D}$ as  $m$ increases, where the decrease is more steep for small values of $m$ and it becomes more monotone as $m$ increases. This decrease in complexity, although not being orders of magnitude, is promising in showing the existence of polar codes requiring lower complexity. For example, from Fig.~\ref{fig:complexity} we observe that
$\eta_{n}^{D}$ is around $1/2$ when $m=12$. This indicates that the decoding complexity of $\{ {\mathscr C}_{n}^{(12)}\}$ is reduced by half compared to $\{ {\mathscr C}_{n}^{(1)} \}$  which is the polar code
presented by Ar\i kan in \cite{Arikan}. 

\begin{figure}[tb]
\begin{center}
\includegraphics[scale=0.4]{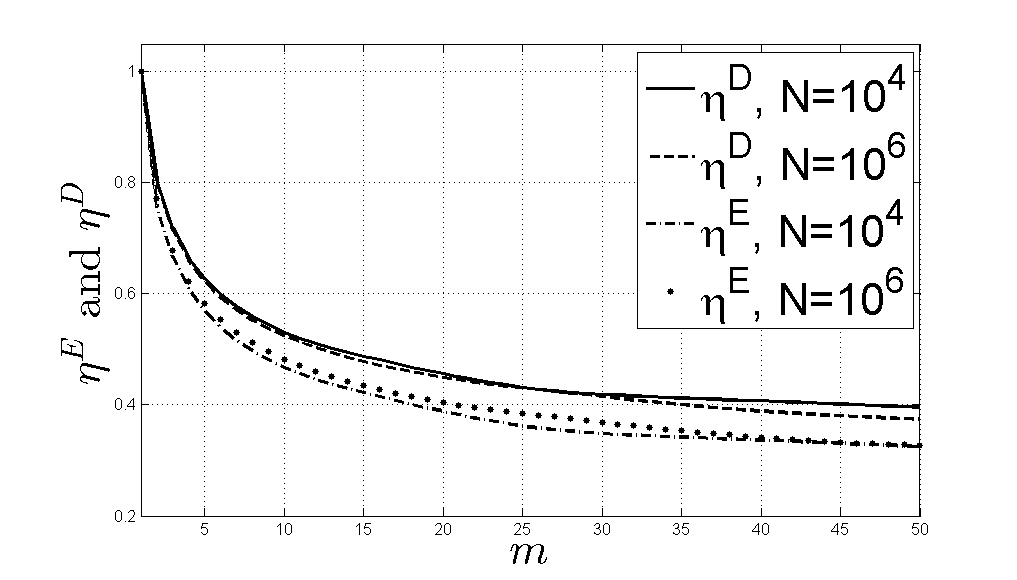}
\caption{Scaling of encoding and decoding complexities as $m$ increases where $N$ is chosen to be the code-length
closest to $1\times 10^4, 1\times 10^6$.}
\label{fig:complexity}
\end{center}
\end{figure}

\subsection{Sparsity}
As we have explained in Section \ref{sect:2}, there exist a sparsity in the channel combining process in the sense that at each combining level, the vector channel $W_n$ is obtained by combining $W_{n-1}$ and $\hat{W}_{n-m}$ which are obtained from $N(n-1)$ and $N(n-m)$ uses of underlying B-DMC, $W$, respectively. From Proposition \ref{prop:S_split_channels} we observe that the overall effect of channel combining and splitting is that, at each level $n$, there exist  $N(n-m)$ bit-channel pairs that participate in $\boxplus$ and $\boxminus$ transforms. As $m$ increases $N(n-m)$ decreases with respect to $N(n-1)$ implying the fraction of bit-channels participating in $\boxplus$ and $\boxminus$ transforms also decreases. On the other hand, as $m$ increases, the code-length increases less rapidly in $n$ because $N=O(\phi^n)$ and $\phi$ is decreasing in $m$, thus one can fit more channel combining and splitting levels within fixed code-length. A natural question is to understand the overall effect of increasing $m$ on the  total number of $\boxplus$ and $\boxminus$ transforms that one can obtain when the number of uses of $W$ channels is fixed.  The importance of $\chi_{n}^{D}$ in \eqref{eqn:dec_complexity_recursion} comes to play at this point because it gives us the total number of $\boxplus$ and $\boxminus$ transformation that are recursively applied to independent uses of $W$ channels to obtain the bit-channels in $W_n$. Consequently, one can view $\eta_{D}$ as a \textit{packing ratio} in the sense that one can pack $\eta_{n}^{D} {N \log N}$ recursive applications of $\boxplus$ and $\boxminus$ transformation to $N$ independent uses of $W$. Inspecting the scaling of $\eta_{D}$  in Fig.~\ref{fig:complexity} we observe that
this packing ratio is $1$ when $m=1$ and it decreases with increasing $m$, and this decrease manifests itself as a reduction in the decoding complexity of $\{ {\mathscr C}_{n}^{(m)}  \}$.

\section{Conclusion and Future Work}\label{sect:6}
We have introduced a method to design a class of code sequences $\{ {\mathscr C}_{n}^{(m)}; n \geq 1, m \geq 1\}$ with code-length
$N=O(\phi^n), \phi \in (1,2],$ and memory order $m$. The design of $\{ {\mathscr C}_{n}^{(m)} \}$ is based on the channel polarization idea of Ar\i kan \cite{Arikan} and $\{ {\mathscr C}_{n}^{(m)} \}$ coincides with the polar codes presented by Ar\i kan when $m=1$. We showed that 
$\{ {\mathscr C}_{n}^{(m)} \}$ achieves the symmetric capacity of arbitrary BDMCs for arbitrary but fixed $m$. We have obtained an achievable bound on the asymptotic polarization of performance of $\{ {\mathscr C}_{n}^{(m)}\}$ as scaled with $m$ and showed that the encoding and decoding complexities of $\{ {\mathscr C}_{n}^{(m)}\}$ decrease with increasing $m$. Our introduction of $\{ {\mathscr C}_{n}^{(m)}\}$ complements Ar\i kan's conjecture that channel polarization is a general phenomenon and it shows the existence of polar codes requiring lower complexity.  Future work will include a rate dependent analysis and a converse result on the asymptotic polarization performance of $\{ {\mathscr C}_{n}^{(m)}\}$.

\bibliographystyle{IEEEtran}
\bibliography{myref}

\section{Appendix}

\subsection{Proof of Proposition~\ref{prop:R_0}}
We have $J(W^-) = \frac{2}{1+Z(W^-)}$  and $J(W^+) = \frac{2}{1+Z(W^+)}$. By using \eqref{eqn:Z_minus} and \eqref{eqn:Z_plus} we obtain
\begin{align}
& J(W^+) + J(W^-)  \geq  \log \frac{2}{1+Z(W') Z(W'')} +  \notag \\
& \log \frac{2}{1+Z(W')+Z(W'')-Z(W') Z(W'')} \label{eqn:R_0_ineq_1} \\
& \! \! \!  = \log \frac{2}{1\! \! +\! \! Z(W') \! \! +\! \! Z(W'') + w(W',W'') Z(W') Z(W'')} \notag
\end{align}
where $w(W',W'')=Z(W')+Z(W'')-Z(W')Z(W'') \leq 1$ indicating
\begin{align}
 &\! \!  \! \!  \! \!  J(W^+) + \!  J(W^-)  \!  \geq  \!  \log \frac{2}{1+Z(W') } \!  + \!  \log \frac{2}{1+Z(W'')} \label{eqn:R_0_ineq_2}\\
& =J(W') + J(W'') \notag.  
\end{align}
In order to have $J(W^+) + J(W^-) = J(W') + J(W'')$, the equalities in  \eqref{eqn:R_0_ineq_1} and \eqref{eqn:R_0_ineq_2} must be achieved. From \eqref{eqn:Z_minus} we know that the equality in  \eqref{eqn:R_0_ineq_1}  is achieved only if $Z(W') \in \{0,1\}$ or $Z(W'') \in \{0,1\}$ or if $W'$ and $W''$ are BECs. When $(Z(W'),Z(W'')) \in (0,1)^2$ we have $w(W',W'') <1$ and the inequality in \eqref{eqn:R_0_ineq_2} is always strict, whether or not $W'$ and $W''$ being BECs. Consider the case $Z(W')=1$ or $Z(W'')=1$, then we have $w(W',W'')=1$ and the equalities  in \eqref{eqn:R_0_ineq_1} and \eqref{eqn:R_0_ineq_2} are achieved. When $Z(W')=0$ we have $J(W')=1$, $w(W',W'')=0$ and  $J(W^+) + J(W^-) = J(W')+J(W'')$, and the case $J(W')=1$ follows from the symmetry in \eqref{eqn:R_0_ineq_1} and \eqref{eqn:R_0_ineq_2}.
Hence the equalities in \eqref{eqn:R_0_ineq_1} and \eqref{eqn:R_0_ineq_2} are both achieved only if $Z(W') \in \{0,1\}$ or $Z(W'') \in \{0,1\}$, or
alternatively only if $J(W') \in \{0,1\}$ or $J(W'') \in \{0,1\}$.

\subsection{Proof of Proposition~\ref{prop:unique_states}}

From the operation of $\varphi_n$ in Defn.~\ref{defn:state_labeling} we obtain  ${\cal S}_1 = \{+,-\}$ such that $\textbf{s}_1^{(1)}=(+)$ and 
$\textbf{s}_1^{(2)}=(-)$, indicating $\textbf{s}_1^{(1)}$ and $\textbf{s}_1^{(2)}$ are unique. Proof is by induction, assume that
$s_{n-1}^{(j)} \in {\cal S}_{n-1}$ are unique. Let $j \in \mathbb{N}_{n-m}$ and consider $\textbf{s}_{n-1}^{(j)}$ to whom by appending $+$ and $-$ one obtains $\textbf{s}_{n}^{(j)}$ and  $\textbf{s}_{n}^{(j+N(n-1))}$, respectively, indicating $\textbf{s}_{n}^{(j+N(n-1))}$ and $\textbf{s}_{n}^{(j)}$ are different from each other. Next, let $j \in \mathbb{N}_{n-1}\setminus \mathbb{N}_{n-m} $ then $\textbf{s}_{n}^{(j)}$ are obtained by appending $\bigstar$ to $\textbf{s}_{n-1}^{(j)}$ which, by assumption, are unique. Combining the result we see that for all 
$j \in  \mathbb{N}_{n}$  the vectors $s_n^{(j)} \in {\cal S}_n$ are different from each other.

\subsection{Proof of Proposition~\ref{prop:same_state_history} }

Investigating Fig~\ref{fig:S_state_labeling} consider the operation of $\varphi_{n-1}$ where 
$s_{n-2}^{(k)}=(s_1,s_2,\ldots,s_{n-2})$, $k \in \mathbb{N}_{n-2}$, holds at level $n-1$. Next, consider the operation of $\varphi_{n-2}$ where 
one has $s_{n-3}^{(k)}=(s_1,s_2,\ldots,s_{n-3})$ for $k \in \mathbb{N}_{n-3}$. In turn and by induction through $\varphi_{n-2}, \varphi_{n-3},\ldots,\varphi_{n-(m-1)}$ we conclude that $s_{n-m}^{(j)}=(s_1,s_2,\ldots,s_{n-m})$, $j \in \mathbb{N}_{n-m}$.

\subsection{Proof of Proposition~\ref{prop:Dominating_root}}
\textit{i)} For $m>1$ we have $F(m,1)=-1<0$ and $F(m,2)=2^{m-1}-1 \geq 0$ so that there exists at least one real root in (1,2].  Proof is by contradiction, let $\rho_1, \rho_2 \in (1,2]$ be two real roots of $F(m,\rho)$ then from \eqref{eqn:roots_of_N} we have
\begin{align}
	{\rho}_1^{m-1}(\rho_1-1)=1 \label{eqn:real_root_1}, \\
	{\rho}_2^{m-1}(\rho_2-1)=1 \label{eqn:real_root_2}.
\end{align}
Let $\rho_1< \rho_2$, then $\rho_2^{m-1}>{\rho}_1^{m-1}$ and 
$\rho_2-1>\rho_1-1>0$ implying ${\rho}_2^{m-1}({\rho}_2-1)>1$ if ${\rho}_1^{m-1}(\rho_1-1)=1$ which contradicts  $\eqref{eqn:real_root_2}$, carrying a similar analysis for $\rho_1< \rho_2$ also contradicts $\eqref{eqn:real_root_2}$, which indicates $\rho_1=\rho_2=\phi$. 

\textit{ii)}   Assume that $\rho$ is a complex root of $F(m,\rho)$, with $\sqrt{\rho \rho^*}=\sigma> 1$ where $*$ denotes the conjugate operation. Since the coefficients of $F(m,\rho)$ are real, its complex roots must be in conjugate pairs. From \eqref{eqn:roots_of_N}
\begin{align*}
\rho^{m-1}(\rho-1)=1, \\
{\rho^*}^{m-1}(\rho^*-1)=1. 
\end{align*}
Multiplying the above equations we obtain
\begin{align}
\sigma ^{2(m-1)}(\sigma^2 -2Re(\rho) +1 )=1, \notag \\
\sigma ^{2(m-1)}(\sigma^2 -2\sigma \alpha +1 )=1, \label{eqn:root_eq_2}
\end{align}
where $0 \leq \alpha <1$. In turn for any $\rho$, $\sigma$ must be a root of 
\begin{align}\label{eqn:complex_root_ineq}
g(\sigma,\alpha)=\sigma ^{2(m-1)}(\sigma^2 -2\sigma \alpha +1 )-1,
\end{align}
Observe that when $\sigma$ is fixed $g(\sigma,\alpha)$ is decreasing in $\alpha$. We also have
\begin{align*}
\begin{split}
 \frac {\partial g(\sigma,\alpha)}{\partial \sigma} &={2(m-1)}\sigma^{2(m-1)-1}(\sigma^2 -2\sigma \alpha +1 ) \\ 
&\quad +\sigma^{2(m-1)}(2\sigma-2\alpha) 
\end{split}
\end{align*} 
From  \eqref{eqn:root_eq_2} observe that $(\sigma^2 -2\sigma \alpha +1 ) >0$, and since $(2\sigma -2\alpha)>0$ for
$\sigma > 1$ we have $ \frac {\partial g(\sigma,\alpha)}{\partial \sigma} >0$. This indicates that $g(\sigma,\alpha)$ is increasing with $\sigma$. But $\phi$ is a root of $g(\sigma,\alpha)$ with $\alpha=1$ and thus $g(\phi,1)=0$. Since $g(\sigma,\alpha)$ is decreasing in $\alpha$ we have $g(\phi,\alpha) \geq 0$  and $g(\sigma,\alpha) = 0$ is only achieved if $\sigma < \phi$ because $g(\sigma,\alpha)$ is increasing with $\sigma$.

\textit{iii)} Observe that for some $\rho \in (1,2]$ we have $ \frac{\partial F(m,\rho)}{\partial \rho }>0$ so that $F(m,\rho)$ is increasing in $\rho$ and when $\rho$ is fixed $F(m,\rho)$ is also increasing in $m$. Assume that $ \rho_1, \rho_2 \in (1,2]$ are real roots of $F(m_1,\rho)$ and $F(m_2,\rho)$, respectively, where $m_1,m_2\geq 1$. Then $f(m_1,\rho_1) < f(m_2,\rho_1)$ holds if $m_2 > m_1$ and $ f(m_1,\rho_1)  = f(m_2,\rho_2)=0$ is satisfied only if $\rho_1 < \rho_2$.

\subsection{Proof of Lemma~\ref{lemma:sum_cut_off_eq}}

Let $J_n^{(i)} =J (W_n^{(i)})$ denote symmetric cut-off rate of $W_n^{(i)}$. From Proposition~\ref{prop:S_split_channels} we know that
for $j \in \mathbb{N}_{n-m}$ we have $W_n^{(j)}=W_{n-1}^{(j)} \boxplus W_{n-m}^{(j)}$ and $W_n^{(j+N(n-1))}=W_{n-1}^{(j)} \boxminus W_{n-m}^{(j)}$. Proposition~\ref{prop:R_0} indicates that these transforms increase the sum cut-off rate as $J_n^{(j)}+J_n^{(j+N(n-1))} \geq J_{n-1}^{(j)}+J_{n-1}^{(j)}$ where the equality is achieved only if $J_{n-1}^{(j)} \in \{0,1\}$ or $J_{n-m}^{(j)} \in \{0,1\}$ holds. For $j \in \mathbb{N}_{n-1} \setminus \mathbb{N}_{n-m}$, from  Proposition~\ref{prop:S_split_channels}, we have $J_{n}^{(j)}=\gamma(n) J_{n-1}^{(j)}$
which implies $J_n^{(j)}=J_{n-1}^{(j)}$. Combining the above results gives
\begin{align*}
\sum_{i \in \mathbb{N}_n} J_n^{(i)} \geq \sum_{j \in \mathbb{N}_{n-1} } J_n^{(j)} + \sum_{k \in \mathbb{N}_{n-m}} J_n^{(k)},  
\end{align*}
where the equality is achieved only of  if $J_{n-1}^{(j)} \in \{0,1\}$ or $J_{n-m}^{(j)} \in \{0,1\}$ holds for all $j \in \mathbb{N}_{n-m}$. In the probabilistic domain of Section\ref{sect:4} the above result is equivalent to
\begin{align*}
\sum_{\textbf{s}_ n \in {\cal S}_n  } J_n \geq \sum_{\textbf{s}_ {n-1} \in {\cal S}_{n-1}  } J_{n-1} + \sum_{\textbf{s}_ {n-m} \in {\cal S}_{n-m}  } J_{n-m},  
\end{align*}
where the equality is achieved only of  if $J_{n-1} \in \{0,1\}$ or $J_{n-m} \in \{0,1\}$ holds for all $S_n  \in \{+,-\}$. Dividing both sides of the above inequality by $1/N(n)$  and using $ E[J_n]=\frac{1}{N(n)} \sum_{\textbf{s}_ n \in {\cal S}_n  } J_n$ we obtain 
\begin{align*}
E[J_n] \geq \frac{N(n-1)}{N(n)} E[J_{n-1}] + \frac{N(n-m)}{N(n)} E[J_{n-m}]. 
\end{align*}
Noticing $\frac{N(n-1)}{N(n)}=\mu(n)$ and $ \frac{N(n-m)}{N(n)}= 1-\mu(n)$ completes the proof.

\subsection{Proof of Lemma~\ref{prop:min_symmetric_cut_off}}
From \eqref{eqn:j_sum_1} we have
\begin{align}
 E[J_n]  \geq \mu E[J_{n-1}] + \left(1-\mu \right) E[J_{n-m}], \notag \\
              \geq \min \{ E[J_{n-1}] , E[J_{n-m}] \} \label{eqn:cutoff_min_ineq_2},
\end{align}
Let us define the set
\begin{align*}
{\cal E}_k^{(m)}\defn \{E_{km},E_{km-1},\ldots, E_{km-(m-1)}\}. 
\end{align*}
By definition in \eqref{eqn:decimated_process} we have we have $E[\hat{J}_k] = \min {\cal E}_k^{(m)}$.
Proof is by induction.  We use \eqref{eqn:cutoff_min_ineq_2} to upper bound the elements of  ${\cal E}_{k}^{(m)}$ with respect to 
$\min {\cal E}_{k-1}^{(m)}=E[\hat{J}_{k-1}]$. Let $n=km-(m-1)$ and use \eqref{eqn:cutoff_min_ineq_2} to obtain
\begin{align*} 
E_{km-(m-1)} &\geq \min \{ E_{(k-1)m},E_{(k-1)m-(m-1) }  \}, \\
		              &\geq \min {\cal E}_{k-1}^{(m)} 
\end{align*}
For $i=2,3,\ldots,m-1$ assume 
\begin{align*}
E_{km-(m-i)} \geq \min {\cal E}_{k-1}^{(m)}
\end{align*}
holds. Next, let $n=km-(m-(i+1))$ in \eqref{eqn:cutoff_min_ineq_2} to write
\begin{align*} 
E_{km-(m-(i+1)) } &\geq \min \{  E_{km-(m-i)} , E_{(k-1)m-(m-(i+1)) }  \}.
\end{align*}
By assumption $E_{km-(m-i)} \geq  \min {\cal E}_{k-1}^{(m)}$ and by definition $E_{(k-1)m-(m-(i+1))} \geq \min {\cal E}_{k-1}^{(m)}$ holds, 
indicating
\begin{align*} 
E_{km-(m-(i+1)) } &\geq\min {\cal E}_{k-1}^{(m)}.
\end{align*}
Combining the above results tells us for $i=1,2,\ldots,m$ we have $E_{km-(m-i)} \geq  \min {\cal E}_{k-1}^{(m)}=E[\hat{J}_{k-1}]$ which
indicates $E[\hat{J}_{k}] \geq E[\hat{J}_{k-1}]$.

\subsection{Proof of Lemma~\ref{lemma:class_size_upper_bound}}

In order to bound $| {\cal T}_n^{(q)}|$ we decompose ${\cal T}_n^{(q)}$ it into two different sets
\begin{align*} 
{\cal T}_n^{(a,q)}& \defn \left\{ \textbf{s}^n : P_{\textbf{s}^n}^{(\minus)}=q, s_n=+  \right\} , \\
{\cal T}_n^{(b,q)}& \defn \left\{ \textbf{s}^n : P_{\textbf{s}^n}^{(\minus)}=q, s_n \neq + \right\}
\end{align*}
and we have $T_n^{(q)}={\cal T}_n^{(a,q)} \cup {\cal T}_n^{(b,q)}$.
Recall that each state $-$ in $\textbf{s}_n$ is followed by $m-1$ occurrences of state $\bigstar$. In turn, 
${\cal T}_n^{(a,q)}$ consists of $\textbf{s}_n$ having $k=nq$, $0 \leq k \leq n/m$, occurrences of the vector 
$\textbf{a}=(-,\underbrace{\bigstar,\bigstar,\ldots,\bigstar}_{m-1 \text{ times}})$ and $n-km$ occurrences of state $+$. By combinatorial analysis we have
\begin{align*}
|{\cal T}_n^{(a,q)}| =  { n-(m-1)k \choose k }.
\end{align*}
${\cal T}_n^{(b,q)}$ consists of $k-1$ occurrences of the vector $\textbf{a}$, 
an occurrence of  $\textbf{b}=(-,\underbrace{0,0,\ldots,0}_{p  \text{ times}})$, $1 \leq  p< m-1$, and $n-mk-(p+1)$ occurrences of state $+$. 
The vector $\textbf{b}$ can only occur in the last $p+1$ entries  in $\textbf{s}_n$ and  it will be completed to a vector $\textbf{a}$  
if we had prolonged the channel combining operation $m-1-p \leq m$ more levels. Therefore
\begin{align*}
|{\cal T}_n^{(b,q)}| & \leq  { n+m-(m-1)k \choose k }.
\end{align*}
For some $c \in \mathbb{Z}$ and $d \in \mathbb{Z}$ with $c<d$ we have ${d \choose c }= \frac{d}{d-c} {d-1 \choose c } \leq d{d-1 \choose c }$, using this fact we obtain
\begin{align*}
     { n+m-(m-1)k \choose k } & \leq (n+m){ n+(m-1)-(m-1)k \choose k },\\
	      & < (n+m)^2{ n+(m-2)-(m-1)k \choose k } \\
	       &  \quad \quad \vdots		\\
	      & < (n+m)^m{ n-(m-1)k \choose k }	
\end{align*}
Then we have
\begin{align}
|{\cal T}_n^{(q)}| &=  |{\cal T}_n^{(a,q)}|+|{\cal T}_n^{(b,q)}|, \notag\\
	   &<\left( 1+(n+m)^m \right) { n-(m-1)k \choose k }, \notag	\\
           & <\left( 1+(n+m) \right)^m { n-(m-1)k \choose k },	\notag \\
              & =2^{nB(m,n)}{ n-(m-1)k \choose k } \label{eqn:type_class_upper_1},	
\end{align} 
where $B(m,n)=\frac{m \log(1+n+m)}{n}=o(1)$. Next, we use the upper bound ${ n \choose k} \leq 2^{nH(k/n)}$ in \cite{Cover} to upper bound ${ n-(m-1)k \choose k }$  as
\begin{align}
{ n-(m-1)k \choose k } &\leq 2^{n(1-(m-1)(k/n))H(\frac{(k/n)}{1-(m-1)(k/n)}}, \notag  \\
                                     &= 2^{nG(m,q)} \label{eqn:type_class_upper_2}. 
\end{align}
Combining \eqref{eqn:type_class_upper_1} and  \eqref{eqn:type_class_upper_2}we obtain the desired bound as $|T_n^{(q)}| < 2^{n(G(m,q)+B(m,n))}=2^{n(G(m,q)+o(1))}$.

\subsection{Proof of Lemma~\ref{lemma:root_and_bound_duality}}
We have
\begin{align*}
G(m,q)= (1-(m-1)q)H \left( \frac{(q)}{1-(m-1)q} \right).
\end{align*}
We know that, for $q \in [0,1/m]$, $H(\frac{q}{1-(m-1)q})$ is concave in $q$ and $(1-(m-1)q)$ is linear in $q$ indicating $G(m,q)$ is concave in $q$. Let $q*$ denote the maximizer of $G(m,q)$. The maximum of $H(\frac{q}{1-(m-1)q})$ occurs when $\frac{q}{1-(m-1)q}=\frac{1}{2}$ or equivalently when $q=\frac{1}{m+1}$ and since $(1-(m-1)q)$ is decreasing in $q$, we have  $q* \in[0,\frac{1}{m+1}]$. We next evaluate 
$\frac{\partial G(m,q)}{\partial q}$ 
\begin{align*}
\begin{split}
\frac{\partial G(m,q)}{\partial q} &= (m-1)\log (1-(m-1)q)  \\
&  \quad \quad +\log q - m \log (1-mq). 
\end{split}
\end{align*}
setting  $\frac{\partial G(m,q)}{\partial q}|_{q=q*}=0$ gives
\begin{align}
\! \! \!\!\!\! (m-1) \log (1-(m-1)q^*) +\!\! \log q^* = m \log (1-mq^*). \label{eqn:q_star}
\end{align}
Re-arranging the above equation we obtain
\begin{align}\label{eqn:prob_log_eq}
\begin{split}
&m \log \frac{(1-(m-1)q^*)}{1-mq^*} + \log \frac{q^*}{1-mq^*} \\
& \quad  = \log\frac{ (1-(m-1)q^*)}{1-mq^*}.
\end{split}
\end{align}
Let us use the following substitutions
\begin{align*}
\eta = \frac{1-(m-1)q^*}{1-mq^*}, \quad \eta-1 = \frac{q^*}{1-mq^*}.
\end{align*}
For $q* \in [0,\frac{1}{m+1}]$ we have $\eta \in [1,2]$. Using the above substitutions in \eqref{eqn:prob_log_eq} we obtain
\begin{align*}
m \log \eta+\log (\eta-1) = \log \eta,
\end{align*}
or alternatively
\begin{align*}
\eta^m(\eta-1) =\eta.
\end{align*}
Dividing both sides of the above relation by $\eta$ and re-arranging the terms we obtain
\begin{align}
\eta^m-\eta^{m-1}-1 =0.
\end{align}
But the above polynomial is same as \ref{eqn:roots_of_N}. Consequently from part \textit{i} of Proposition.~\ref{prop:Dominating_root} we conclude that  $\eta = \phi$ which indicates that $\frac{1-(m-1)q^*}{1-mq^*}=\phi$ and hence  $q^* =\frac{1}{1+m(\phi-1)}=p^-$.
Next we evaluate the maximum of $G(m,q)$ attained at $q=q^*$. 
\begin{align}
G(m,q^*) = -q^*\log \frac{q^*}{1-(m-1)q^*} +  \notag \\
(mq^*-1) \log \frac{1-mq^*}{1-(m-1)q^*} \label{eqn:G_m_1}
\quad \end{align}
Re-arranging \eqref{eqn:q_star} we observe that
\begin{align*}
\log \frac{q^*}{1-(m-1)q^*}=m\log \frac{1-mq^*}{1-(m-1)q^*}
\end{align*}
Using the above relation in \eqref{eqn:G_m_1} gives
\begin{align*}
G(m,q^*) = \log \frac{1-(m-1)q^*}{1-mq^*}=\log \phi.
\end{align*}

\subsection{Proof of Proposition~\ref{prop:typical_set_S} }
 We define a typical set ${\cal T}_n^{(q,\epsilon)}$ as
\begin{align*}
{\cal T}_n^{(q,\epsilon)} =\{ \textbf{s}_n : P_{s^n}^{(-)}= q, D(q,p^-) \leq \epsilon \}.
\end{align*}
The probability that ${\cal T}_n^{(q)}$ is not typical is 
\begin{align}
  1-\Pr( {\cal T}_n^{(q,\epsilon)})&=  \sum_{\Pr ( D(q,p^-)  >\epsilon )} \Pr ({\cal T}_n^{(q)}), \nonumber  \\
      & \stackrel{a}{\leq}   \sum_{\Pr(D(q,p^-)  >\epsilon)}    2^{-n\left(D(q,s_-)  +o(1) \right) } , \nonumber \\
      & \stackrel{}{\leq}    \sum_{\Pr (D(q,p^-)  >\epsilon)}    2^{-n\left(\epsilon +  o(1) \right) }, \nonumber \\
      &  \stackrel{b}{\leq}(n+1) 2^{-n\left(\epsilon+o(1)\right) }, \nonumber \\
      & = 2^{-n\left(\epsilon+o(1)\right)}  \label{eqn:typicality},
\end{align}
In the above derivation (a) follows from \eqref{eqn:pTk} and (b) follows from the fact that there exist at most $n+1$ different type classes having $\Pr (D(q,s_-)>\epsilon)$.
The above result indicates that $\sum_{n \rightarrow \infty} \Pr( D(q,s_-) \geq \epsilon ) $ converges, thus the expected number of the occurrences of the event  $D(q,s_-) >\epsilon$ for all $n$ is finite. By using the first Borel Cantelli Lemma \cite[p. 59]{Billingsley} we conclude that $ D(q,s_-)$ converges to $0$ with probability 1.

\subsection{Proof of Lemma~\ref{lemma:direct_bound} }

Conditioned on the event $D_{n_0}^n(\gamma) = \#\big( (s_{n_0+1},\ldots,s_{n})|+\big) \geq \gamma(n-n_0) $  there exists at least $\gamma(n-n_0)$ occurrences of state $+$ in
$\{ S_{n_0+1}, S_{n_0+2}, \ldots, S_{n} \}$. Investigating  \eqref{eqn:Bhat_Process_1}, we have $\hat{Z}_n \leq \hat{Z}_{n-1}$ when $S_n=+$ and $Z_n\geq Z_{n-1}$
when $S_n \neq +$. Moreover, $Z_n$ is increasing in $Z_{n-1}$ when $S_n$ is fixed. Consequently, if we fix $\hat{Z}_m$,
the largest value of $\hat{Z}_n$ will occur if $\{ S_{n_0+1}, S_{n_0+2}, \ldots\,S_{n} \}$ has the following realization
\begin{align*}
\{ \!\!\!\!\!\!\! \overbrace{\textbf{a},\textbf{a},\ldots,\textbf{a},}^{(1-\gamma)(n-n_0)/m \text{ times}}\!\!\!\!\!\!\!\!\underbrace{+,+,\ldots,+}_{\gamma(n-n_0) \text{ times}}\}.
\end{align*}
where $\textbf{a}=(-,\underbrace{\bigstar,\bigstar,\ldots,\bigstar}_{m-1 \text{ times}})$. In order to upper bound $\hat{Z_n}$ we assume that
the above realization has occured for $\{ S_{n_0+1}, S_{n_0+2}, \ldots\,S_{n} \}$. During consecutive runs of $+$, the value of $\log \hat{Z}_n$ increases with the same recursion as the code-length in \eqref{eqn:code_length_recursion} as $\log \hat{Z}_{n}= \log \hat{Z}_{n-1}+ \log \hat{Z}_{n-m}$. This recursion happens $\gamma(n-m)$ times and since the code-legth obeying the same recursion scales as $\phi^{\gamma(n-m)}$, $\phi \in (1,2]$, we have 
\begin{align}
\log \hat{Z}_n = \phi^{\gamma(n-n_0)}\log \hat{Z}_k, \label{eqn:inter_1}
\end{align}
where $k=n_0+(1-\gamma)(n-m)$. During consecutive runs of $\textbf{a}$ the value of $\hat{Z}_i$ does not change with respect to $\hat{Z}_{i-1}$ when $S_i = \bigstar$ and it increases as $\hat{Z}_i=\hat{Z}_{i-1}+\hat{Z}_{i-m}-\hat{Z}_{n-1}\hat{Z}_{i-m}$ when $S_i=-$.
By construction of $\{ S_{n_0+1}, S_{n_0+2}, \ldots\,S_{n} \}$ each state $-$ is preceed by $m-1$ occurances of $\bigstar$ therefore if $S_i=-$
we have $(S_{i-1},S_{i-2},\ldots,S_{i-(m-1)})=(\bigstar, \bigstar,\ldots,\bigstar)$ indicating $\hat{Z}_{i-1}=\hat{Z}_{i-2}=\ldots=\hat{Z}_{i-(m-1)}$.
Therefore during each occurance of state $-$ in $\textbf{a}$ we see the recursion $\hat{Z}_{i-1}+\hat{Z}_{i-m}-\hat{Z}_{i-1}\hat{Z}_{i-m}=2\hat{Z}_{i-1}-\hat{Z}_i^{(i)}$ or equvalently $1-\hat Z_{i}=(1-\hat{Z}_i^{(i)})^2$. This recursion occurs $(1-\gamma)(n-n_0)$ times resulting
in $1-\hat{Z}_k=(1-\hat Z_{n_0})^{2(1-\gamma)(n-n_0)}$ and $\hat{Z}_k=1-(1-\hat Z_{n_0})^{2(1-\gamma)(n-n_0)}$.  Next, employ the inequality $\log x \leq x-1$, $x \in [0,1]$, by letting $x =\hat{Z}_k$ to obtain
\begin{align}
\log  \hat{Z}_k  \leq -(1-\hat Z_{n_0})^{2(1-\gamma)(n-n_0)} . \label{eqn:inter_2}
\end{align}
Using \eqref{eqn:inter_2} in \eqref{eqn:inter_1} gives
\begin{align*}
\log  \hat{Z}_n &= - \phi^{\gamma(n-n_0)}(1-Z_{n_0})^{2(1-\gamma)(n-n_0)}, \\
& \leq  - \phi^{\gamma(n-n_0)}(1-Z_{n_0})^{2(n-n_0)} \\
&=-\phi^{(\gamma-\epsilon)(n-n_0)}\left( (1-Z_{n_0})^{2}\phi^{ \epsilon} \right)^{(n-n_0)}.
\end{align*}
Choose $\zeta \in (0,1)$ so that $\zeta \leq 1-\phi^{\frac{-\epsilon}{2}}$ holds. Conditioned on $C_{n_0}(\zeta)=\{ Z_{n_0} \leq \zeta \}$ we have
$(1-Z_{n_0})^{2}\phi^{ \epsilon} \geq 1$, resulting in
\begin{align*}
\log_2 \hat{Z}_n \leq -\phi^{(\gamma-\epsilon)(n-m)}, \quad C_{n_0}(\zeta) \cap D_{n_0}^n(\gamma),
\end{align*}
which proves the lemma.

%
%


%




\end{document}